\newtheorem{defn}{Definition}
\newtheorem{thm}{Theorem}[section]
\newtheorem{cor}[thm]{Corollary}
\newtheorem{prop}{Proposition}
\newtheorem{lem}[thm]{Lemma}
\newtheorem{conj}[thm]{Conjecture}
\newtheorem{constr}[thm]{Construction}
\newtheorem{note}{Remark}
\newcommand{\bit}{\begin{itemize}}
\newcommand{\eit}{\end{itemize}}
\newcommand{\bcor}{\begin{cor}}
\newcommand{\ecor}{\end{cor}}
\newcommand{\beq}{\begin{equation}}
\newcommand{\eeq}{\end{equation}}
\newcommand{\beqn}{\begin{equation*}}
\newcommand{\eeqn}{\end{equation*}}
\newcommand{\bea}{\begin{eqnarray}}
\newcommand{\eea}{\end{eqnarray}}
\newcommand{\bean}{\begin{eqnarray*}}
\newcommand{\eean}{\end{eqnarray*}}
\newcommand{\ben}{\begin{enumerate}}
\newcommand{\een}{\end{enumerate}}
\newcommand{\bdefn}{\begin{defn}}
\newcommand{\edefn}{\end{defn}}
\newcommand{\bnote}{\begin{note}}
\newcommand{\enote}{\end{note}}
\newcommand{\bprop}{\begin{prop}}
\newcommand{\eprop}{\end{prop}}
\newcommand{\blem}{\begin{lem}}
\newcommand{\elem}{\end{lem}}
\newcommand{\bthm}{\begin{thm}}
\newcommand{\ethm}{\end{thm}}
\newcommand{\bconj}{\begin{conj}}
\newcommand{\econj}{\end{conj}}
\newcommand{\bconstr}{\begin{constr}}
\newcommand{\econstr}{\end{constr}}
\newcommand{\bpf}{\begin{proof}}
\newcommand{\epf}{\end{proof}}
\begin{document}

\title{Outer Bounds on the Storage-Repair Bandwidth Tradeoff of Exact-Repair Regenerating Codes} 
 \author{Birenjith Sasidharan, N. Prakash, M. Nikhil Krishnan, Myna Vajha, Kaushik Senthoor, 
\ \\ 
 and P. Vijay Kumar
\ \\
\ \\
(email: biren@ece.iisc.ernet.in, prakashn@mit.edu, \{nikhilkm,myna,kaushik.sr,vijay\}@ece.iisc.ernet.in) 
\ \\
\ \\
\thanks{This research is supported in part by the National Science Foundation under Grant 1421848 and in part by an India-Israel UGC-ISF joint research program grant. Birenjith Sasidharan would like to thank the support of TCS Research Scholar Programme Fellowship awarded to him. N. Prakash was a PhD student at IISc, Bangalore, and also an intern at NetApp, Bangalore during the duration of this work. M. Nikhil Krishnan and Myna Vajha would like to thank the support of Visvesvaraya PhD Scheme for Electronics \& IT awarded by Department of Electronics and Information Technology, Government of India. A portion of the material in this paper was presented in part at the 2014 IEEE
International Symposium on Information Theory \cite{SasSenKum_isit}, and in part at the 2015 IEEE
International Symposium on Information Theory \cite{PraKri_isit}.}}
\date{\today}
\maketitle

\begin{abstract}
In this paper three outer bounds on the storage-repair bandwidth (S-RB) tradeoff of regenerating codes having parameter set $\{(n,k,d), (\alpha,\beta)\}$ under the exact-repair (ER) setting are presented. The tradeoff under the functional-repair (FR) setting was settled in the seminal work of Dimakis et al. that introduced the framework of regenerating codes as well as a subsequent paper by Wu. While it is known that the ER tradeoff coincides with the FR tradeoff at the extreme points of the tradeoff, known respectively as the minimum-storage-regenerating (MSR) and minimum-bandwidth-regenerating (MBR) points, its characterization on the interior points remains open. 

The first outer bound presented here termed as the {\em repair-matrix bound}, in conjunction with a recent code construction known as {\em improved layered codes} characterizes the normalized ER tradeoff for the case of $(n,k=3,d=n-1)$. The repair-matrix bound is derived by building on top of the techniques introduced by Shah et al. and applies to every parameter set $(n,k,d)$. It was earlier proved by Tian that the ER tradeoff lies strictly away from the FR tradeoff for the specific case $(n=4,k=3,d=3)$.  The repair-matrix bound shows that a non-vanishing gap exists between the ER and FR tradeoffs for {\em every} parameter set $(n,k,d)$.

The second outer bound builds upon a bound due to Mohajer and Tandon and improves the bound using the very same techniques introduced in the Mohajer-Tandon paper and for this reason, is termed here as the {\em improved Mohajer-Tandon} bound. While for $d=k$ the improved Mohajer-Tandon bound performs on par with the Mohajer-Tandon bound, for $d>k$ there is a significant improvement in the region of the tradeoff away from the MSR point.   In the vicinity of the MSR point however, the repair-matrix bound outperforms the improved Mohajer-Tandon bound.

In the third and final result, we restrict our focus to linear codes, and present an outer bound for the normalized ER tradeoff applicable to linear codes for the case $k=d$. In conjunction with the well-known class of {\em layered codes}, our third outer bound characterizes the normalized ER tradeoff in the case of linear codes for the case $k=d=n-1$. This bound is derived by analyzing the rank-structure of a parity-check matrix for a linear ER code.
\end{abstract}

\begin{keywords} Distributed storage; regenerating codes; exact-repair; storage-repair-bandwidth tradeoff; tradeoff characterization; outer bounds. 
\end{keywords}

\section{Introduction} \label{sec:intro} 

\subsection{Regenerating Codes} \label{sec:regenerating_codes} 


In the regenerating-code framework~\cite{DimGodWuWaiRam}, all symbols are drawn from a fixed finite field $\mathbb{F}$ whose size is the power of a prime. The size of the field does not play an important role in the present paper and for this reason does not appear in our notation for the field.   Data pertaining to a file comprised of $B$ symbols is encoded into a set of $n\alpha$ coded symbols and then stored across $n$ nodes in the network with each node storing $\alpha$ coded symbols.  A data collector should be able to retrieve the file downloading entire data from any $k$ nodes. Furthermore, $k$ is the minimum such number that allows reconstruction of the file. In the event of a node failure{\footnote{Though regenerating codes are defined for the case of single node-failures, there are later works that looked into the case of simultaneous failure of multiple nodes, and studied cooperative repair in such a situation\cite{ShuHu,KerScoStr}. However in this paper, we focus only on single node-failures.}}, node repair is accomplished by having the replacement node connect to any $d$ nodes and download $\beta \leq \alpha$ symbols from each node with $\alpha \leq d \beta < B$. These $d$ nodes are referred to as helper nodes. From the minimality of $k$, it can be shown that $d$ must lie in the range
\bean
k \leq d \leq n-1.
\eean
The quantity $d\beta$ is called as the repair bandwidth. Here one makes a distinction between functional and exact repair.  By functional repair (FR), it is meant that a failed node will be replaced by a new node such that the resulting network continues to satisfy the data-collection and node-repair properties defining a regenerating code.   An alternative to functional repair is {\em exact repair} (ER) under which one demands that the replacement node store precisely the same content as the failed node.  From a practical perspective, ER is preferred at least for two reasons. Firstly, the algorithms pertaining to data collection and node repair remain static for the ER case.  Secondly if the ER code is linear, then it permits the storage of data in systematic form, which facilitates operations under paradigms such as MapReduce~\cite{mapreduce}.  We will use ${\cal P}_{\text{f}}$ to denote the {\em full parameter set} ${\cal P}_{\text{f}} = \{(n,k,d), (\alpha,\beta)\}$ of a regenerating code and use ${\cal P}$ when we wish to refer to only the parameters $(n,k,d)$.

\begin{figure}[ht]
\begin{minipage}[b]{0.5\linewidth}
\centering
\includegraphics[height=1.5in]{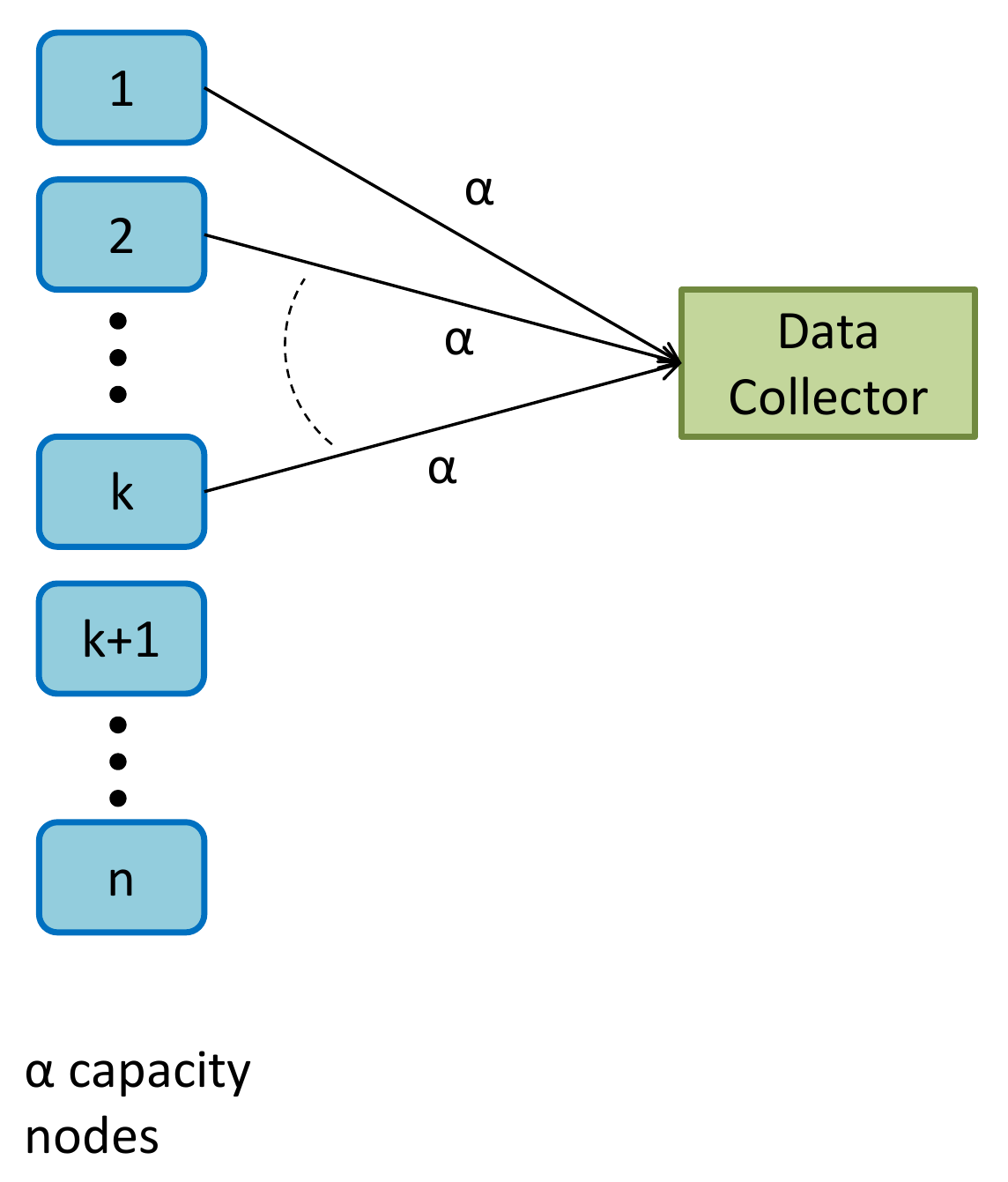}
\caption{Data collection.}
\label{fig:data_collection}
\end{minipage}
\hspace{-0.5cm}
\begin{minipage}[b]{0.50\linewidth}
\centering
\includegraphics[height=1.5in]{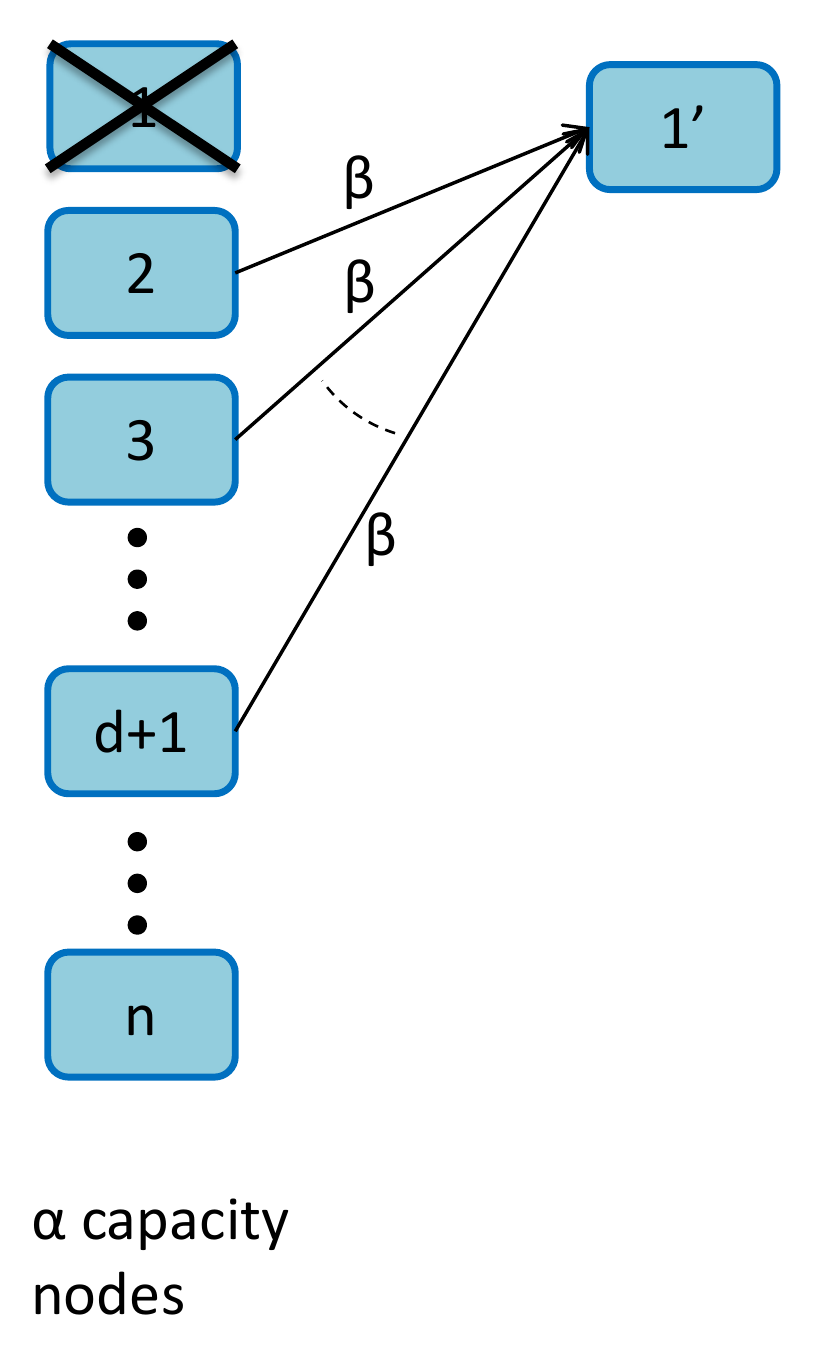}
\caption{Node repair.}
\label{fig:node_repair}
\end{minipage}
\hspace{-0.5cm}
\end{figure}

\subsection{The Storage-Repair Bandwidth Tradeoff} \label{sec:intro_tradeoff}

A cut-set bound based on network-coding concepts, tells us that given a code parameter set ${\cal P}_{\text{f}}$, the maximum possible size $B$ of a regenerating code is upper bounded as~\cite{DimGodWuWaiRam},
\bea \label{eq:cut_set_bd}
B & \leq & \sum_{\ell =0}^{k-1} \min\{\alpha,(d-\ell)\beta\} .
\eea
The derivation of the bound in \eqref{eq:cut_set_bd} makes use of only FR constraints, and therefore it is valid for both FR and ER codes. An FR code ${\cal \hat{C}}$ is said to be optimal if the file size $\hat{B}$ of ${\cal \hat{C}}$ achieves the cut-set bound in \eqref{eq:cut_set_bd} with equality, and further, that if either $\alpha$ or $\beta$ is reduced, equality fails to hold in \eqref{eq:cut_set_bd}. The existence of such codes has been shown in~\cite{DimGodWuWaiRam}, using network-coding arguments related to multicasting~\cite{Wu}. In general, we will use ${\cal \hat{C}}$, $\hat{B}$ etc to denote symbols relating to an optimal FR code while reserving ${\cal {C}}$, $B$ etc. to denote symbols relating to an ER code. 

Given ${\cal P}$ and $B$, there are multiple pairs $(\alpha,\beta)$ that satisfy \eqref{eq:cut_set_bd}. It is desirable to minimize both $\alpha$ as well as $\beta$ since minimizing $\alpha$ reduces storage requirements, while minimizing $\beta$ results in a storage solution that minimizes repair bandwidth.  It is not possible to minimize both $\alpha$ and $\beta$ simultaneously and thus there is a tradeoff between choices of the parameters $\alpha$ and $\beta$.  This tradeoff will be referred to as Storage-Repair Bandwidth (S-RB) tradeoff under functional repair.  Since much of the emphasis of the current paper is upon the distinction between the S-RB tradeoffs under functional and exact repair, we will use FR tradeoff and ER tradeoff to refer respectively, to the two tradeoffs. The two extreme points in the FR tradeoff are termed the {\em minimum storage regeneration} (MSR) and {\em minimum bandwidth regeneration} (MBR) points respectively. The parameters $\alpha$ and $\beta$ for the MSR point on the tradeoff can be obtained by first minimizing $\alpha$ and then minimizing $\beta$ to yield
\bea \label{eq:MSR} 
B = k \alpha, \ \ \alpha = (d-k+1)\beta .
\eea
Reversing the order leads to the MBR point which thus corresponds to
\bea \label{eq:MBR} 
B = \left( dk - {k \choose 2} \right)\beta,  \ \ \alpha = d \beta .
\eea
\begin{figure}[ht]
\centering
\includegraphics[width=3.5in]{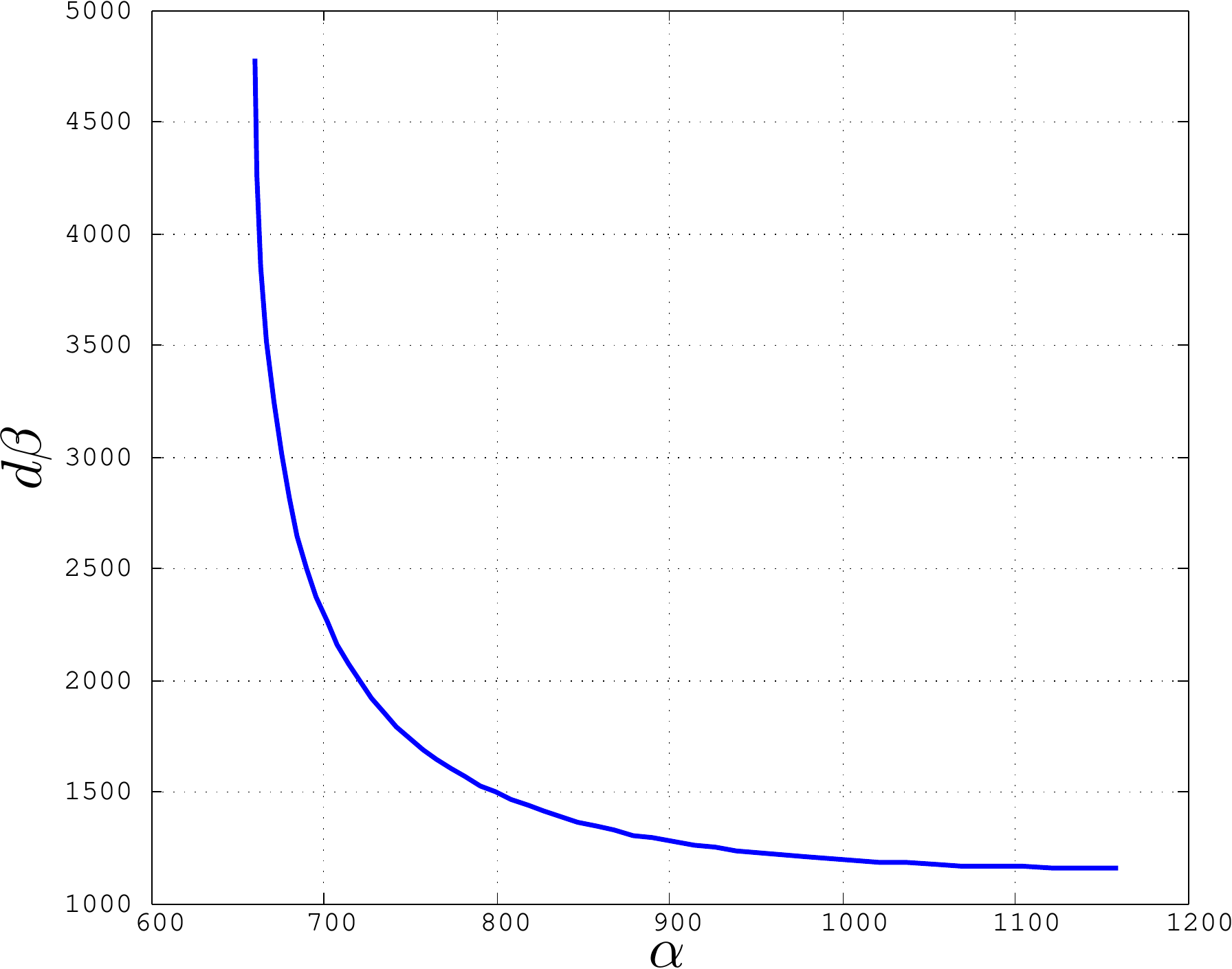}
\caption{FR Tradeoff. Here $(n=60, k=51, d=58, B=33660)$.}
\label{fig:tradeoff}
\end{figure}

The remaining points on the tradeoff will be referred to as {\em interior points}. As the tradeoff is piecewise-linear, there are $k$ points of slope discontinuity, corresponding to 
\bean
\alpha = (d-\mu)\beta, \ \ \mu \in \{0, \cdots k-1 \}.
\eean
Setting $\mu=k-1$ and $0$ respectively, yields the MSR and MBR points. The remaining values of $\mu \in \{1, \cdots k-2\}$ correspond to interior points with slope-discontinuity. Interior points where there is no slope discontinuity can be specified by setting,
\bea
\nonumber \alpha & = & (d-\mu)\beta - \theta , \ \theta \in [0, \beta) \\
\label{eq:op_point}& = & (d-\mu)\beta - \nu \beta , \ \nu \in [0 , 1),
\eea
with $\mu \in \{0,1,\ldots, k-2\}$. When $\mu = k-1$, we always set $\nu =0$. We will refer to the pair $(\alpha, \beta)$ as an {\em operating point} of the regenerating code. The tradeoff between $\alpha$ and $d\beta$ is plotted in Fig.~\ref{fig:tradeoff} for $(n=131,k=120,d=130)$ and file size $B=725360$. 

The results in the present paper pertain to the ER tradeoff.  Several ER code constructions \cite{RasShaKum_pm, CadJafMalRamSuh, PapDimCad, SuhRam, ShaRasKumRam_ia, ShaRasKumRam_rbt, TamWanBru} are now available that correspond to the MSR and the MBR points of the FR tradeoff. Thus the end points of the ER tradeoff coincide with those of the FR tradeoff. However, characterization of the interior points of the ER tradeoff remains an open problem in general. 

\subsection{The Normalized ER Tradeoff and ER-Code Symmetry \label{sec:norm_tradeoff}}

For a given parameter set ${\cal P} = (n,k,d)$, there are several known constructions for an ER code, each of which is valid only for a restricted set of file sizes. Since the ER tradeoff for a fixed $(n,k,d)$ varies with file size $B$, comparison across code constructions is difficult.   For this reason, we normalize $(\alpha, \beta)$ by the file size $B$. The tradeoff between $\bar{\alpha}=\frac{\alpha}{B}$ and $\bar{\beta}=\frac{\beta}{B}$ thus obtained for a fixed  value of $(n,k,d)$, will be referred to here as the {\em normalized ER tradeoff}. The tuple $(\bar{\alpha}, \bar{\beta})$ is referred to as the {\em normalized operating point} of a regenerating code. Throughout the remainder of this paper, we will work only with the normalized version of the ER tradeoff. 

Given a regenerating code ${\cal C}$ associated to parameter set ${\cal P}$ and file size $B$, the parameters of the code are clearly invariant to coordinate (i.e., node) permutation.   Given an ER code ${\cal C}$, we can vertically stack the $n!$ codewords obtained by encoding independent files using all possible node permutations of ${\cal C}$.  The resultant stack of $n!$ codewords may be regarded as a single new ER regenerating code ${\cal C'}$ where the parameters $(n,k,d)$ remain the same, but where the parameters $(\alpha,\beta)$ and $B$ are each scaled up multiplicatively, by a factor of $n!$.  It is clear that ${\cal C'}$ is symmetric in the sense that the amount of information contained in a subset $A \subset [n]$ of nodes depends only upon the size $|A|$ of $A$, and not upon the particular choice of nodes lying in $A$. This symmetry carries over even in the case of repair data transferred by a collection $D$ of $d=|D|$ nodes for the replacement of a fixed node. Such codes will be referred to as {\em symmetric} ER codes. Since the normalized values $(\bar{\alpha}, \bar{\beta})$ of ${\cal C'}$ remain the same as that of ${\cal C}$, there is no change in operating point on the normalized ER tradeoff in going from ${\cal C}$ to ${\cal C'}$. Thus, given our focus on the normalized tradeoff, it is sufficient to consider {\em symmetric} ER codes. This observation was first made by Tian in \cite{Tia}.

\subsection{Results \label{sec:results}}

Though the complete characterization of normalized ER tradeoff for every parameter set remains an open problem, much progress has been made. It was shown in \cite{ShaRasKumRam_rbt}, that apart from the MBR point and a small region adjacent to the MSR point, there do not exist ER codes whose $(\alpha, d\beta)$ values correspond to coordinates of an interior point on the FR tradeoff. However, the authors of \cite{ShaRasKumRam_rbt} did not rule out the possibility of approaching the FR tradeoff asymptotically i.e., as the file size $B \rightarrow \infty$. It was first shown by Tian in \cite{Tia} that the ER tradeoff lies strictly away from the FR tradeoff. This was accomplished by using an information theory inequality prover~\cite{ITIP} to characterize the normalized ER tradeoff for the particular case of $(n,k,d)=(4,3,3)$ and showing it to be distinct from the FR tradeoff.  The results in the \cite{Tia} were however, restricted to the particular case $(n,k,d)=(4,3,3)$.

That the ER tradeoff lies strictly above the FR tradeoff for {\em any} value of the parameter set $(n,k,d)$ was first shown in \cite{SasSenKum_isit}. The first result in the present paper is to show an outer bound on the normalized ER tradeoff for every parameter set $(n,k,d)$, and is stated in Thm.~\ref{thm:bound1}. We refer to this outer bound as the {\em repair-matrix bound}. This outer bound in conjunction with a code construction appearing in \cite{SenSasKum_itw}, characterizes the normalized ER tradeoff for the parameter set $(n,k,d)$ for $k=3$, $d=n-1$ and any $n \geq 4$.   

Two outer bounds on the normalized ER tradeoff appeared subsequently in \cite{Duursma2014} and \cite{Duursma2015}.  In \cite{Duursma2014}, the author presents two bounds on the ER file size. In the first bound, he builds on top of the techniques presented in \cite{Tia} and derives a bound that applies to a larger set of parameters. The second bound is obtained by taking a similar approach as in \cite{SasSenKum_isit}, and is shown to improve upon the one given in \cite{SasSenKum_isit}. In \cite{Duursma2015}, the author provides an upper bound on ER file size, that is non-explicit in general. However for the case of linear codes, the bound can be computed to obtain an explicit expression for any parameter set $(n,k,d)$. A second paper by Tian, \cite{Tia_544}, characterizes the ER tradeoff for $(n=5,k=4,d=4)$ with the help of a class of codes known as the {\em layered codes} introduced in \cite{TiaSasAggVaiKum}. A different approach adopted to derive an outer bound on the normalized ER tradeoff is presented in \cite{MohTan}. In \cite{MohTan}, Mohajer et al. derived an outer bound for general $(n,k,d)$ that turns out to be optimal for the special case of $(n,k=n-1,d=n-1)$ in a limited region of $\bar{\beta} \leq \frac{2\bar{\alpha}}{k}$ close to the MBR point.  Optimality follows from the fact that a code construction due to Goparaju et al. in \cite{GopRouCal_isit} meets their outer bound in the region $\bar{\beta} \leq \frac{2\bar{\alpha}}{k}$.  We will refer to this outer bound in \cite{MohTan} as the {\em Mohajer-Tandon bound}.


The second result of the present paper is an improvement upon the Mohajer-Tandon bound for the case $k < d$. We make use of the very same techniques introduced in \cite{MohTan} to arrive at this improved bound. This bound is stated in Thm.~\ref{thm:bound2}, and we refer to it as the {\em improved Mohajer-Tandon bound}. While the improved Mohajer-Tandon bound performs better whenever $k < d$, it coincides with the Mohajer-Tandon bound when $k=d$. The repair-matrix bound still performs better than the improved Mohajer-Tandon bound in a region close to the MSR point. The theorem below essentially combines the repair-matrix bound and the improved Mohajer-Tandon bound.


\bthm \label{thm:bound3} Let 
\bean
B_1 = \sum_{i=0}^{k-1} \min\{\alpha, (d-i)\beta) - \delta,
\eean where $\delta$ is as defined in \eqref{eq:eps}, and it corresponds to the repair-matrix bound. Let $B_2$ be the expression on the RHS in \eqref{eq:soh_improved}, corresponding to the improved Mohajer-Tandon bound. Then the ER file size $B$ is bounded by,
\bean
B & \leq & \min\{B_1, B_2\}.
\eean
\ethm

The final result presented in this paper is under the restricted setting of linear codes. For the case of $(n \geq 4, k=n-1, d=n-1)$, we characterize the normalized ER tradeoff under this setting. This is done by deriving an explicit upper bound on the file size $B$ of a ER linear regenerating code for the case $k=d=n-1, n \geq 4$. The outer bound remains valid for the general case $k=d$ even when $d< n-1$. For the case of $(n,k=n-1,d=n-1)$, the outer bound matches with the region achieved by the layered codes. This result, which first appeared in\cite{PraKri_isit}, is stated below:

\begin{thm} \label{thm:new_bound_k_eq_d}
Consider an exact repair linear regenerating code, having parameters $(n, k = n-1, d = n-1), (\alpha, \beta), n \geq 4$. Then, the file size $B$ of the code is upper bounded by 
\begin{eqnarray} \label{eq:bound_rank_G}
B & \leq & \left \{ \begin{array}{rl} \left \lfloor \frac{r(r-1)n\alpha + n(n-1)\beta}{r^2+r}\right \rfloor , & \frac{d\beta}{r} \leq \alpha \leq \frac{d\beta}{r-1},  
 \ \ \ 2 \leq r \leq n - 2 \\  
(n-2)\alpha + \beta, & \frac{d\beta}{n-1} \leq \alpha \leq \frac{d\beta}{n-2} \end{array} \right. .
\end{eqnarray}
\end{thm}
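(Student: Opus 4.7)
My plan is to translate the statement into a subspace-dimension inequality and then derive the bound by analyzing the rank structure of the parity-check matrix. First, I set up the linear-algebraic framework: writing $G = [G_1 \mid G_2 \mid \cdots \mid G_n]$ with $B\times\alpha$ blocks, let $W_i \subseteq \mathbb{F}^B$ denote the $\alpha$-dimensional column space of block $G_i$. Data collection for $k = n-1$ then asserts $\dim(\sum_{j \ne i} W_j) = B$ for each $i$. Linearity of exact repair at $d = n-1$ produces, for each ordered pair $(i,j)$ with $j \ne i$, a $\beta$-dimensional subspace $S_j^{(i)} \subseteq W_j$ with $W_i \subseteq \sum_{j \ne i} S_j^{(i)}$. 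Dually, the parity-check matrix $H$ of size $(n\alpha - B)\times n\alpha$ admits, for every node $i$, a distinguished block of $\alpha$ ``repair rows'' whose block-$i$ entry is $-I_\alpha$ and whose block-$j$ entry ($j \ne i$) is an $\alpha\times\alpha$ matrix of rank at most $\beta$. These rows encode the entire exact-repair structure.

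The core step is to establish, for $2 \le r \le n-2$ in the range $d\beta/r \le \alpha \le d\beta/(r-1)$, the inequality
\begin{equation*}
r(r+1)\,B \;\le\; r(r-1)\,n\alpha \;+\; n(n-1)\,\beta ,
\end{equation*}
after which the first case of the theorem follows by dividing through by $r(r+1)$ and rounding down. My plan is to derive this by a symmetric aggregation of per-subset dimension inequalities. For each subset $A\subseteq[n]$ of size $r+1$ and each $i\in A$, the repair containment gives $W_i \subseteq \sum_{j \in A\setminus\{i\}} S_j^{(i)} + \sum_{j \notin A} S_j^{(i)}$; since each ``internal'' $S_j^{(i)}$ with $j \in A$ lies inside $W_j$ and hence inside $\sum_{j \in A}W_j$, only the $n-r-1$ ``external'' repair subspaces contribute new dimensions when $W_i$ is added to $\sum_{j\in A\setminus\{i\}}W_j$. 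Summing these inequalities over all $(A,i)$ with the right multiplicities, and combining with the data-collection identity $\dim(\sum_{j\ne l}W_j)=B$ applied over the appropriate collection of $(n-1)$-subsets, produces the target after the $\binom{n-1}{r}$-weighted contributions from $W_i$ and $S_j^{(i)}$ are collected into the coefficients $r(r-1)n\alpha$ and $n(n-1)\beta$. The boundary range $\frac{d\beta}{n-1}\le\alpha\le\frac{d\beta}{n-2}$ admits a direct argument: set $U = W_1+\cdots+W_{n-2}$ of dimension at most $(n-2)\alpha$, and note that repair of node $n$ forces $W_n \subseteq U + S_{n-1}^{(n)}$; hence $B = \dim(U+W_n) \le (n-2)\alpha + \beta$, which coincides with the general formula at $r = n-1$.

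The principal technical obstacle lies in calibrating the aggregation so that the coefficients come out exactly as $r(r+1)$, $r(r-1)n$, and $n(n-1)$. A naive sequential addition of nodes---bounding each increment by $\min\{\alpha,(n-r-1)\beta\}$---recovers only the functional-repair cut-set bound and misses the ER improvement, so a genuinely symmetric argument is required. The threshold $r\alpha \ge d\beta$ is what allows the estimate to be tight: it guarantees that, within every $(r+1)$-subset, the $r$ internal subspaces $W_j$ furnish enough room for the internal $S_j^{(i)}$'s to be absorbed without slack, leaving external $S_j^{(i)}$'s as the only source of slack at the $\beta$ scale. Finally, the argument relies essentially on linearity through the \emph{subspace} containment $S_j^{(i)} \subseteq W_j$; for nonlinear codes this rigid containment is unavailable, and it is precisely the absence of this absorption mechanism that explains why looser bounds such as the Mohajer-Tandon bound of Thm.~\ref{thm:bound3} persist in the general (nonlinear) setting.
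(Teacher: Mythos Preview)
Your boundary-range argument for $\frac{d\beta}{n-1}\le\alpha\le\frac{d\beta}{n-2}$ is clean and correct: with $U=\sum_{j\le n-2}W_j$, the containment $W_n\subseteq U+S_{n-1}^{(n)}$ indeed yields $B\le (n-2)\alpha+\beta$.

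The main case, however, has a genuine gap. You correctly observe that for any $(r{+}1)$-set $A$ and $i\in A$,
\[
\dim\Bigl(\sum_{j\in A}W_j\Bigr)-\dim\Bigl(\sum_{j\in A\setminus\{i\}}W_j\Bigr)\;\le\;(n-r-1)\beta,
\]
and you correctly diagnose that summing these telescopically only reproduces the functional-repair bound. But you then simply assert that ``summing \ldots with the right multiplicities'' and ``a genuinely symmetric argument'' will produce the coefficients $r(r+1)$, $r(r-1)n$, $n(n-1)$. No such aggregation is exhibited, and in fact none exists at this level: every inequality you have written down is already tight for an optimal FR code, so no linear combination of them can separate ER from FR. Getting the ER improvement requires a \emph{new} inequality, not a reweighting of the old ones. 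The condition $r\alpha\ge d\beta$ that you invoke does not by itself manufacture such an inequality; it only identifies the regime in which the correct bound becomes active.

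The paper's argument supplies exactly the missing ingredient. Working dually with the parity-check matrix $H=H_{\text{repair}}$ (whose $(i,j)$ block has rank $\le\beta$ off-diagonal and equals $I_\alpha$ on-diagonal), it constructs a descending chain of matrices $H^{(n)},H^{(n-1)},\ldots,H^{(3)}$, where the thick columns of $H^{(t)}$ are bases for the pairwise intersections $\mathcal{S}(H^{(t+1)}_j)\cap\mathcal{S}(H^{(t+1)}|_{\{n-t,\ldots,j-1\}})$. The chain $\rho(H^{(n)})\ge\rho(H^{(n-1)})\ge\cdots$ is then exploited inductively: each step relates the diagonal and off-diagonal block ranks of $H^{(t)}$ to those of $H^{(t+1)}$ (Lemma~\ref{lem:nkk_intersections}), and the induction (Theorem~\ref{thm:nkk_rankviainduction}) yields
\[
\rho(H)\;\ge\;\frac{2rn\alpha-n(n-1)\beta}{r(r+1)},
\]
from which $B=n\alpha-\rho(H)$ gives the claimed bound. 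The iterated-intersection construction is the substantive idea; it captures higher-order overlap information among the repair subspaces that your first-order containments $S_j^{(i)}\subseteq W_j$ cannot see. Your proposal would need an analogue of this step to go through.
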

We remark that there are no known instances of non-linear codes that violate the above outer bound derived under the linear setting. In an independent work \cite{ElyMohTan}, the authors also derive the normalized linear ER tradeoff for the case $(n,k=n-1,d=n-1)$, but the tradeoff is expressed in an implicit manner as the solution to an optimization problem. 

\begin{figure}[h!]
\begin{center}
  \subfigure[For $k=3$, $d=n-1$, codes in \cite{SenSasKum_itw} achieves our repair-matrix bound. The example here is $(n=6,k=3,d=5)$. ]{\label{fig:plot1a}\includegraphics[width=2.8in]{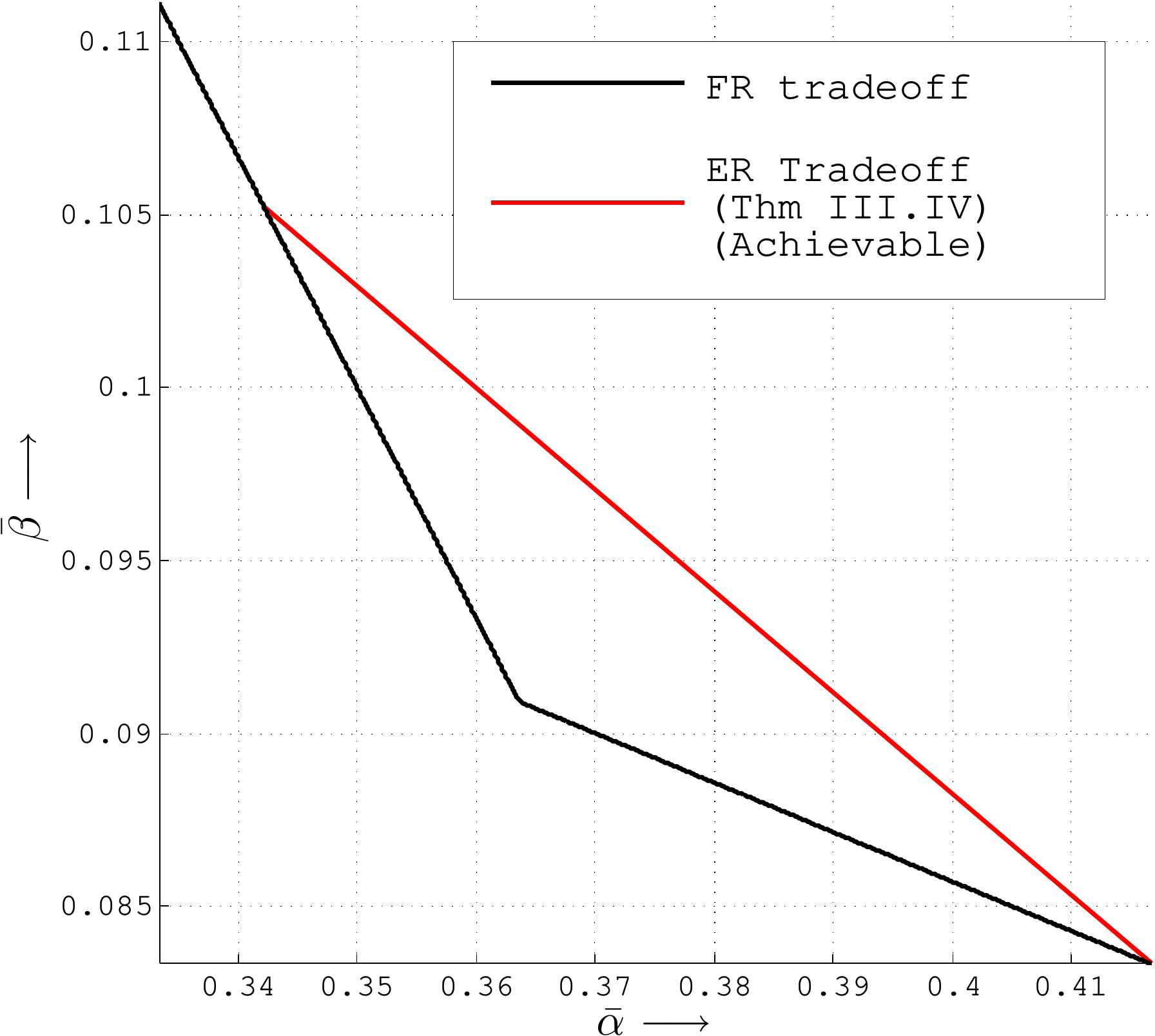}}
  \hspace{0.2in}
  \subfigure[For $k=d=n-1$, our outer bound matches the achievable region of layered codes, thus characterizing the tradeoff under linear setting. The example here is $(n=6,k=5,d=5)$.]{\label{fig:plot1b}\includegraphics[width=2.9in]{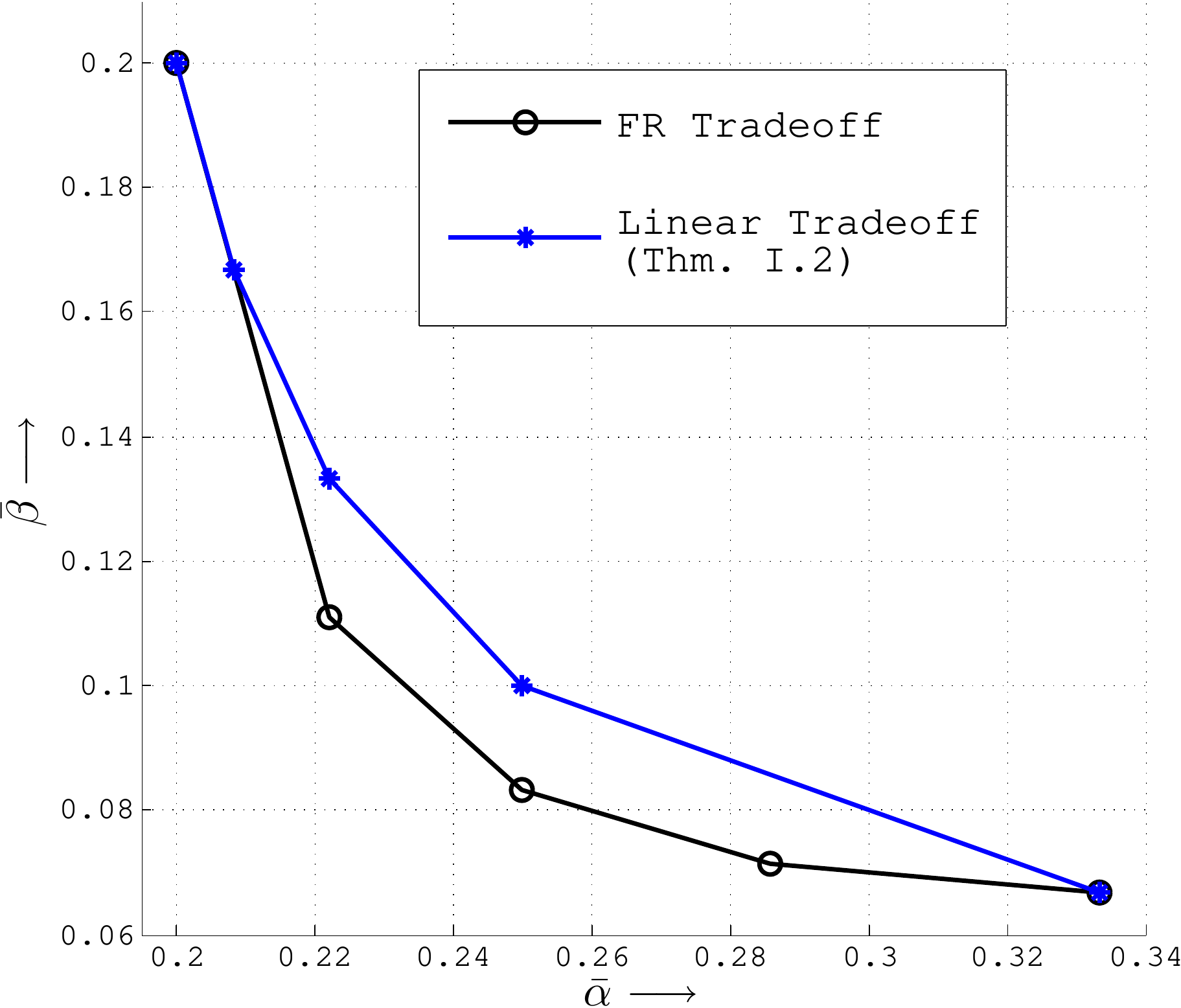}}
\caption{Characterization of normalized ER Tradeoff. \label{fig:plot1}}
\end{center}
\end{figure}

\begin{figure}[h!]
\begin{center}
  \subfigure[The example here is $(n=13,k=7,d=12)$. The combination of repair-matrix bound and improved Mohajer-Tandon bound performs better than other bounds given in the plot.]{\label{fig:plot2a}\includegraphics[width=2.9in]{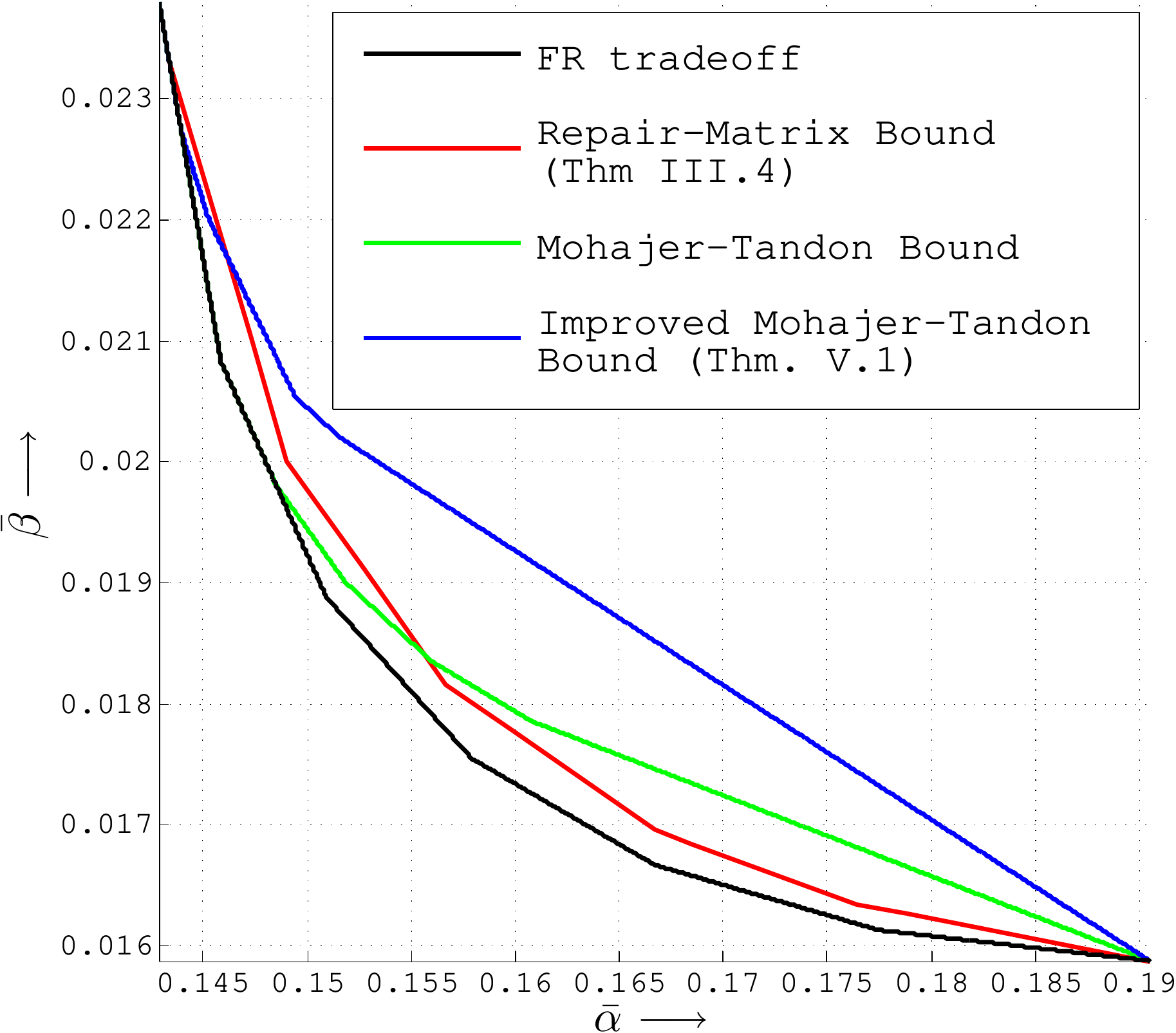}}
  \hspace{0.2in}
  \subfigure[Example here is $(n=6,k=d=5)$. When $k=d$, both Mohajer-Tandon and the improved Mohajer-Tandon bounds coincide.]{\label{fig:plot2b}\includegraphics[width=2.9in]{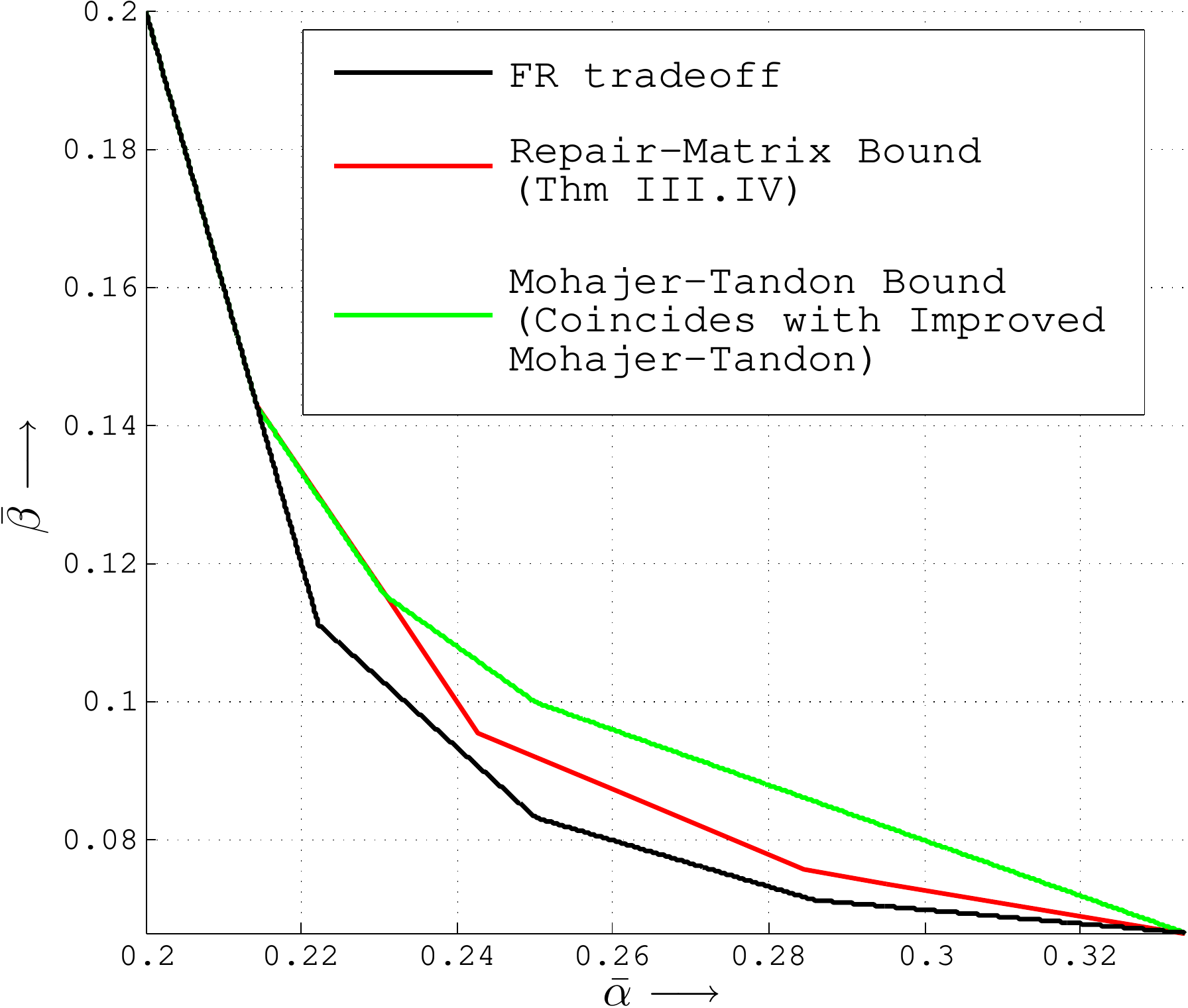}}
\caption{Performance comparison of various outer bounds.\label{fig:plot2}}
\end{center}
\end{figure}
In Fig.~\ref{fig:plot1}, we plot the cases in which our outer bounds characterize the normalized ER tradeoff. In Fig.~\ref{fig:plot2}, we do a performance comparison of various known bounds. 



\subsection{Our Approach\label{sec:approach}}

The present paper derives outer bounds on the normalized ER tradeoff of a regenerating code with full-parameter-set ${\cal P}_f = \{(n,k,d),(\alpha, \beta)\}$. Since every ER code is an FR code, it is clear that the normalized ER tradeoff lies on or above and to the right, of the normalized FR tradeoff in the $(\bar{\alpha}, \bar{\beta})$-plane. When we say that the normalized ER tradeoff {\em lies above} the normalized FR tradeoff, we imply that for given $(n,k,d)$ there is at least one value of normalized parameter $\bar{\beta}_0$ such that the corresponding normalized values $\bar{\alpha}_{\text{ER}}$ and $\bar{\alpha}_{\text{FR}}$ satisfy $\bar{\alpha}_{\text{ER}} > \bar{\alpha}_{\text{FR}}$. An equivalent definition in terms of the file size $B$ is given as follows. For given $(n,k,d)$,  let $\hat{B}_0:=\hat{B}_{\text{opt}}(\alpha_0, \beta_0)$ denote the optimal FR file size at an operating point $(\alpha_0,\beta_0)$ with $\alpha_0 = (d-\mu)\beta_0 - \nu \beta_0$ as in \eqref{eq:op_point}. Thus $\left(\frac{\alpha_0}{\hat{B}_0},\frac{\beta_0}{\hat{B}_0}\right)$ is a point lying on the normalized FR tradeoff. Suppose that the maximum file size of an ER code as a function of $(\alpha,  \beta)$ is
\bean
B(\alpha, \beta) & = & \hat{B}(\alpha,\beta) - \epsilon(\alpha, \beta)
\eean
for some non-negative function $\epsilon(\alpha, \beta)$. Let $\epsilon_0 = \epsilon(\alpha_0,\beta_0)$. Then the normalized operating points $(\bar{\alpha}_{\text{ER}},\bar{\beta}_{\text{ER}})$ for an optimal ER code as given by 
\bean
\bar{\beta}_{\text{ER}} \ = \ \frac{\beta_0}{B(\alpha_0, \beta_0)} & = & \frac{1}{\frac{\hat{B}_0}{\beta_0} - \frac{\epsilon_0}{\beta_0}} \\
\bar{\alpha}_{\text{ER}} \ = \ \frac{\alpha_0}{B(\alpha_0, \beta_0)} & = & \frac{1}{\frac{\hat{B}_0}{\alpha_0} - \frac{\epsilon_0}{\alpha_0}} \ = \
 \frac{1}{\frac{\hat{B}_0}{\alpha_0} - \frac{\epsilon_0}{\beta_0} \frac{1}{(d-\mu-\nu)}} \\
\eean
will be bounded away from $\left(\frac{\alpha_0}{\hat{B}_0},\frac{\beta_0}{\hat{B}_0}\right)$ if $\left(\frac{\epsilon_0}{\beta_0}\right)$ does not vanish to zero. It follows that an upper bound on the file size $B$ of an ER code
\bean
B & \leq B_{\text{upper}} (\alpha, \beta),
\eean
such that 
\bea
\label{eq:nonvanishing} \lim_{\beta \rightarrow \infty} \frac{\hat{B}(\alpha,\beta) - B_{\text{upper}} (\alpha, \beta)}{\beta} & > & 0
\eea
for some $(\mu,\nu)$ will equivalently define a bound on the normalized ER tradeoff that lie strictly above the normalized FR tradeoff. Throughout the paper, our approach therefore will be to derive upper bounds on ER file size that satisfy the criterion in \eqref{eq:nonvanishing}.


If the full parameter set of a regenerating code has $n>(d+1)$, then by restricting attention to a set of $(d+1)$ nodes, one obtains a regenerating code with $n=(d+1)$ with all other parameters remaining unchanged.  It follows from this that any upper bound on the size $B$ corresponding to full parameter set $\{(n=(d+1),k,d), (\alpha,\beta)\}$ continues to holds for the case $n>(d+1)$ with the remaining parameters left unchanged.  Keeping this in mind, we will assume throughout that $n=(d+1)$. 

A key technique used in the paper is to lower bound the difference $\epsilon = \hat{B}_{\text{opt}}(\alpha, \beta) - B(\alpha,\beta)$  between the file size of an optimal FR code and an ER code. The total information content in a regenerating code can be accumulated from a set $\{1,2,\ldots, k\}$ of $k$ nodes. The conditional entropy of the $(i+1)$-th node data conditioned on the data accumulated from previous $i$, $0 \leq i \leq k-1$ nodes is compared against the corresponding value of an optimal FR code, and the difference is defined to be $\omega_i$. It follows that $\epsilon$ is the sum of all $\{\omega_i\}_{i=0}^{k-1}$. Our approach is to relate  $\{\omega_i\}_{i=0}^{k-1}$ in terms of entropy of certain collections of repair data, and eventually find an estimate on $\epsilon$. Along the way, we construct a {\em repair matrix} as an arrangement of random variables corresponding to repair data in a $((d+1) \times (d+1))$-sized matrix. Many properties pertaining to the inherent symmetry of regenerating code become clear from the repair-matrix perspective, and we use it as a tool in our proofs. 

A different approach is used in deriving an upper bound on the ER file size of a linear regenerating code. Here we focus on a parity-check matrix $H$ of a linear ER code, and construct an augmented parity-check matrix $H_{\text{repair}}$ of size $(n\alpha \times n\alpha)$ that captures the exact-repair properties. A block-matrix structure is associated to $H_{\text{repair}}$, and thereby we identify $n$ thick columns $\{H_1, H_2, \ldots H_n\}$ of $H_{\text{repair}}$ with $H_i$ associated to the node $i$. Here we mean by a thick column a collection of $\alpha$ columns. Let us denote by $\delta_i$ the incremental rank added by $H_i$ to the collection of $(i-1)\alpha$ vectors in $\{H_j \mid 1 \leq j < i\}$. We estimate lower bounds on $\{\delta_i\}_{i=1}^{n}$ that will eventually lead to a lower bound on the rank of $H$. It is clear that the file size $B$ is the dimension of the code, and therefore a lower bound on the rank of $H$ results in an upper bound on the file size.

\subsection{Organization of the Paper}

In Sec.~\ref{sec:non-exist}, we describe the result of Shah et al. showing the non-existence of ER codes operating on the FR tradeoff. In Sec.~\ref{sec:bound1}, we present an upper bound on the ER file size. In Sec.~\ref{sec:oth_bounds}, we review the various upper bounds on ER file size that are known in the literature. In Sec.~\ref{sec:bound2}, we develop on the existing Mohajer-Tandon bound, and make an improvement upon that to get a better bound when $d > k$. In Sections~\ref{sec:linear_app},\ref{sec:544},\ref{sec:main_proof}, we focus on upper bounds on file size under linear setting. We characterize the normalized ER tradeoff for the case $(n,k=n-1,d=n-1)$ in Sec.~\ref{sec:main_proof}, while the proof techniques are illustrated for a particular case of $(n=5,k=4,d=4)$ in Sec.~\ref{sec:544}. In Sec.~\ref{sec:achievability}, we discuss the achievability of the outer bounds on normalized ER tradeoff derived at earlier sections. 

\section{The Non-existence of ER Codes Achieving FR tradeoff\label{sec:non-exist}}

As mentioned in Sec.~\ref{sec:results}, it was shown in \cite{ShaRasKumRam_rbt} that apart from the MBR point and a small region adjacent to the MSR point, there do not exist ER codes whose $(\alpha, d\beta)$ values correspond to coordinates of an interior point on the FR tradeoff. The theorem in \cite{ShaRasKumRam_rbt} due to Shah et al. is stated below.
\bthm \label{thm:shah_non_exist}(Theorem 7 in \cite{ShaRasKumRam_rbt}) For any given values of $(n,k \geq 3,d)$, ER codes do not exist for the parameters $(\alpha, \beta, B)$ lying at an interior point on the FR tradeoff except possibly for the case
\bea \label{eq:exception}
(d-k+1)\beta & \leq \ \alpha \ \leq & \left[(d-k+2) - \frac{d-k+1}{d-k+2}\right] \beta.
\eea
\ethm
The region 
\bean
\{ (\alpha, \beta) \mid (d-k+1)\beta \ \leq \ \alpha \ \leq \ \left[(d-k+2) - \frac{d-k+1}{d-k+2}\right] \beta\}
\eean on which the theorem does not claim the non-existence of ER codes is referred to as the {\em near-MSR region}.The Theorem~\ref{thm:shah_non_exist} however did not rule out the possibility of approaching the FR tradeoff asymptotically i.e., as the file size $B \rightarrow \infty$. As mentioned earlier, this question was answered by Tian in the negative in \cite{Tia} for the specific case when $(n,k,d)=(4,3,3)$. 

In this section, we will describe the approach taken by Shah et al. in proving Theorem~\ref{thm:shah_non_exist} in terms of the notation to be used in the present paper. We begin with some notation and definitions. Let $\mathcal{C}$ be an ER regenerating code over $\mathbb{F}$ having file size $B$ and full-parameter set ${\cal P}_f = \{(n,k,d), (\alpha, \beta)\}$.  We regard the message symbols as a collection of $B$ random variables taking on values in $\mathbb{F}$ and use $M$ to denote the $(1 \times B)$ random vector whose components are the $B$ message symbols.  We use $p_M(\cdot)$ to denote the joint probability distribution of the $M$ random variables.  All other random variables pertaining to the regenerating code are functions of the components of $M$, and satisfy probability distributions that are induced by $p_M$. 

We will use $[i], 1 \leq i \leq n$ to denote the set $\{1, 2, \ldots, i \}$ and define $[0]$ to be the empty set $\phi$. For $1 \leq i \leq j \leq n$, we use $[i \ j]$ to denote the set $\{i, i+1, \ldots, j \}$. Whenever we write $[i \ j]$ with $i > j$, it will be assumed to be the empty set.   On occasion, we will run into a set of random variables of the form $W_A$ where $A$ is the empty set,  $W_A$ should again be interpreted as the empty set. 
 
\subsection{The Repair Matrix and the Constraints Imposed By Exact-Repair\label{sec:rep_mat}}

As made clear in Sec.~\ref{sec:approach}, we assume that $n=d+1$ without loss of generality. Let $W_x, 1 \leq x \leq n$ denote the random variable corresponding to the contents of a node $x$.   Given a subset $A \subseteq [n]$, we use \bean
W_A & = & \{W_x \mid x \in A \} 
\eean
to denote the contents of nodes indexed by $A$.  Clearly, 
\bea
H(W_x) & \leq & \alpha. \label{eq:capacity_alpha}
\eea
Let $S_x^y$, $x,y \in [n], x \neq y$ denote the random variables corresponding to the helper data sent by the helper node $x$ to the replacement of a failed node $y$. This is well defined because under the assumption $n=(d+1)$, there is just one set of $d$ helper nodes for any failed node. Given a pair of subsets $X,Y \subseteq [n]$, we define $S_X^Y \ = \ \left\{ S_x^y \mid x \in X, y \in Y, x \neq y \right\}$. We use the short-hand notation $S_X$ to indicate $S_X^X$. From the definition of regenerating codes, it follows that
\bea
H(S_x^y) \ \leq \ \beta. \label{eq:capacity_beta}
\eea
In (\ref{eq:capacity_alpha}, \ref{eq:capacity_beta}), information is measured in units of $\log_2(|\mathbb{F}|)$ bits.  The collection of random variables $\{S_x^y \mid x \in [d+1], y \in [d+1], x \neq y \}$ can schematically be represented using a $(d+1) \times (d+1)$ matrix ${\cal S}$  with empty cells along the diagonal as shown in Fig.~\ref{fig:repairmatrix}.  The rows in this matrix correspond to the helper nodes and the columns to nodes undergoing repair. The $(x,y)$th entry of this matrix, thus corresponds to $S_x^y$.  We will refer to ${\cal S}$ as the {\em repair matrix}. The subset of ${\cal R}$ appearing below the diagonal and above the diagonal are denoted by ${\cal R}_L$ and ${\cal R}_U$ respectively.

\begin{figure}[h!]
\begin{center}
  \subfigure[Illustration of the repair matrix.]{\label{fig:repairmatrix}\includegraphics[width=2.65in]{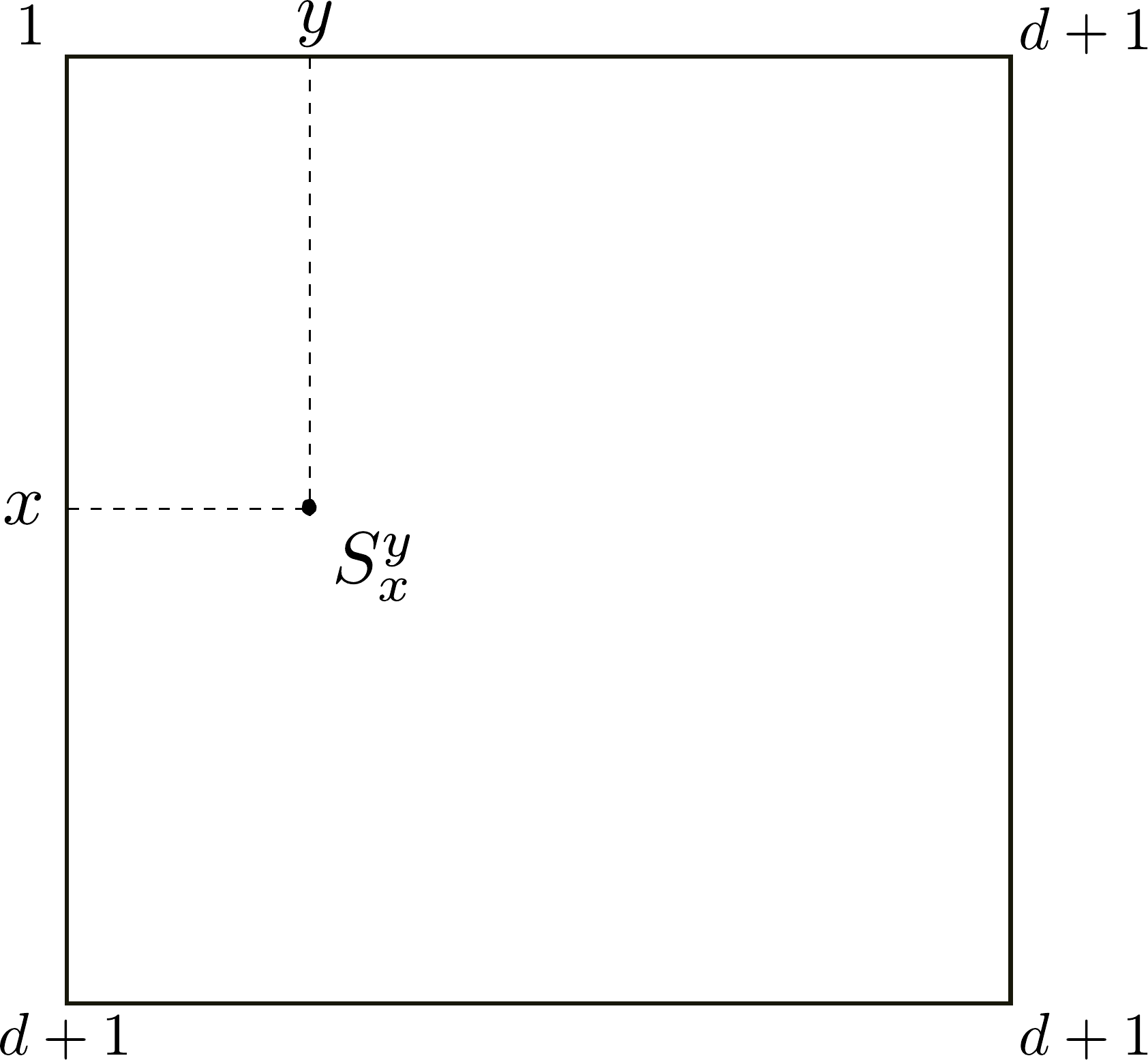}}
  \hspace{0.2in}
  \subfigure[The trapezoidal configuration]{\label{fig:trapezium}\includegraphics[width=2.9in]{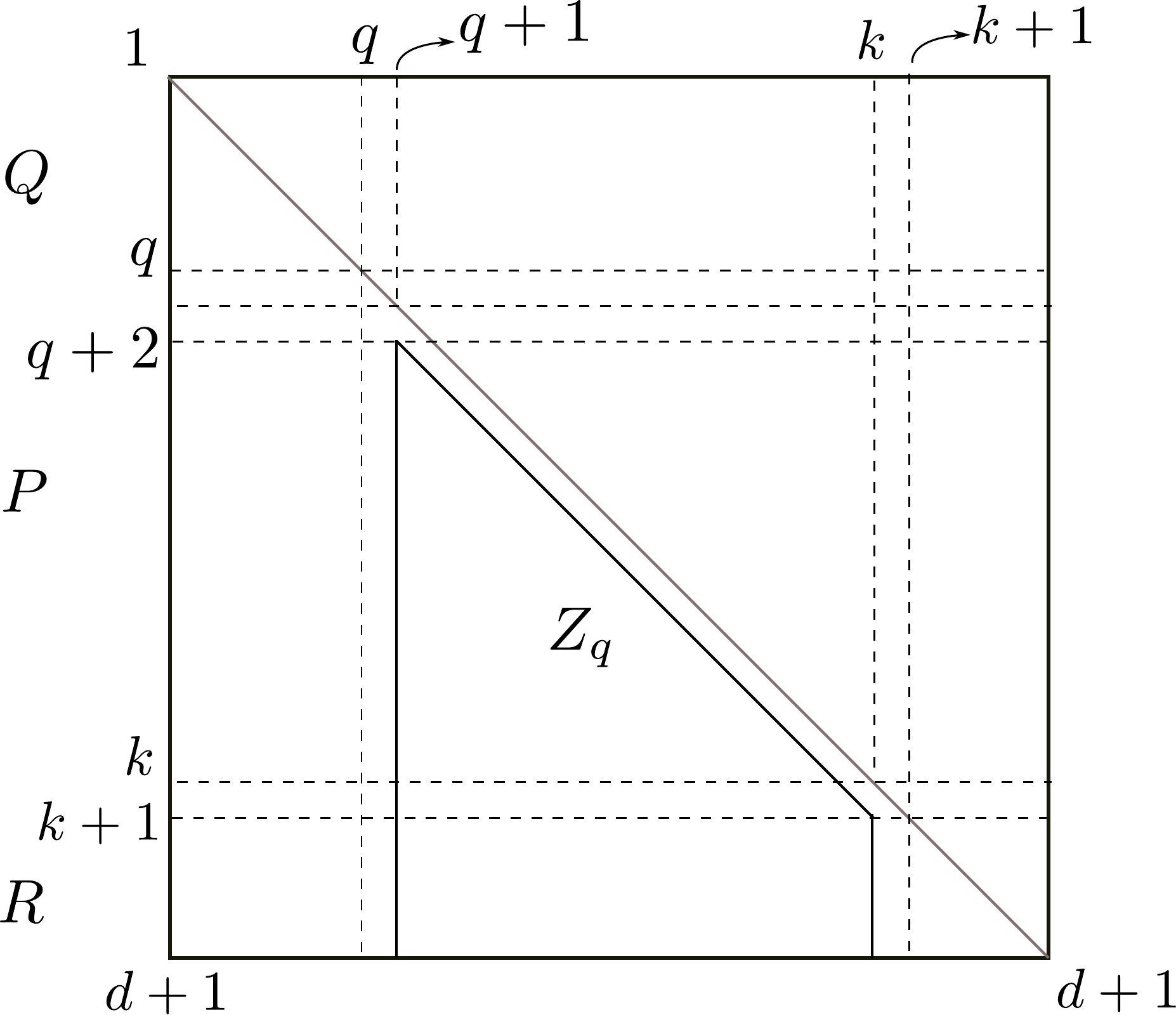}}
\caption{The repair matrix and the trapezoidal configuration}
\end{center}
\label{fig:repmat_trapez}
\end{figure}

Apart from the constraints given in \eqref{eq:capacity_alpha}, \eqref{eq:capacity_beta}, the requirements of data reconstruction and exact-repair impose  further constraints. The constraint due to data reconstruction is given by either of the following two equivalent statements:
\bea
H(W_A) & = & B, \ |A| \geq k \label{eq:data_collection_1}, \\
H(M \mid W_A) & = & 0, \ |A| \geq k \label{eq:data_collection_2}.
\eea
For every $i \in [n]$, the exact-repair condition imposes the constraint
\bea
H(W_i \mid S_{\cal D}^i) & = & 0, \ |{\cal D}| = d, \ i \notin {\cal D} \label{eq:exact_repair}.
\eea

\subsection{Trapezoidal Configurations in the Repair Matrix \label{sec:trapezium}}

Throughout the discussion taking place in Sections up to \ref{sec:bounds_file_size}, we will assume that there is a fixed numbering of the $n=(d+1)$ nodes in the network. In \eqref{eq:data_collection_1}, the file size $B$ is expressed as the joint entropy of a collection $k$ random variables $\{W_1, W_2, \ldots, W_k\}$. It is possible to express $B$ as the joint entropy of other subsets of random variables, in particular those involved in node repair. An example, important for the discussion to follow, appears below. Let $q$ be an integer lying in the range $0 \leq q \leq k$ and set 
\bean
Q & = & \{1,2,\cdots,q\} \\
P& = & \{q+1, q+2, \cdots, k\} \\
R & = & \{k+1, k+2, \cdots (d+1)\} .
\eean
Note that $Q,P,R$ are all functions of the integer $q$.  When $q=0$, we will set $Q$ to be the empty set $\phi$.   Note that $P = [k] \setminus Q$ and $R = [k+1 \ d+1]$. We define:
\bea
Z_q & = & {\cal R}_L \cap S_{[d+1]}^P \\
X_q  & = & {\cal R}_L \cap S_P.
\eea
Then we can write $B$ as:
\bea
 \nonumber B & = & H(W_Q, W_P) \\
 \label{eq:trapezoid1a} \nonumber & = & H(W_Q, W_P, Z_q)\\
\nonumber & = & H(W_Q, Z_q) + H(W_P \mid W_Q,Z_q) \\
\nonumber & = & H(W_Q, Z_q) 
\eea
where \eqref{eq:trapezoid1a} follows from the exact-repair condition \eqref{eq:exact_repair}. The collection $Z_q$ of random variables forms a trapezoidal region within the repair matrix as shown in Fig.\ref{fig:trapezium}. We refer to $(W_Q, Z_q)$, $q \in \{0, 1, \ldots, k\}$  as a {\em trapezoidal configuration}. The set $Z_q$ is said to be the {\em trapezoid} corresponding to the trapezoidal configuration $(W_Q, Z_q)$. It is clear that $Z_q = X_q \uplus S_R^P$. Next we proceed to define a {\em sub-trapezoid} of the trapezoid $Z_q$. Let $T =\{q+1, q+3, \ldots, q+t\} \subseteq P$ be a subset of size $0 \leq t \leq k-q$ of $P$. Then we define the subset $Z_{q,t}$ of $Z_q$ as:
\bean
Z_{q,t} & := & {\cal R}_L \cap S_{[d+1]}^T.
\eean 
The set $Z_{q,t}$ also forms a trapezoidal region in ${\cal R}$ and is called a sub-trapezoid of the trapezoid $Z_q$. Here again, we define $X_{q,t}$ as:
\bean
X_{q,t} & := & S_T \cap Z_{q,t},
\eean
and it follows that $Z_{q,t} = X_{q,t} \uplus S_{R\cup (P\setminus T)}^T$. A sub-trapezoid is illustrated in Fig.~\ref{fig:sub-trapezoid}. 
\begin{figure}[ht]
\centering
\includegraphics[width=2.8in]{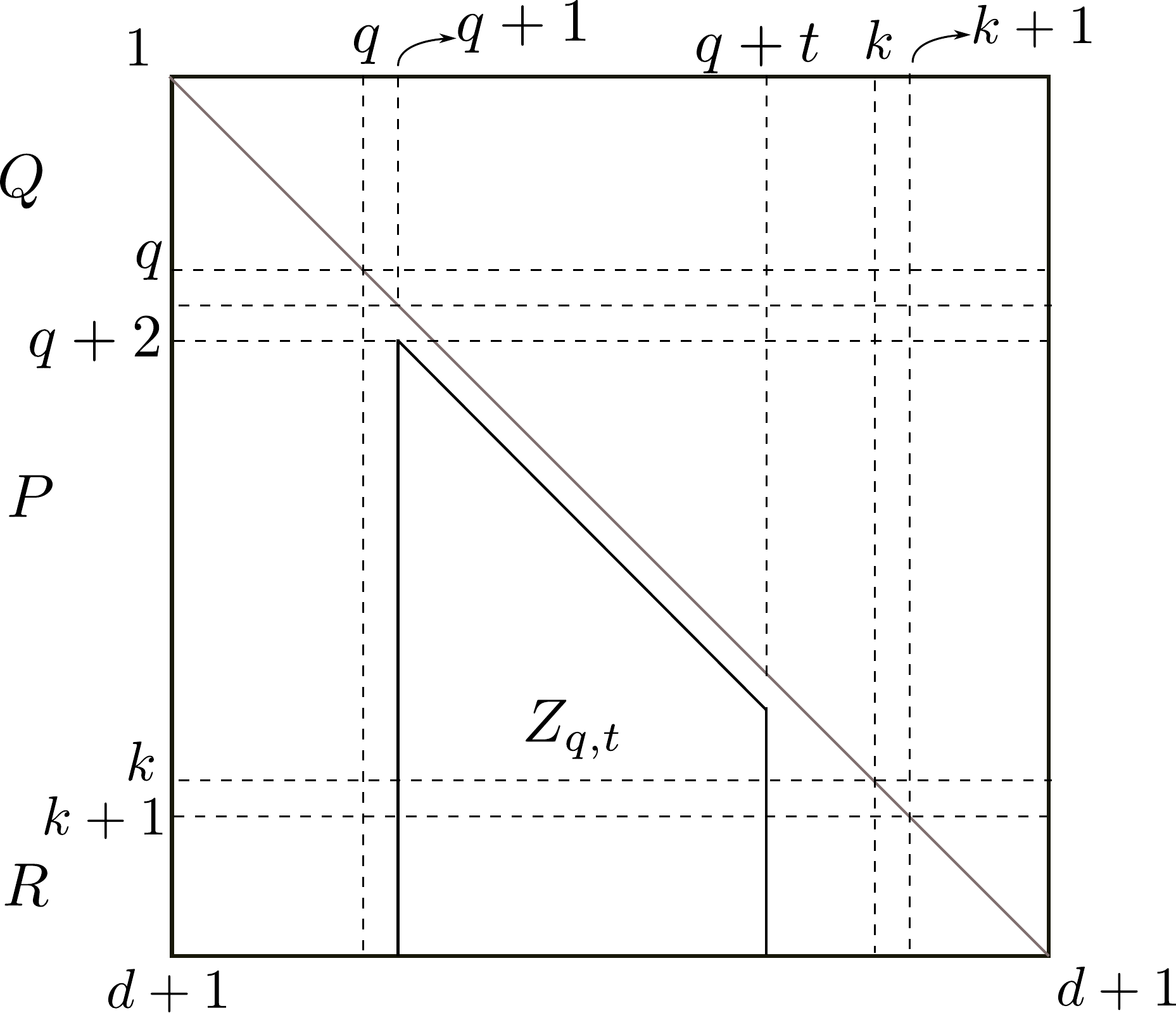}
\caption{Illustration of the sub-trapezoid $Z_{q,t}$.}
\label{fig:sub-trapezoid}
\end{figure}

For every trapezoidal configuration $(W_Q,Z_q)$ indexed by $q = 0, 1, \ldots, k$, we have the identity
\bea \label{eq:trapezoid2}
B & = & H(W_Q, Z_q),
\eea
and the corresponding inequality obtained by repeatedly applying the union bound $H(X_1,X_2) \leq H(X_1)+H(X_2)$, i.e., 
\bea \nonumber 
B & \leq & H(W_Q) + H(Z_q) \\ 
\label{eq:trapezoid_bound} & \leq  & H(W_Q) + H(X_q) + H(S_R^P) \\
\label{eq:Bq_bound}  & \leq & q\alpha + {k-q \choose 2}\beta + (d+1-k)(k-q)\beta .
\eea
We define for $q \in \{0,1,2\cdots, k\}$, the quantities: 
\bean
B_q & := & q\alpha + {k-q \choose 2}\beta + (d+1-k)(k-q)\beta .
\eean

\subsection{The Argument For Non-existence \label{sec:hnp}}

Let us consider an ER code operating at the point $(\alpha, \beta)$ satisfying $\alpha = (d-\mu)\beta$. For this value of $\alpha$, as shown below, the FR bound gives us $B_{\mu+1}$ as the upper bound on file size: 
\bean
B & \leq & \sum_{i=0}^{k-1}\min \{\alpha, (d-i)\beta\} \\
& = & (\mu+1)\alpha \ + \ \sum_{i=\mu+1}^{k-1}(d-i)\beta \\ 
& = & (\mu+1)\alpha \ + \ \sum_{j=0}^{k-\mu-2}(d-k+1+j)\beta \\ 
& = & (\mu+1)\alpha \ + \ (d-k+1)(k-\mu-1)\beta \ + \ {k-\mu-1 \choose 2} \beta \\
& = & B_{\mu+1} .
\eean
Thus if an ER code is optimal with respect to the FR tradeoff at the point $\alpha=(d-\mu)\beta$, from equations \eqref{eq:trapezoid2} and \eqref{eq:trapezoid_bound}, with $q=(\mu+1)$, one obtains that such a code must satisfy: 
\bea \label{eq:union_bd_cond}
H(Z_{\mu + 1} \mid W_{[\mu]}) & = & H(Z_{\mu + 1}) \ = \ {k-\mu - 1 \choose 2}\beta + (d+1-k)(k-\mu-1)\beta, 
\eea
i.e., the union bound on $Z_{\mu + 1}$ must hold with equality. That means that all the random variables in $Z_{\mu +1 }$ are mutually independent.   However, it is shown by Shah et al. in \cite{ShaRasKumRam_rbt} that this is not possible if an ER code lies at an interior point except for the near-MSR region and the MBR point.  To prove this result, the authors of \cite{ShaRasKumRam_rbt} focus on a subset $S_m^L$ of the repair matrix where $m \in [n]$ and $L \subseteq [n]$ are arbitrarily chosen from $[n]$ while satisfying the conditions $|L| := \ell < k$ and $m \notin L$. The subset $S_m^L$ is of course, the union of helper data sent by a single node $m$ to the nodes in $L$.    We can write
\bea
\nonumber H(S_m^L) & = & H(S_m^L \mid W_L) + I(S_m^L : W_L) \\
\label{eq:ia} & \leq & H(S_m^L \mid W_L) + I(W_m : W_L).
\eea
It can be shown that (see \cite{ShaRasKumRam_rbt})
\bea \label{eq:ia_cancel}
H(S_m^L \mid W_L) = 0 , \ \ell \geq \mu + 1, 
\eea
and that 
\bea \label{eq:rel_beta}
I(W_m : W_L) = \beta , \ \ell = \mu + 1.
\eea
As a consequence, we have that 
%
%
%
\bea \label{eq:row_beta}
H(S_m^L ) = \beta, \ \ell = \mu + 1.
\eea
It follows that  
\bean
H(S_m^J ) \leq \beta, \text{ for any $J \subseteq [n]$ with $\mid J \mid < \mu + 1$}. 
\eean
In particular this is true of $J$ is of size $|J|=2$.  On the other hand, optimality with respect to the FR bound assumes that each row in the trapezoidal region $Z_q$ has joint entropy equal to the number of repair random variables $S_x^y \in Z_q$ belonging to the row, times $\beta$.  The bottom row of the trapezoid has $(k-\mu-1)$ entries and thus we clearly have a contradiction whenever $(k-\mu-1)\geq 2$. The argument does not go through when $(k-\mu-1)\leq 1$, i.e., when $\mu \geq k-2$. This necessary condition on $\mu$ underlies the fact that the non-existence of ER codes do no hold good in the near-MSR region. The proof given here is for the case when $\alpha=(d-\mu)\beta$ is a multiple of $\beta$.  This proof can be extended to the general case $\alpha=(d-\mu)\beta - \theta$, for $0 < \theta < \beta$ as well.  In the next section, we will exploit this contradiction to derive an upper bound on the file size of an ER code. 

\section{An Upper Bound on the ER File Size \label{sec:bounds_file_size}}

In this section, we show that for {\em any} value of the parameter set $(n,k,d)$, the ER tradeoff lies strictly above the FR tradeoff, a result that was first established in \cite{SasSenKum_isit}. As explained in Sec.~\ref{sec:approach}, we do this by deriving a tighter bound on file size $B$ in the case of ER than is true under FR. 

As mentioned in Sec.~\ref{sec:hnp}, our approach to bounding the file size $B$ is based on deriving estimates for the joint entropy of subsets of the repair matrix. First, we assume the existence of an ER code having parameters $(n,k,d),(\alpha,\beta)$ whose file size $B$ is of the form $B=\hat{B}-\epsilon$ for some $\epsilon \geq 0$, where $\hat{B}$ is the file size of an optimal FR code having the same parameter set ${\cal P}$. Next, we proceed to estimate the joint entropy of the subset $Z_q$ corresponding to a trapezoidal configuration $(W_Q, Z_q)$. We estimate the joint entropy in two different ways and show that the two estimates are in contradiction unless the value of $\epsilon$ lies above a threshold value $\epsilon_{\min}$.  This allows us to replace $B-\epsilon_{\min}$ as the revised bound on the file size under ER.   We will also show that $\epsilon_{\min}$ does not vanish as $\beta \rightarrow \infty$.  

\subsection{Preliminaries \label{sec:prelim}}

Consider an optimal FR code ${\cal \hat{C}}$ possessing the same set of parameters ${\cal P}$ as the ER code ${\cal C}$. In what follows, given any deterministic or random entity associated with $\mathcal{C}$, we will use a hat to denote the corresponding entity in ${\cal \hat{C}}$. For example, $\hat{B}$ denotes the file size of ${\cal \hat{C}}$. With this, we can write
\bean
\sum_{i=0}^{k-1} \min\{\alpha,(d-i)\beta\} & = & \hat{B} \ = \ H(\hat{W}_{[k]}) \nonumber \\
& = & \sum_{i=0}^{k-1} H(\hat{W}_{i+1} \mid \hat{W}_{[i]} ) \\
& \leq & \sum_{i=0}^{k-1} \min\{\alpha,(d-i)\beta\} .
\eean
It follows that in an optimal FR code ${\cal \hat{C}}$, we must have
\bean
H(\hat{W}_{i+1} \mid \hat{W}_{[i]} ) & = & \min\{\alpha,(d-i)\beta\}, \ 0 \leq i \leq (k-1) .
\eean
Next, for $0 \leq i \leq k-1$, let us set:  
\bean
\gamma_i & = & \min \{ \alpha, (d-i)\beta\} , \\
\omega_i  & = & \gamma_i- H(W_{i+1} \mid W_{[i]}),
\eean
where $\omega_i$ measures the drop in the conditional entropy $H(W_{i+1} \mid W_{[i]})$ of an ER code in comparison with its value $H(\hat{W}_{i+1} \mid \hat{W}_{[i]})$ in the case of an optimal FR code. A plot of $\gamma_i$ as a function of $i$ for a given operating point $(\alpha, \beta)$ with $\alpha = (d-\mu)\beta -\theta$, appears in Fig.~\ref{fig:gamma}.   We also note the following identities:
\bea 
\label{eq:basic1} \epsilon & = & \sum_{i=0}^{k-1} \omega_i , \\
\label{eq:basic2} H(W_B \mid W_A)  & = &  \sum_{i =a}^{a+b-1} (\gamma_i - \omega_i),
\eea
where $A=[a]$ and $B=[a+1 \ a+b]$ and $0 \leq a \leq a+b \leq k$. The lemma below follows from these identities. 
\blem \label{lem:colsum} Let $(Q, Z_q)$ be a trapezoidal configuration for some $q \in \{ 0, 1, \ldots, k \}$, and let $Z_{q,t} \subseteq Z_q$ be a sub-trapezoid with $0 \leq t \leq k-q$. Then
\bean
H(Z_{q,t} \mid W_Q) & \geq & \sum_{i =q}^{q+t-1} (\gamma_i - \omega_i)
\eean
\elem
\bpf By the exact-repair condition, $H(Z_{q,t} \mid W_Q) $ is at least $H(W_{[q+1 \ q+t]} \mid W_Q) $ and the result follows from \eqref{eq:basic2}.
\epf
\begin{figure}[ht]
\centering
\includegraphics[height=2in]{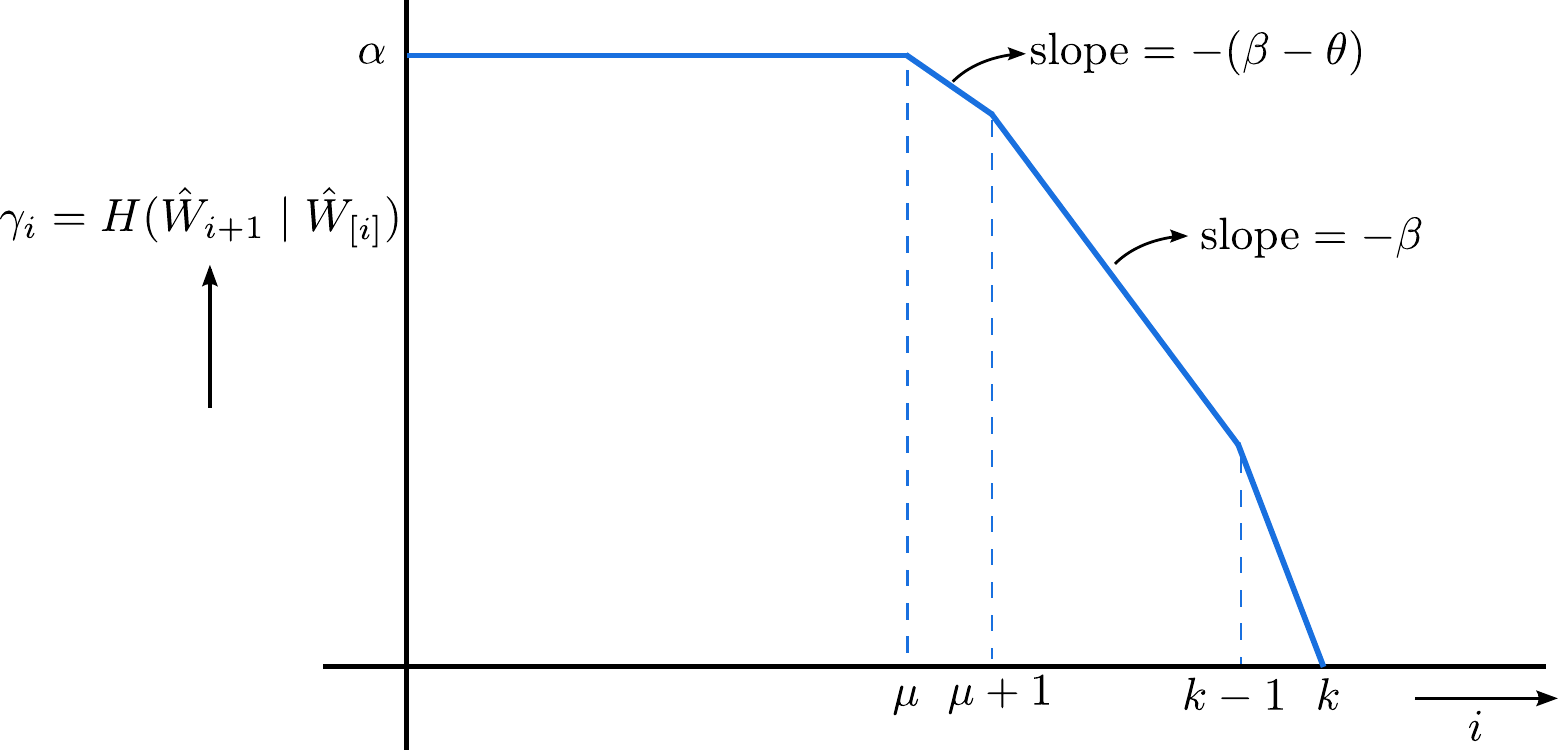}
\caption{The function $\gamma_i$ versus $i$ for $\alpha=(d-\mu)\beta - \theta$.}\label{fig:gamma}
\end{figure}

\subsection{Upper Bounds On Joint Conditional Entropies Of Repair Data \label{sec:ubounds_ent}}

Let $Q = [q]$, and $M, L$, be two mutually disjoint subsets of $[d+1]\setminus Q$ with $\ell := |L|$, and $m := |M|$. Then we can write
\bea \label{eq:ia_equation}
H(S_M^L \mid W_Q) & = & H(S_M^L \mid W_V, W_Q) + I(S_M^L : W_V \mid W_Q),
\eea
where in we take $V \supset L$ as a superset of $L$ with $V \cap M = \phi$ and $v:= |V|$. Our next objective is to estimate $H(S_M^L \mid W_V, W_Q)$ and $I(S_M^L : W_V \mid W_Q)$ in order to obtain an upper bound on $ H(S_M^L \mid W_Q) $.
\blem \label{lem:ubound} Suppose $\alpha = (d-\mu)\beta -  \theta$ with $\mu \in \{0,1,\ldots, k-1\}$ and $\theta \in [0,\beta)$ except when $\mu = k-1$. Then for $2 \leq \ell \leq v < k-q$,
\bean
H(S_M^L \mid W_V, W_Q) & \leq &  \left\{ \begin{array}{lc} \ell \theta + \ell \omega_{v-1+q} , &  v= \mu + 1 - q  \\
					\ell \omega_{v-1+q} , &  v > \mu + 1 -q . \end{array} \right.
\eean
\elem
\bpf Let $\ell_0 \in L$, and by symmetry $H(S_M^{\ell_0} \mid W_V, W_Q)$ is same for every $\ell_0 \in L$. Define $\tilde{V} = V \setminus \{\ell_0\}$. Then we have
\bean H(S_M^L \mid W_V, W_Q) & \leq & \ell H(S_M^{\ell_0} \mid W_V, W_Q) \\
 & = & \ell \{ H(S_M^{\ell_0},W_{\ell_0} \mid W_{\tilde{V}}, W_Q) - H(W_{\ell_0} \mid W_{\tilde{V}}, W_Q) \} \\
 & = & \ell \{ H(S_M^{\ell_0} \mid W_{\tilde{V}}, W_Q) + H(W_{\ell_0} \mid S_M^{\ell_0}, W_{\tilde{V}}, W_Q) - H(W_{\ell_0} \mid W_{\tilde{V}}, W_Q) \}  
\eean
By substituting bounds, we obtain for the case $v-1+q > \mu$
\bean
H(S_M^L \mid W_V, W_Q) & \leq & \ell \{ m\beta + (d-v +1-q-m)\beta - (d-v +1-q)\beta + \omega_{v - 1+q} \}  \\
& = & \ell \omega_{v - 1+q}  \ ,
\eean
and for the case $v - 1 + q = \mu$, 
\bean
H(S_M^L \mid W_L, W_Q) & \leq & \ell \{ m\beta + (d-v +1-q-m)\beta - (d-v +1-q)\beta + \theta + \omega_{v - 1+q} \}  \\
& = & \ell \theta + \ell \omega_{v - 1+q} \ .
\eean
\epf
We remark here that in \cite{Duursma2014} the quantity $H(S_M^L)$ is considered for obtaining a bound on ER file size. Our approach here is different in the sense that we estimate $H(S_M^L)$ in terms of $\{ \omega_i\}_{i=0}^{k-1}$. The second term in \eqref{eq:ia_equation} can also be easily estimated in terms of $\{\gamma_i, \omega_i \}_{i=0}^{k-1}$:
\bea
\nonumber I(S_M^L :W_V \mid W_Q) &  \leq & I(W_M :W_V \mid W_Q) \\
\nonumber & = & H(W_M \mid W_Q) - H(W_M \mid W_{Q\cup V}) \\
\label{eq:rel} & = & \left[ \sum_{i=q}^{q+m-1} (\gamma_i - \omega_i) \right] - \left[ \sum_{i=q+v}^{q+v+m-1} (\gamma_i - \omega_i) \right] .
\eea
The Lemma~\ref{lem:ubound} along with \eqref{eq:rel} allows us to bound $H(S_M^L \mid W_Q)$ from above given an operating point $\alpha = (d-\mu)\beta - \theta$. Calculations for the particular case of $q=0, m=1$ taking values for $v$ in $\{\mu+1, \mu+2\}$ result in the following corollary.
\bcor \label{cor:rowbound} Let $\alpha = (d-\mu)\beta - \theta$. Then for $m \notin L$ and $\ell = |L|$, we have
\bea
\label{eq:rb1} H(S_m^L) & \leq & \beta + (\ell -1)\theta  + (\ell -1)\omega_{\mu} + (\omega_{\mu} + \omega_{\mu+1}) , \ \ 2 \leq \ell \leq \mu +1 \\
\label{eq:rb2} H(S_m^L) & \leq & 2\beta - \theta + (\ell -1)\omega_{\mu+1} + (\omega_{\mu+1 + \omega_{\mu+2}) }, \ \ 2 \leq \ell \leq \mu +2 .
\eea
\ecor

\subsection{The Bound On ER File Size \label{sec:bound1}}

In this section, we make use of Lem.~\ref{lem:colsum} and Cor.~\ref{cor:rowbound} to derive an upper bound on the file size $B$ of an ER code. This will also translate to an outer bound for the ER tradeoff.

\bthm \label{thm:bound1} Let $B$ denote the file size of a ER regenerating code with full-parameter set ${\cal P}_f = \{(n,k,d),(\alpha,\beta)\}$. Let $\alpha = (d-\mu)\beta - \theta$. Then the ER file size $B$ is upper bounded by:
\ben
\item For $\mu=0, \ 0 < \theta < \beta$,
\bean
B & \leq & \hat{B} - \epsilon_1
\eean
\item For $\mu \in \{ 1, 2, \ldots, k-3 \}, \ 0 \leq \theta < \beta$,
\bean
B & \leq & \hat{B} - \max \{ \epsilon_0, \epsilon_1 \}
\eean
\item For $\mu=k-2, \ 0 \leq \theta < \left(\frac{d-k+1}{d-k+2}\right)\beta$,
\bean
B & \leq & \hat{B} - \epsilon_0, 
\eean
\een
where $\epsilon_0$ and $\epsilon_1$ are as given in Tab.~\ref{tab:eps_table}.
\ethm
\bpf The proof is relegated to the Appendix.
\epf

\begin{table}[h]
\centering

\begin{tabular}{||c|c||}  \hline
\hline
 &   \\
Regime of $(\mu,\theta)$ & Lower bounds $\epsilon_0$ , $\epsilon_1$  on $\epsilon = \hat{B}-B$ \\
 &   \\
\hline
\hline
 &   \\
$ \begin{array}{c} \mu \in \{ 1, 2, \ldots, k-2 \} \text{ for all } \theta \\
\text{For } \mu=k-2, \ \theta < \frac{d-k+1}{d-k+2}\beta
\end{array}$ & 
\large
$\begin{array}{lcl}
& & \text{Let } r_0 = \left\lfloor \frac{k-\mu}{\mu+1} \right\rfloor  \\
&& \\
\epsilon_0 & = & \left\{ \begin{array}{lc} \frac{(d-k+1)(k-\mu-1)(\beta - \theta) \ - \ \theta}{(d-k+1)(k-\mu) \ + \ 1}, & k-\mu < \mu+1. \\
& \\
\frac{\left(d-\frac{(\mu+1)(r_0+3)}{2}+2 \right)r_0\mu(\beta - \theta) \ - \ \theta}{\left(d-\frac{(\mu+1)(r_0+3)}{2}+2\right)r_0(\mu+1) \ + \ 1}, & k-\mu \geq \mu+1 . \end{array} \right. \end{array}$ \\
 &   \\
\hline
 &   \\
$ \begin{array}{c} \mu \in \{ 0, 1, \ldots, k-3 \} \text{ for all } \theta \\
\text{For } \mu=0, \ \theta \neq 0
\end{array}$ & 
\large
$\begin{array}{lcl}
& & \text{Let } r_1 = \left\lfloor \frac{k-\mu-1}{\mu+2} \right\rfloor  \\
&& \\
\epsilon_1 & = & \left\{ \begin{array}{lc} \frac{(d-k+1)\left[(k-\mu-3)\beta \ + \ \theta\right]}{(d-k+1)(k-\mu-1) \ + \ 1}, & k-\mu-1 < \mu+2. \\
& \\
\frac{\left(d-\frac{(\mu+2)(r_1+3)}{2}+2 \right)r_1 \left[\mu\beta \ + \ \theta\right] }{\left(d-\frac{(\mu+2)(r_1+3)}{2}+2\right)r_1(\mu+2) \ + \ 1}, & k-\mu-1 \geq \mu+2. \end{array} \right. \end{array}$ \\
 &   \\
\hline \hline
\end{tabular}
\caption{Lower Bounds on the quantity $\hat{B}-B$}
\label{tab:eps_table}
\end{table}

\bcor When $k \geq 3$, the normalized ER tradeoff is strictly away from the normalized FR tradeoff for all normalized operating points $(\bar{\alpha},\bar{\beta})$ with $\bar{\alpha} = (d-\mu)\bar{\beta} - \nu \bar{\beta}$ such that $(\mu, \nu)$ falls in the range $(\mu = 0, \ 0 \ < \ \nu \  < \ 1)$, $(\mu \in \{1,2,\ldots , k-3\}, 0 \ \leq \nu \ < \ 1)$ or $(\mu=k-2, 0 \ \leq \ \nu  < \frac{d-k+1}{d-k+2})$.
\ecor
\bpf We will show that the upper bound on the file size given in \ref{thm:bound1} satisfies the criterion in \eqref{eq:nonvanishing}. Let
\bea
\label{eq:eps} \delta = \left\{ \begin{array}{lc} \epsilon_1 & \mu = 0, \theta \neq 0 \\
							\max \{\epsilon_0, \epsilon_1\} & \mu \in \{1,2,\ldots , k-3 \} \\
							\epsilon_0 & \mu = k-2, \theta < \frac{d-k+1}{d-k+2}\beta \end{array} \right. 
\eea
Let $\alpha$ be related to $\beta$ as $\alpha = (d-\mu)\beta - \theta=(d-\mu)\beta - \nu\cdot\beta, \ \ \nu \in [0,1)$ by a fixed pair $(\mu,\nu)$ that falls in the range given. Then for a code with the file size $B$,
\bean
\frac{\beta}{B} & \geq & \frac{\beta}{\hat{B} - \delta }, \ \ \ \text{(using Thm.~\ref{thm:bound1})}\\
& = & \frac{\beta}{\hat{B}} \cdot \frac{1}{1-\left(\frac{\delta}{\hat{B}} \right)} \\
& = & \frac{\beta}{\hat{B}} \cdot \frac{1}{ 1 - \left(\frac{\delta}{ \beta \sum_{i=0}^{k-1} \min \{ (d-\mu)-\nu , (d-i) \} } \right)}  \\
& \geq & \frac{\beta}{\hat{B}} + \delta_0, 
\eean
for some $\delta_0 > 0$, determined by the constants $\frac{\epsilon_0}{\beta}$ and $\frac{\epsilon_1}{\beta}$. It can be seen that $\frac{\epsilon_0}{\beta}$ and $\frac{\epsilon_1}{\beta}$ are independent of $\beta, B$ and dependent only on the fixed values of $\mu,\nu,k$ and $d$. This completes the proof.
\epf

\section{Discussion On Various Known Upper Bounds On ER File Size \label{sec:oth_bounds}}

In this section, we briefly review the results from \cite{Tia}, \cite{Tia_544}, \cite{Duursma2014}, \cite{Duursma2015},\cite{MohTan}, all of them involving upper bounds on the ER file size. While bounds provided in \cite{Duursma2014}, \cite{Duursma2015} are not explicit, those presented in \cite{Tia}, \cite{Tia_544}, \cite{MohTan} have got the form of explicit algebraic expressions.

\subsection{Review of the Bounds in \cite{Tia},\cite{Tia_544}}

In \cite{Tia}, Tian characterized the optimal ER file size for the case of $(n,k,d)=(4,3,3)$. This was the first result establishing a non-vanishing gap for ER file size in comparison with the optimal FR file size. For the case of $(n,k,d)=(4,3,3)$, there are four bounds
\bea \label{eq:433_trapezoid}
B & \leq & B_q, \ q = 0,1,2,3 ,
\eea 
that follow from considering all possible trapezoidal configurations. For a given operating point $\alpha = (d-\mu)\beta -\theta$, one of these bounds dominate over the others. By suitably modifying the information theory inequality prover software(see \cite{ITIP}, \cite{Yeu}), Tian was able to characterize a bound
\bean
3B & \leq & 4 \alpha + 6 \beta,
\eean
that is different from \eqref{eq:433_trapezoid}. Recently in \cite{Tia_544}, Tian made further progress with his computational approach to provide an upper bound on the ER file size for $(n,k,d)=(5,4,4)$. In both the case of $(4,3,3)$ and $(5,4,4)$, the bounds are achieved using the well-known class of layered codes\cite{TiaSasAggVaiKum}. These results are made part of the online collection of ``Solutions of Computed Information Theoretic Limits (SCITL)'' hosted at \cite{SCITL}. 

\subsection{Review of the Bound in \cite{Duursma2014} \label{sec:duursma1}}

In the second of two bounds presented in~\cite{Duursma2014}, Duursma considers the region $Z_q$ in a trapezoidal configuration $(Q, Z_q)$, and tiles the region using rectangular blocks corresponding to random variables $S_M^L$, with $m := |M|$, $\ell := |L|$. This approach is an extension of the tiling-with-line-segments method, introduced in \cite{SasSenKum_isit} and used in the present paper in the derivation of Thm.~\ref{thm:bound1}.  Duursma extends the upper bound given in \cite{SasSenKum_isit} to obtain a bound on $H(S_M^L)$, involving entropy expressions having a negative coefficient.  Various carefully-chosen alternative bounds on $B$ are used to cancel out these negative terms leading to the improved bound: 
\bea
\label{eq:gauss} B + \sum\limits_{(M,L) \in {\cal M}} \ell B \le B_q + \sum\limits_{(M,L) \in {\cal M}} (B_{r+m-1} + (\ell - 1)(B_{r+m-2} - \beta)),
\eea
where $m:=|M|$, $\ell:=|L|$ and $r \ge \ell$ for every choice of $(M,L)$. In \eqref{eq:gauss}, ${\cal M}$ denotes a set of possible tilings of the trapezoidal region $Z_q$ using rectangular blocks, and $B_q$ remains as defined in Section~\ref{sec:trapezium}. To obtain the best possible explicit bound, one would then proceed to minimize this expression over all possible tilings. It can easily be checked that the bound in \eqref{eq:gauss} is tighter than the one given in \eqref{eq:Bq_bound}, by a difference of at most $\beta$. 

\subsection{Review of the Bound in \cite{Duursma2015} \label{sec:duursma2}}

In \cite{Duursma2015}, Duursma augments the set of node random variables $\{W_i\}_{i=1}^{k}$ with another set of random variables $W'_{k+u}$  for $1 \leq u \leq \nu$ satisfying
\bea
\label{eq:aux_var} H(S_i^j | W'_{k+u}) \le H(S_i^j | W_{[i+1, k]} W'_{[k+1, k+u-1]}) \text{ for } 1\leq i < j \leq p,
\eea
for a given value of $p$, $0 \leq p \leq k$. The bound on file size $B$ is obtained as
\bean
(\nu + 1)B \le (\nu + 1)B_{k-p} + \sum\limits_{u = 1}^{\nu} \left(H(W'_{k+u}) - {p \choose 2}\beta \right),
\eean
where $B_{k-p}$ is as defined earlier.   This results in general, in an implicit bound as it is not clear how the random variables $\{W'_{k+u}\}_{u=1}^{\nu}$ can be constructed. However, restricting to linear codes, the author is able to construct the $\{W'_{k+u}\}$ resulting in an explicit bound for every parameter set $(n,k,d)$. This bound matches with the one proved in \cite{PraKri_isit} for the special case of $(k+1,k,k)$-linear ER codes.

\subsection{Review of the Bound in \cite{MohTan} \label{sec:mohajer}}

In this section, we give a complete description\footnote{We have simplified the proof to some extent, and therefore certain arguments differ from what is presented in \cite{MohTan}.} of the proof of the bound due to Mohajer et al. in \cite{MohTan}. We start with recalling the bound given in \eqref{eq:trapezoid2} for a trapezoidal configuration $(Q, Z_q)$,
\bea
\nonumber B & \leq & H(W_Q) + H(Z_q \mid W_Q) \\
\label{eq:soh_0}& = & H(W_Q) + H(X_q, S_R^P \mid W_Q),
\eea
where the sets $P$, $Q$, and $R$ are as defined in Sec.~\ref{sec:trapezium}. For convenience of notation, we modify the indexing of elements in sets $P$, $Q$ and $R$, without making any change in their respective sizes. Thus the sets $Q, P, R$ are defined by the same value of $q$, and hence the bound in \eqref{eq:trapezoid2} remains unaltered. With respect to the modified indexing, $Q = \{ -1, -2, \ldots, -q \}$, $P = \{ 1, 2, \ldots, p:=k-q \}$ and $R = \{ k+1, k+2, \ldots, d+1 \}$. Continuing from \eqref{eq:soh_0}, we write
\bea
\nonumber B & \leq & H(W_Q) + H(X_q, S_R^P \mid W_Q) \\
\label{eq:soh_1} &\leq & q \alpha + \underbrace{\sum\limits_{i=1}^p H\left(S_i^{\left[i-1\right]} \mid W_Q\right)}_{{\cal R}(p)} + H\left(S_R^P \mid W_Q \right).  
\eea
Instead of invoking the union bound as done in \eqref{eq:trapezoid_bound}, the entropic term ${\cal R}(p) := \sum\limits_{i=1}^p H\left(S_i^{\left[i-1\right]} | W_Q\right)$ is canceled out with the help of other expressions for $B$. In \eqref{eq:soh_2} that follows, the authors over-count conditional node entropy $H(W_i \mid W_{[i-1]})$ as $\alpha$, and later subtract out the error introduced in doing so. This leads to a different expression for $B$:
\bea
\label{eq:soh_2} B &=&  H(W_Q) + \sum\limits_{i=1}^p  H(W_i \mid W_Q) -  \sum\limits_{i=1}^p I(W_i ; W_{[i-1]} \mid W_Q) \nonumber \\
& \leq & q\alpha + p \alpha -  \sum\limits_{i=1}^p I(S_i^{[i-1]} ; S_{[i-1]}^i \mid W_Q) \nonumber \\
& = & k\alpha - \underbrace{\sum\limits_{i=1}^p H\left(S_i^{\left[i-1\right]} \mid W_Q\right)}_{{\cal R}(p)} - \underbrace{\sum\limits_{i=1}^p H\left(S_{\left[i-1\right]}^i \mid W_Q\right)}_{{\cal C}(p)} + \underbrace{\sum\limits_{i=1}^p H\left(S_{\left[i-1\right]}^i, S_i^{\left[i-1\right]} \mid W_Q\right)}_{{\cal J}(p)}.
\eea
While \eqref{eq:soh_2} allows cancellation of ${\cal R}(p)$ in \eqref{eq:soh_1}, it introduces new entropic terms ${\cal C}(p)$ and ${\cal J}(p)$. A third expression for $B$ is obtained by over-counting entropy of columns in the trapezoidal region $Z_q$ using union bound, and then subtracting out the error introduced in doing so.
\bea
\nonumber B & \leq & H(W_Q, S_{[d+1]}^P) \nonumber \\
\nonumber & \leq & q\alpha + \sum\limits_{i=1}^p H(S_{[d+1]}^i | W_Q ) - \sum\limits_{i=1}^p I\left(S_{[d+1]}^i; S_{[d+1]}^{[i-1]} \mid W_Q\right) \\
\label{eq:soh_31} & \leq & q\alpha + \sum\limits_{i=1}^p H\left(S_{[i-1]}^i \mid W_Q\right) + \sum\limits_{i=1}^p H\left(S_{[i+1 \ d+1]}^i \mid W_Q\right) - \sum\limits_{i=1}^p I\left(S_{[d+1]}^i; S_{[d+1]}^{[i-1]} \mid W_Q\right).
\eea
The following straightforward lemma is useful in producing a lower bound for $I\left(S_{[d+1]}^i; S_{[d+1]}^{[i-1]} \mid W_Q\right)$. 
\blem \label{lem:useful} Let $X,Y,Z,U$ be random variables such that $Z \ = \  f_1(X,U) \ = \ f_2(Y,U)$ for some deterministic functions $f_1$, $f_2$. Then
\bean
I(X:Y \mid U) & \geq & H(Z \mid U).
\eean
\elem
By invoking Lem.~\ref{lem:useful} along with identifying $Z = \{S_{[i-1]}^i, S_i^{\left[i-1\right]} \}$, $X = S_{[d+1]}^i$, $Y = S_{[d+1]}^{[i-1]}$ and $U = W_Q$, it follows that 
\bea
\label{eq:soh_32} I\left(S_{[d+1]}^i; S_{[d+1]}^{[i-1]} | W_Q\right) & \geq & H\left(S_{[i-1]}^i, S_i^{\left[i-1\right]} | W_Q\right),
\eea
and substituting \eqref{eq:soh_32} back in \eqref{eq:soh_31}, the authors obtain the bound
\bea
\label{eq:soh_3} B & \leq & q \alpha + \underbrace{\sum\limits_{i=1}^p H\left(S_{\left[i-1\right]}^i \mid W_Q\right)}_{{\cal C}(p)} + \sum\limits_{i=1}^p H\left(S_{\left[i+1 \ d+1\right]}^i \mid W_Q\right) - \underbrace{\sum\limits_{i=1}^p H\left(S_{[i-1]}^i, S_i^{\left[i-1\right]} \mid W_Q\right)}_{{\cal J}(p)}.
\eea
Summation of \eqref{eq:soh_1} \eqref{eq:soh_2} and \eqref{eq:soh_3} eliminates ${\cal R}(p)$, ${\cal C}(p)$ and ${\cal J}(p)$, and results in the bound:
\bea
\label{eq:soh_final0}3B &\le& (3k-2p)\alpha + \sum\limits_{i=1}^p H\left(S_{\left[i+1 \ d+1\right]}^i \mid W_Q\right) + H\left(S_R^P \mid W_Q \right).
\eea
By applying union bound, it follows that
\bea
\label{eq:soh_final}B &\le & \min_{0 \leq p \leq  k} \frac{(3k-2p)\alpha + \frac{p(2(d-k)+p+1)\beta}{2} + (d-k+1)\min\{\alpha, p\beta\} }{3}.
\eea
To our knowledge, the bound in \eqref{eq:soh_final} due to Mohajer et al. remains the best known upper bound on ER file size in the region away from the MSR point. 

\section{An Improved Upper Bound on ER File Size \label{sec:bound2}}

In this section, we first propose an improvement over the bound in \cite{MohTan}, that is described in Sec.~\ref{sec:mohajer}. The authors of \cite{MohTan} apply union bound on the last two terms in \eqref{eq:soh_final0} to obtain the final bound. But it is possible to avoid the union bound for the term $H\left(S_R^P \mid W_Q \right)$ when $d >> k$. 

\begin{figure}[h!]
\begin{center}
 \includegraphics[width=2.8in]{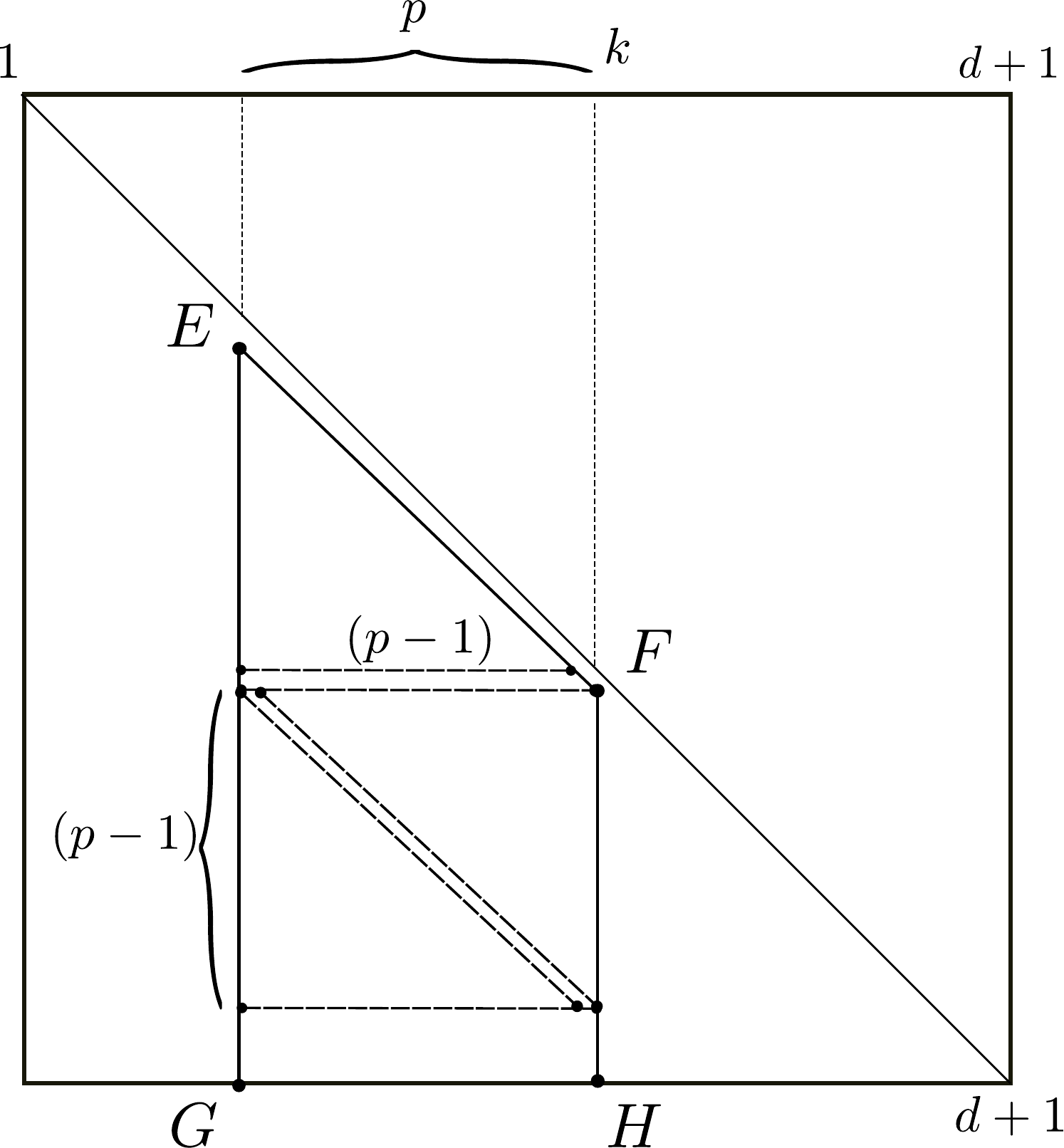}
\caption{The splitting up of the region corresponding to $S_R^P$. In this example, $a = 1$, $b > 0$.} \label{fig:soh_imp}
\end{center}
\end{figure}

The Fig.~\ref{fig:soh_imp} illustrates the region $S_R^P$ as it is viewed on the repair matrix. The rectangular region $S_R^P$, denoted by $\Gamma$, is of width $p$ and height $(d-k+1)$. Let us write
\bean
d-k+1 = a(p-1) + b, \ 0 \leq b < (p-1).
\eean
Then $\Gamma$ can be split into $(a+1)$ sub-rectangles $\Gamma_1, \Gamma_2, \ldots, \Gamma_{a+1}$ of equal width $p$, and $\Gamma_i, 1 \leq i \leq a$ have the same height $(p-1)$. The last sub-rectangle $\Gamma_{a+1}$ is of height $b$, and it vanishes in the case $b=0$. Each rectangle $\Gamma_i, 1 \leq i \leq a$ is further split into two isosceles right triangles $\Gamma_{i1}$,  $\Gamma_{i2}$ of base $(p-1)$ as illustrated in Fig.~\ref{fig:soh_imp}. By symmetry, we can write
\bea
\nonumber H(S_P^R | W_Q) &\leq & a H\left( \Gamma_1 | W_Q \right) + H(\Gamma_{a+1}|W_Q)\\
\nonumber & \leq & 2a H\left( \Gamma_{11} | W_Q \right) + H(\Gamma_{a+1}|W_Q)\\
\label{eq:imp_soh_1}& \leq & 2a \sum_{i=1}^{p} H\left( S_i^{[i-1]} | W_Q \right) + b\min\{\alpha, p\beta\} .
\eea
We improve upon the the bound in \eqref{eq:soh_1} by substituting \eqref{eq:imp_soh_1}, and obtain that
\bea
\label{eq:soh_rect} B &\le& q \alpha + (1+2a)\sum\limits_{i=1}^p H\left(S_i^{\left[i-1\right]} | W_Q\right) + b\min\{\alpha, p\beta\}.
\eea
This modification only affects the coefficient of the term ${\cal R}(p)$. The cancellation of the term ${\cal R}(p)$ is possible by appropriately scaling the bounds in \eqref{eq:soh_2} and \eqref{eq:soh_3}. This results in an improved bound whenever $a \geq 1$, and is stated in the theorem below. We refer to this bound as the {\em improved Mohajer-Tandon bound}.
\bthm \label{thm:bound2} The ER file size $B$ of regenerating code with full-parameter set ${\cal P}_f = \{(n,k,d),(\alpha,\beta)\}$ is  bounded by
\bea
\label{eq:soh_improved} B & \leq & \min_{0 \leq p \leq k} \frac{\alpha(2(k-p)(1+a)+k(1+2a)) + b\min\{\alpha, p\beta\} + \frac{(1+2a)p(2(d-k)+p+1)\beta}{2} }{3+4a},
\eea
where $d-k+1 = a(p-1) + b$ and $0 \leq b < (p-1)$.
\ethm
We remark that the improved Mohajer-Tandon bound relies upon the same techniques introduced by Mohajer et al. of coming up with various expressions for $B$ allowing one to cancel out entropic terms that are otherwise difficult to estimate. Our incremental contribution is limited to identifying the symmetry in certain entropic terms as seen in the pictorial depiction on a repair matrix, and leveraging upon this symmetry to avoid certain union bounds. When $d > k$, the bound in Thm.~\ref{thm:bound2} leads to an outer bound on normalized ER tradeoff, that lies above the one due to \eqref{eq:soh_final}. A principal result of the paper stated in Thm.~\ref{thm:bound3} follows by combining both the Thm.~\ref{thm:bound1} and the Thm.~\ref{thm:bound2}.

\section{A Dual-Code-Based Approach To Bounding the ER File Size for Linear Codes\label{sec:linear_app}}

In this section, we investigate the maximum possible ER file size under the restricted setting of linear regenerating codes. Let ${\cal C}_{\text{lin}}$ denote a linear ER code with full-parameter set ${\cal P}_f = \{(n, k, d), (\alpha, \beta)\}$. We will continue to use $B$ to denote the file size. By linear, we mean that (a) the encoding mapping that converts the $B$ message symbols to $n\alpha$ coded symbols is linear, (b) the mapping that converts the node data into repair data that is transmitted during the repair of a failed node is linear and furthermore, (c) the mappings that are involved during data collection from a set of $k$ nodes and regeneration of a failed node using repair data from a set of $d$ nodes are linear. A linear regenerating code can be viewed as a linear block-code with length $n\alpha$ over $\mathbb{F}$ such that every set of $\alpha$ symbols (taken in order without loss of generality) are bunched together to correspond to a node. 

\subsection{The Parity-Check Matrix And Its Properties\label{sec:pc}}

Since ${\cal C}_{\text{lin}}$ is a linear code, we can associate a generator matrix to the code. Let $G$ of size $(B \times n\alpha)$ denote a  generator matrix of ${\cal C}_{\text{lin}}$. Without loss of generality, we assume that the first $\alpha$ columns of $G$ generate the contents of the first node, the second $\alpha$ columns of $G$ generate the contents of the second node, and so on. The first $\alpha$ columns taken together will be referred to as the first thick column of $G$. Similarly, the second thick column consists of columns from $\alpha+1$ to $2\alpha$, and so on. Overall, we will have $n$ thick columns in $G$. Let $H$ denote a parity-check matrix having size $(n\alpha - B) \times n\alpha$. The row-space of $H$ is the dual code of ${\cal C}_{\text{lin}}$. The definition of thick columns directly carries over to $H$. For any set $S \subseteq [n]$, we write $H|_S$ to denote the restriction of $H$ to the thick columns indexed by the set $S$. From definitions, we have that
\bea  \label{eq:B_rank}
B & = & \textsl{rank}(G) \ = \ n\alpha - \textsl{rank}(H) .
\eea
By \eqref{eq:B_rank}, it is sufficient to obtain a lower bound on $\textsl{rank}(H)$ to bound $B$ from above. This is precisely the approach taken here. 

In the following two lemmas, we will translate the properties of data collection and exact-repair as properties of the parity-check matrix $H$. We remark here that these observations are already made in \cite{Duursma2014}. 
\begin{lem}[Data Collection] \label{lem:H_datacollection} Let $H$ be a parity-check matrix of an ER linear regenerating code. Then $\textsl{rank}\left({H|_S}\right) = (n-k)\alpha$, for any $S \subseteq [n]$ such that $|S| = n-k$. 
\end{lem}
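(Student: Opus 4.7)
The lemma asserts that the parity-check matrix, when restricted to any $n-k$ thick columns, has full column rank $(n-k)\alpha$. Since $H|_S$ has exactly $(n-k)\alpha$ columns, the inequality $\textsl{rank}(H|_S) \le (n-k)\alpha$ is automatic, so the entire content of the lemma is the matching lower bound. My plan is to obtain this lower bound as a direct translation of the data-collection property through the standard duality between a linear code and its parity-check matrix.

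The approach is by contradiction. Suppose for some $S \subseteq [n]$ with $|S| = n-k$ we have $\textsl{rank}(H|_S) < (n-k)\alpha$. Then the columns of $H|_S$ are linearly dependent, so there is a nonzero vector $x \in \mathbb{F}^{(n-k)\alpha}$ with $H|_S \, x = 0$. Embedding $x$ into $\mathbb{F}^{n\alpha}$ by padding with zeros in the thick columns indexed by $T := [n] \setminus S$ yields a nonzero vector $\tilde{x}$ with $H \tilde{x} = 0$; that is, $\tilde{x}$ is a nonzero codeword of ${\cal C}_{\text{lin}}$ whose support is contained in the thick columns of $S$, i.e., it is identically zero on the $k$ nodes of $T$.

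This contradicts data collection from $T$, which I would phrase as follows: since $|T| = k$, the data-collection property states that the message can be recovered from the $k\alpha$ coded symbols indexed by $T$, which in linear-algebraic terms means the projection ${\cal C}_{\text{lin}} \to \mathbb{F}^{k\alpha}$ onto the coordinates of $T$ is injective. But the codewords $0$ and $\tilde{x}$ are distinct and project to the same (zero) image on $T$, a contradiction. Hence no such $x$ exists, and $H|_S$ has full column rank $(n-k)\alpha$.

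I do not expect any real obstacle here — the argument is essentially the classical equivalence between the minimum distance of a linear code and the number of linearly independent columns of its parity-check matrix, upgraded to the ``thick column'' setting where the coordinate set is partitioned into $n$ blocks of $\alpha$. The only point that deserves a line of care in the write-up is the passage between the two formulations of data collection (``recoverable from any $k$ nodes'' versus ``the projection onto any $k$ thick columns is injective on ${\cal C}_{\text{lin}}$''), which holds because the decoding map from the $k\alpha$ observed symbols to the message is linear by hypothesis (c) on ${\cal C}_{\text{lin}}$, so injectivity on the message space lifts to injectivity on the codeword space.
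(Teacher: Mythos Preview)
Your argument is correct and is exactly the standard duality argument the paper has in mind; the paper itself does not spell it out, instead citing Proposition~2.1 of \cite{Duursma2014} and asserting that the statement is equivalent to the data-collection property. You have simply supplied the details of that equivalence.
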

\begin{proof}
This is a re-statement of Part $(1)$ of Proposition $2.1$ of \cite{Duursma2014}, and is equivalent to the data collection property.
\end{proof}
\begin{lem}[Exact Repair] \label{lem:H_repair}
Assume that $d= n-1$. Then the row space of $H$ of an ER linear regenerating code contains a collection of
$n\alpha$ vectors that can be arranged as the rows of an $(n\alpha \times n\alpha)$ matrix $H_{repair}$, which can be written in the block-matrix form:
\begin{eqnarray} \label{eq:Hrepair}
H_{repair} & = & \left[ \begin{array}{c|c|c|c} A_{1,1} & A_{1,2} & & A_{1,n} \\
\hline &&& \\
A_{2,1} & A_{2,2}  & \hdots & A_{2, n} \\
\hline &&& \\
& & \ \ \vdots \ \ & \\
\hline &&& \\
A_{n,1} & A_{n,2} &  & A_{n,n} \end{array}
\right],
\end{eqnarray}
where $A_{i,i}$ is defined to be the identity matrix $I_{\alpha}$ of size $\alpha$ and $A_{i,j}$ denotes an $\alpha \times \alpha$ matrix such that $\text{rank}\left(A_{i, j}\right) \leq \beta, 1 \leq i, j \leq n, i \neq j$.
\end{lem}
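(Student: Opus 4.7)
The plan is to leverage linearity of the helper-computation map and the regeneration map to turn each exact-repair condition into an $\alpha$-dimensional subspace of the dual code ${\cal C}_{\text{lin}}^{\perp}$, and then collect these subspaces across all $n$ nodes into the block matrix $H_{\text{repair}}$.

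First, I would fix a failed node $i$ and, for each helper $j \in [n] \setminus \{i\}$, use the linearity assumption on the helper map to write the repair data as $S_j^i = W_j R_j^i$ for some $R_j^i \in \mathbb{F}^{\alpha \times \beta}$. Linearity of the regeneration map at the replacement node then supplies matrices $T_j^i \in \mathbb{F}^{\beta \times \alpha}$ such that $W_i = \sum_{j \neq i} S_j^i T_j^i$. Composing the two yields
\begin{equation*}
W_i \;=\; \sum_{j \neq i} W_j\, B_{i,j}, \qquad B_{i,j} \,:=\, R_j^i\, T_j^i,
\end{equation*}
and since the product $R_j^i T_j^i$ factors through a $\beta$-dimensional intermediate space, $\textsl{rank}(B_{i,j}) \le \beta$.

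Next, since this identity must hold for every message $M$ and $W_j = M G_j$, I can strip the arbitrary $M$ on the left to obtain the column identity $G_i = \sum_{j \neq i} G_j B_{i,j}$ among the thick columns of the generator matrix $G$. Rearranging, the $\alpha \times n\alpha$ block-row whose $i$-th thick block is $I_\alpha$ and whose $j$-th thick block (for $j \neq i$) is $-B_{i,j}^{\top}$ is annihilated by $G$ on the right, so all $\alpha$ of its rows lie in the left null space of $G^{\top}$, i.e., in the row space of $H$. Setting $A_{i,j} := -B_{i,j}^{\top}$ for $j \neq i$ and $A_{i,i} := I_\alpha$ preserves the rank bound (since $\textsl{rank}(A_{i,j}) = \textsl{rank}(B_{i,j}) \le \beta$) and produces the $i$-th thick row of the claimed matrix.

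Finally, I would carry out this construction independently for each $i \in [n]$ and vertically stack the resulting $n$ thick rows to obtain an $n\alpha \times n\alpha$ matrix $H_{\text{repair}}$ in the block form \eqref{eq:Hrepair}. I do not expect any serious obstacle; the only subtlety is to be explicit that the rank-$\beta$ bound on each off-diagonal $A_{i,j}$ is forced by the factorization through the $\beta$-symbol repair data, which is precisely where the hypothesis of linearity in the helper and regeneration maps (not just in encoding) is used.
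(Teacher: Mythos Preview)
Your proposal is correct and follows essentially the same approach as the paper's proof: both obtain the $i$-th thick row of $H_{\text{repair}}$ from the parity-check equations forced by exact repair of node $i$, with the rank bound on $A_{i,j}$ coming from the $\beta$-symbol bottleneck in the helper data. The paper's argument is merely a two-line sketch of exactly this idea (it points to \cite{Duursma2014}), whereas you have spelled out the factorization $A_{i,j} = -(R_j^i T_j^i)^\top$ explicitly; there is no substantive difference.
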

\begin{proof} The first $\alpha$ rows of the form
\bean
\left[ \begin{array}{c|c|c|c}  I_{\alpha} & A_{1,2} & \cdots & A_{1,n} \end{array}
\right]
\eean
can be obtained by the parity-check equations that are necessitated by the exact-repair requirement of the first node. In a similar manner, there must be parity-check equations that must allow repair of every node. These parity-check equations can be arranged to obtain the matrix $H_{repair}$. The requirements on the ranks of the sub-matrices $A_{ij}$ follow from the definition of regenerating codes, and the fact that $d=n-1$. In fact, the proof is indicated in Part $(2)$ of Proposition $2.1$ of \cite{Duursma2014}.
\end{proof}	

For the case of $d=k=n-1$, the matrix $H_{\text{repair}}$ as given in Lem.~\ref{lem:H_repair} satisfies the condition given in Lem.~\ref{lem:H_datacollection}, and therefore $H_{repair}$ by itself defines an $(n, k=n-1, d=n-1)(\alpha, \beta)$ regenerating code. Since $\textsl{rank}(H) \geq \textsl{rank}(H_{repair})$, and our interest lies in regenerating codes having maximal file size, we will assume that $H = H_{repair}$ while deriving a lower bound on $\textsl{rank}(H)$ for the case of $d=k=n-1$.

\subsection{A Proof Of FR Bound For ER Linear Codes Using Dual Code \label{sec:dimakis_via_dual}}

In this section, we will present a simple proof of the FR bound \eqref{eq:cut_set_bd} for ER linear regenerating codes. Our proof of Theorem \ref{thm:new_bound_k_eq_d} will be built up on the proof of \eqref{eq:cut_set_bd} that is presented here.

As earlier, let $\mathcal{C}_{\text{lin}}$ denote an $(n, k, d = n-1) (\alpha, \beta)$ linear regenerating code, and let the matrix $H$ generate the dual code of $\mathcal{C}$. The key idea of the proof is to obtain a lower bound on the column rank of the matrix $H$. We  use the notation  $\rho(.)$ to denote the rank of a matrix. Let us define the quantities $\delta_j, 1 \leq j \leq n$ as follows:

\begin{eqnarray}
\delta_1 & = & \rho(H|_{[1]}), \label{eq:proof_diamkis_aa}\\
\delta_j & = & \rho(H|_{[j]}) - \rho(H|_{[j-1]}),\ 2 \leq j \leq n. \label{eq:proof_diamkis_0}
\end{eqnarray}
Next, we make the following claims:
\begin{eqnarray} 
\label{eq:proof_dimakis_1} \delta_{j} &=& \rho(A_{j,j}) \\
 &=& \alpha, \ \ 1 \leq j \leq n-k \\
\label{eq:proof_diamkis_2}
\delta_{j} & \geq & (\alpha - (j-1)\beta)^+, \ n-k + 1 \leq j \leq n.
\end{eqnarray}
Here we have set $a^+ := \text{max}(a, 0)$. The first claim \eqref{eq:proof_dimakis_1} follows from the fact that any $n-k$ thick  columns of $H$ has rank given by $(n-k)\alpha$ as required by Lem.~\ref{lem:H_datacollection}. To show the second claim \eqref{eq:proof_diamkis_2}, one needs to first focus on the $j^{\text{th}}$ thick row of $H_{repair}$. By $j^{\text{th}}$ thick row, we mean the set of rows starting from $(j-1)\alpha + 1$ and reaching up to $j\alpha$ of $H_{repair}$. Next observe that 
\bea
\delta_j & \geq & \left(\rho(A_{j,j}) - \sum_{\ell=1}^{j-1}\rho(A_{j,\ell})\right)^+  \label{eq:dimakis_proof_3a} \\
\nonumber & = & \left(\rho(I_{\alpha}) - \sum_{\ell=1}^{j-1}\rho(A_{j,\ell})\right)^+  \\
& \geq & \left(\alpha - (j-1)\beta \right)^+,  \ n-k + 1 \leq j \leq n, \label{eq:dimakis_proof_3}
\eea
where \eqref{eq:dimakis_proof_3} holds true since $\rho(A_{i,j}) \leq \beta$ by Lem.~\ref{lem:H_repair}. Thus we have shown \eqref{eq:proof_diamkis_2}. Next, invoking \eqref{eq:proof_dimakis_1} and \eqref{eq:dimakis_proof_3}, we bound the column-rank of $H$ from below as:
\begin{eqnarray}
\text{rank}(H) & = & \sum_{j=1}^{n}\delta_j \label{eq:proof_dimakis_rkH_delta_eq} \\
& \geq & (n-k)\alpha + \sum_{j=n-k+1}^{n}\left(\alpha - (j-1)\beta \right)^+. \label{eq:dimakis_proof_4}
\end{eqnarray}
An illustration of arriving at \eqref{eq:dimakis_proof_3a} and \eqref{eq:dimakis_proof_4} is given in Fig. \ref{fig:rankH_computation}.
\begin{figure}[h]
	\centering
	\includegraphics[width=4in]{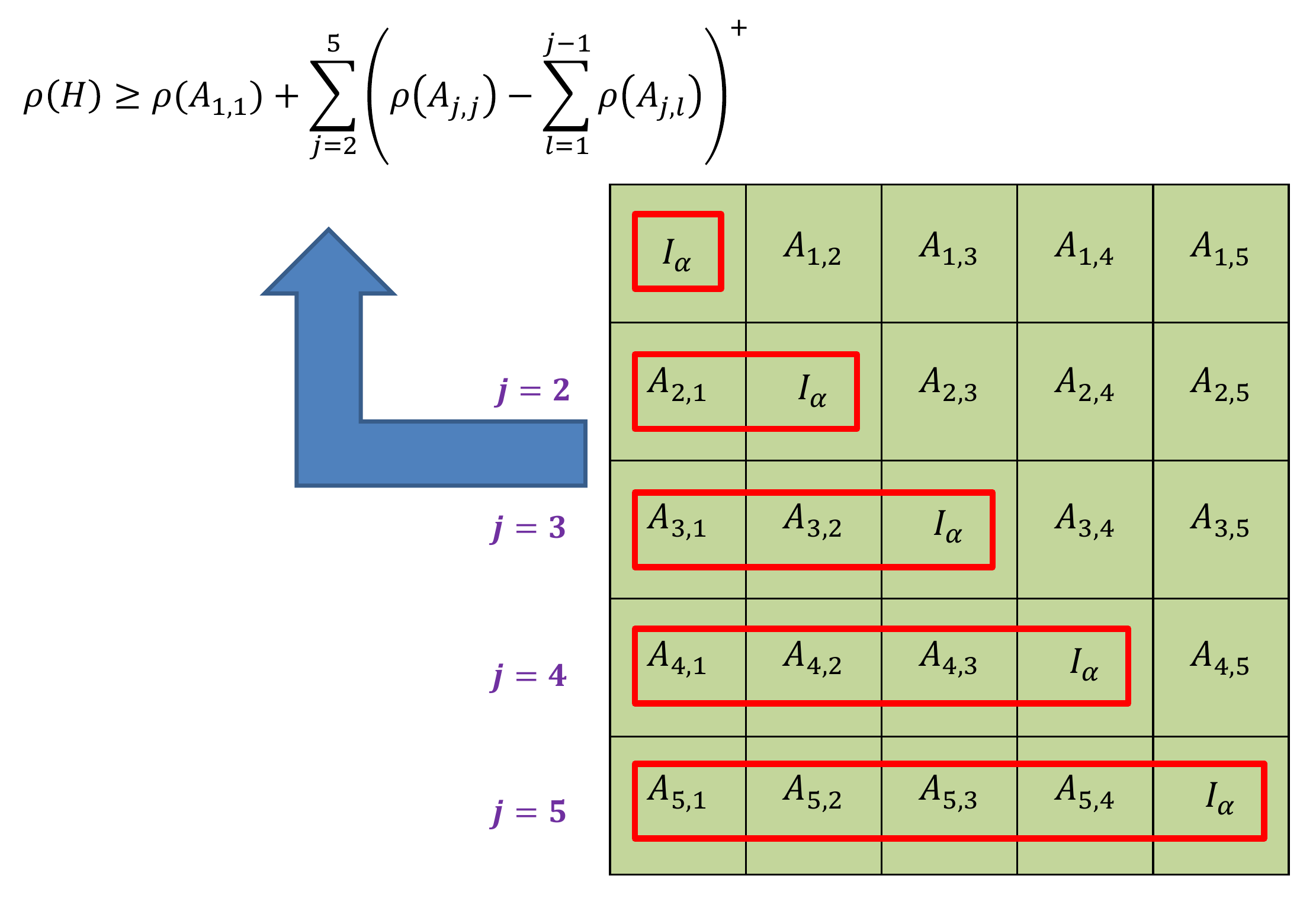}
	\caption{A lower bound on $\rho(H)$, for the case of $(n = 5, k = 4, d = 4)$. Each term indexed by $j$ in the summation correspond to a lower bound on the incremental rank $\delta_j$. This bound is obtained by looking at the sub-matrices in $j^{th}$ thick row.}
	\label{fig:rankH_computation}
\end{figure}
Consequently, it follows that 
\begin{eqnarray}
B &  =  & n\alpha - \rho(H) \\
& \leq & n\alpha - (n-k)\alpha  - \sum_{j=n-k+1}^{n}\left(\alpha - (j-1)\beta \right)^+ \\
& = & \sum_{j=0}^{k-1}\min(\alpha, (n-1-j)\beta) \label{eq:dimakis_proof_6}.
\end{eqnarray}
For $d< n-1$, the proof follows by first puncturing the code on any $(n-d-1)$ nodes to form a $(n'=d+1,k,d)$ ER linear regenerating code and then invoking the above analysis on the resultant new code. The way we express incremental ranks $\{\delta_j\}$ in \eqref{eq:proof_dimakis_1} and \eqref{eq:dimakis_proof_3a} will turn out to be useful in deriving a strong upper bound on the file size of linear ER codes in Sections \ref{sec:544} and \ref{sec:main_proof}.

\section{An Upper Bound On The File Size Of Linear ER Codes For The Case $(n=5,k=4,d=4)$\label{sec:544}}

In this section, we obtain a new upper bound on the file size of a linear ER code for parameters $(n=5,k=4,d=4)$. Taken along with the achievability using layered codes (see Sec.~\ref{sec:ach_lin}), we characterize the tradeoff for this case. As mentioned earlier, our technique is to lower bound the rank of the parity-check matrix $H$, leading to an upper bound on the file size by \eqref{eq:B_rank}. The lower bound on $\rho(H)$ that we derive here is in general tighter than what is obtained in \eqref{eq:dimakis_proof_4}. The principal result of this section is stated in Thm.~\ref{thm:544} below. Most of the ideas that are developed in the proof of Thm.~\ref{thm:544} will later be used in the next section to prove a general result for the case of $(n,k=n-1,d=n-1)$.

\begin{thm} \label{thm:544} Consider an ER linear regenerating code $\mathcal{C}_{\text{lin}}$ with full-parameter set $\{(n=5, k=4, d=4), (\alpha, \beta)\}$. Let $H$ denote a parity-check matrix of $\mathcal{C}_{\text{lin}}$. Then
	\begin{eqnarray} \label{eq:bound_rank_H_544}
	\rho(H) & \geq & \left \{ \begin{array}{c} \left \lceil \frac{10(\alpha - \beta)}{3} \right \rceil, \ 2\beta \leq \alpha \leq 4\beta \\  \\
	\left \lceil \frac{15\alpha - 10\beta}{6} \right \rceil, \ \frac{4}{3}\beta \leq \alpha \leq 2\beta \\  \\
	2\alpha - \beta, \ \beta \leq \alpha \leq \frac{4}{3}\beta \end{array} \right. .
	\end{eqnarray}
\end{thm}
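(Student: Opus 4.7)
The strategy continues from Section~\ref{sec:dimakis_via_dual}: write $\rho(H) = \sum_{j=1}^{5}\delta_j$ via the chain of partial column ranks in \eqref{eq:proof_diamkis_0}, and sharpen the row-wise lower bound $\delta_j \geq (\alpha - (j-1)\beta)^+$ that was used to recover the FR cut-set bound. Since the claimed formula in \eqref{eq:bound_rank_H_544} is exactly the specialization at $n=5$ of the general formula in Thm.~\ref{thm:new_bound_k_eq_d} (as one checks: for $n=5$ the expression $n(2r\alpha-(n-1)\beta)/(r(r+1))$ reduces to $10(\alpha-\beta)/3$ at $r=2$, $(15\alpha-10\beta)/6$ at $r=3$, and $2\alpha-\beta$ at $r=n-1=4$), the plan is to develop, for this small case, exactly the linear-algebraic tools that will later power Sec.~\ref{sec:main_proof}. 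Each of the three regimes $\beta\le\alpha\le 4\beta/3$, $4\beta/3\le\alpha\le 2\beta$, $2\beta\le\alpha\le 4\beta$ will correspond to a value of an integer parameter $r\in\{4,3,2\}$, and the bound in the $r$-th regime will arise from an averaging argument over size-$r$ subsets of nodes.

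Concretely, for any subset $S\subseteq[n]$ of size $|S|=r$, the submatrix $H_{\text{repair}}|_S$ has $r$ identity diagonal blocks and $r(r-1)$ rank-$\le\beta$ off-diagonal blocks, so $\rho(H|_S)\le r\alpha$. I would first prove a local bound of the form $\rho(H|_S)\ge r\alpha-\binom{r}{2}\beta$ using the Schur-complement style calculation sketched in Sec.~\ref{sec:dimakis_via_dual} combined with the subadditivity $\rho(I-X)\ge\alpha-\rho(X)$ applied block-wise. I would then combine this local inequality with a carefully tailored submodular-type inequality for the rank function of $H$ (relating $\rho(H|_S)$, $\rho(H|_T)$, $\rho(H|_{S\cup T})$, and $\rho(H|_{S\cap T})$) in order to turn the bounds on all $\binom{5}{r}$ sets of size $r$ into a bound on $\rho(H)=\rho(H|_{[5]})$. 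Summing over all $\binom{5}{r}$ choices of $S$ and counting incidences of each thick column should yield an inequality of the form $r(r+1)\rho(H)\ge 2rn\alpha-n(n-1)\beta$, which, together with the integrality $\rho(H)\in\mathbb{Z}_{\ge 0}$, gives the two nontrivial pieces of \eqref{eq:bound_rank_H_544}. The third piece $\rho(H)\ge 2\alpha-\beta$ for $\beta\le\alpha\le 4\beta/3$ should then follow from the baseline $\delta_j$-analysis of Section~\ref{sec:dimakis_via_dual}, since that regime is the near-MBR sliver where $r=n-1=4$ degenerates the averaging back into the cut-set bound.

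The main obstacle I anticipate is the middle step: bounding $\rho(H)$ from below by a weighted sum of the partial ranks $\{\rho(H|_S):|S|=r\}$. A direct submodularity application loses too much slack; the correct inequality must exploit the specific block structure of $H_{\text{repair}}$, most crucially the presence of identity blocks on the diagonal which force a large \emph{guaranteed} overlap in the column spaces of $H|_S$ and $H|_T$ whenever $S\cap T\ne\emptyset$. Getting the constants to line up so that the inequality rearranges exactly to $r(r+1)\rho(H)\ge 2rn\alpha-n(n-1)\beta$ — as opposed to a weaker version with sub-optimal coefficients — is the delicate part; once that combinatorial-rank inequality is established, the three piecewise formulas in \eqref{eq:bound_rank_H_544} drop out by a case-split on the ratio $\alpha/\beta$, with the ceilings merely recording the integrality of $\rho(H)$.
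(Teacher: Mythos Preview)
Your proposal has a genuine gap at exactly the place you flag as ``the main obstacle.'' You want to pass from local lower bounds $\rho(H|_S)\geq r\alpha-\binom{r}{2}\beta$ on size-$r$ subsets to a lower bound on $\rho(H|_{[5]})$ via a submodular-type inequality. But submodularity for column rank reads
\[
\rho(H|_{S\cup T}) \ \leq \ \rho(H|_S)+\rho(H|_T)-\rho(H|_{S\cap T}),
\]
which goes the wrong way: it gives only an \emph{upper} bound on $\rho(H)$ in terms of the partial ranks. To lower-bound $\rho(H|_{S\cup T})$ you would need an \emph{upper} bound on $\dim\big(\mathcal{S}(H|_S)\cap\mathcal{S}(H|_T)\big)$, so your intuition that the identity diagonal blocks ``force a large guaranteed overlap'' is pointed in the wrong direction; large overlap makes the lower bound on the union weaker, not stronger. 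Moreover, the local inequality $\rho(H|_S)\geq r\alpha-\binom{r}{2}\beta$ is nothing but the row-wise cut-set estimate of Sec.~\ref{sec:dimakis_via_dual} applied to $S$, and any symmetric averaging of cut-set bounds can only reproduce the cut-set bound globally. No amount of incidence counting over the $\binom{5}{r}$ subsets will extract the extra factor needed for $r(r+1)\rho(H)\geq 2rn\alpha-n(n-1)\beta$ without an additional structural ingredient that your sketch does not supply.

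The paper's argument is organized quite differently. Rather than averaging over subsets, it builds an explicit chain of matrices $H^{(5)}=H_{\text{repair}},\,H^{(4)},\,H^{(3)}$, where the thick columns of $H^{(t)}$ are bases for the intersections $\mathcal{S}(H^{(t+1)}_j)\cap\mathcal{S}(H^{(t+1)}|_{\{n-t,\ldots,j-1\}})$; automatically $\rho(H^{(5)})\geq\rho(H^{(4)})\geq\rho(H^{(3)})$. The row-wise estimate \eqref{eq:dimakis_proof_3a} is turned into an equality by introducing slack variables $\alpha_j\geq 0$, and a short linear-algebra lemma (Lem.~\ref{lem:544_intersections}) relates the diagonal and off-diagonal block ranks of $H^{(t)}$ to those of $H^{(t+1)}$. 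Feeding the chain inequalities back through this lemma yields a nontrivial lower bound on $\sum_j\alpha_j$, and hence on $\rho(H^{(5)})$, purely in terms of the $\rho(A^{(5)}_{i,j})$; substituting $\rho(A^{(5)}_{j,j})=\alpha$ and $\rho(A^{(5)}_{i,j})\leq\beta$ then gives \eqref{eq:proof_544_line1} from the single step $\rho(H^{(5)})\geq\rho(H^{(4)})$ and \eqref{eq:proof_544_line2} from the two-step chain. The third piece \eqref{eq:proof_544_en} does come for free from Sec.~\ref{sec:dimakis_via_dual}, as you say. If you want to salvage your plan, the object to study is not $\rho(H|_S)$ for varying $S$ but the dimensions of the \emph{iterated intersections} of thick-column spaces; that is precisely what the $H^{(t)}$ encode.
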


Note that $\alpha = \beta$ and $\alpha = 4\beta$ correspond to the MSR and MBR points respectively for the case of $(n=5,k=4,d=4)$. Next, we observe that for a fixed $\beta$, the bound given in \eqref{eq:bound_rank_H_544} corresponds to a piecewise linear curve with $\alpha$ on the $X$-axis and $\rho(H)$ on the $Y$-axis. Non-linear ceiling operation $\lceil.\rceil$ is used in \eqref{eq:bound_rank_H_544} to enforce integrality requirements on $\rho(H)$.  However, it may be removed considering that $\rho(H)$ always takes integer values. We can view \eqref{eq:bound_rank_H_544} as a combination of the following three inequalities without paying attention to the limited range of $\alpha$:
\bea
\rho(H) & \geq &  \frac{10(\alpha - \beta)}{3}  \label{eq:proof_544_line1}\\
\rho(H) & \geq & \frac{15\alpha - 10\beta}{6} \label{eq:proof_544_line2} \\
\rho(H) & \geq & 2\alpha - \beta \label{eq:proof_544_en}.
\eea
Here \eqref{eq:proof_544_en} follows from \eqref{eq:dimakis_proof_4} since $\alpha \geq \beta$ and $\left(\alpha - (j-1)\beta \right)^+\ge 0$ for $3 \leq j \leq 5$. Therefore, we need to prove only the remaining two inequalities \eqref{eq:proof_544_line1} and \eqref{eq:proof_544_line2} to complete the proof of Thm.~\ref{thm:544}. We proceed to prove them by obtaining two lower bounds to the incremental thick-column-rank of $H$ that are stronger than what is given in \eqref{eq:dimakis_proof_3}. To make this point clear upfront, a comparison of the bounds in \eqref{eq:dimakis_proof_4} and \eqref{eq:bound_rank_H_544} is shown in Fig.~\ref{fig:544_rankH_comparison}. 

\begin{figure}[h]
	\centering
	\includegraphics[width=4.5in]{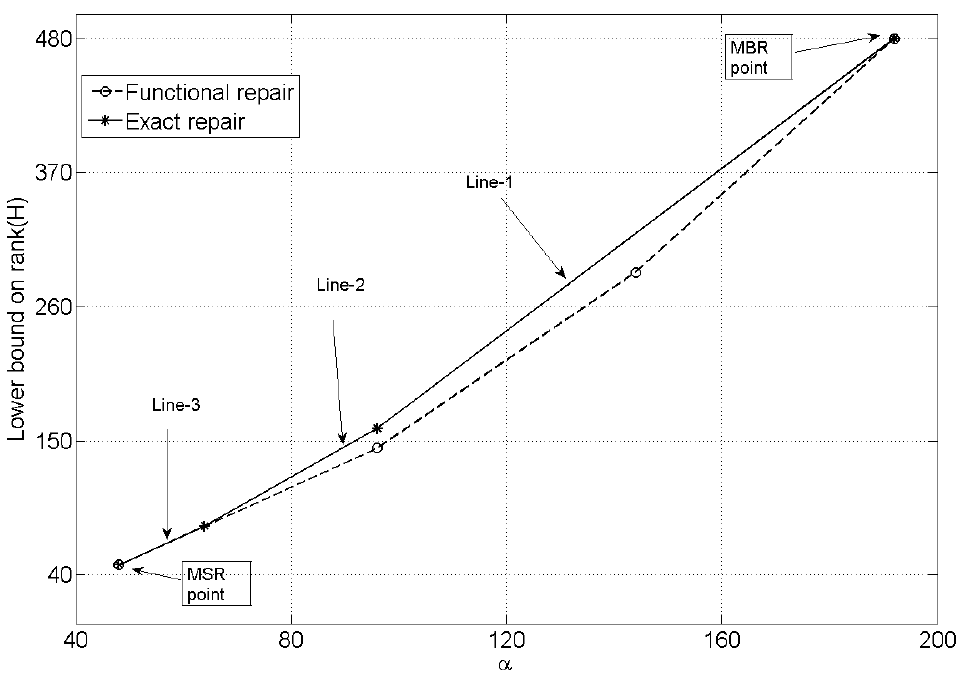}
	\caption{Comparison of the lower bounds on $\rho(H)$ as function of $\alpha$, for the case of $(n = 5, k = 4, d = 4)$ with $\beta = 48$. The dashed and the solid lines correspond to the cases of functional and exact repairs, respectively. See  \eqref{eq:dimakis_proof_4} and \eqref{eq:bound_rank_H_544} for the corresponding equations. Here, lines $1$, $2$ and $3$ are given by \eqref{eq:proof_544_line1}, \eqref{eq:proof_544_line2} and \eqref{eq:proof_544_en} respectively.} 
	\label{fig:544_rankH_comparison}
\end{figure}

\subsection{Proof of Theorem~\ref{thm:544} \label{sec:proof_thm_433}}

We begin with setting up some notation. For any matrix $B$ over $\mathbb{F}$, we denote by $\mathcal{S}(B)$ the column space of $B$. Note that $\rho(B)$ is the same as the dimension of the vector space $\mathcal{S}(B)$. Next, we define $H^{(5)} = H_{repair}$, where $H_{repair}$ is as defined in \eqref{eq:Hrepair}. Let the matrix $H^{(5)}_j$ denote the $j^{\text{th}}$ thick column of $H^{(5)}, 1 \leq j \leq 5$, i.e., $H^{(5)} = [H^{(5)}_1 \ H^{(5)}_2  H^{(5)}_3 \ H^{(5)}_4 \ H^{(5)}_5 ]$. Next, we define matrices $H^{(4)}_j,  2 \leq j \leq 5$ such that the columns of $H^{(4)}_j$ form a basis for the vector space $\mathcal{S}\left(H^{(5)}_j\right) \cap \mathcal{S}\left(H^{(5)}|_{[j-1]}\right)$. Next, we define $H^{(4)}$ as
\bean
H^{(4)} = [H^{(4)}_2  H^{(4)}_3 \ H^{(4)}_4 \ H^{(4)}_5 ]. 
\eean
For convenience of notation, we have used $H^{(4)}_2$ to denote the first thick column of $H^{(4)}$. Similarly, $H^{(3)}$ is obtained from $H^{(4)}$, where columns of $H^{(3)}_j$ form a basis for $\mathcal{S}\left(H^{(4)}_j\right) \cap \mathcal{S}\left(H^{(4)}|_{\{2,\ldots,j-1\}}\right)$:
\bean
H^{(3)} = [H^{(3)}_3 \ H^{(3)}_4 \ H^{(3)}_5 ] .
\eean
Let $A_{i,j}^{(\ell)}$ denote the  $i^{th}$ thick row of $H^{(\ell)}_j$. An illustration of the block-matrix representations of $H^{(5)}, H^{(4)}$ and $H^{(3)}$ is given in Fig.~\ref{fig:matrices_544}. 

The key idea in the proof lies on the observation that $\rho(H^{(5)}) \geq \rho(H^{(4)}) \geq \rho(H^{(3)})$. We will show that \eqref{eq:proof_544_line1} and \eqref{eq:proof_544_line2} are necessary conditions respectively for $\rho(H^{(5)}) \geq \rho(H^{(4)})$ and  $\rho(H^{(5)}) \geq \rho(H^{(4)}) \geq \rho(H^{(3)})$ to be true. The following remark underlines an important property of $\rho\left(A^{(\ell)}_{j,j}\right)$ preserved in the construction of $H^{(\ell)}_j$.
\begin{note}\label{rem:full_column_rank_diag_sub_matrix} The sub-matrices $A^{(\ell)}_{j,j}, 3 \leq \ell \leq 5, \ 	5-\ell+1 \leq j \leq 5$ have full column rank, and $\rho\left(A^{(\ell)}_{j,j}\right) = \rho\left(H^{(\ell)}_j\right)$.
\end{note}
\begin{figure}[h]
	\centering
	\includegraphics[width=6in]{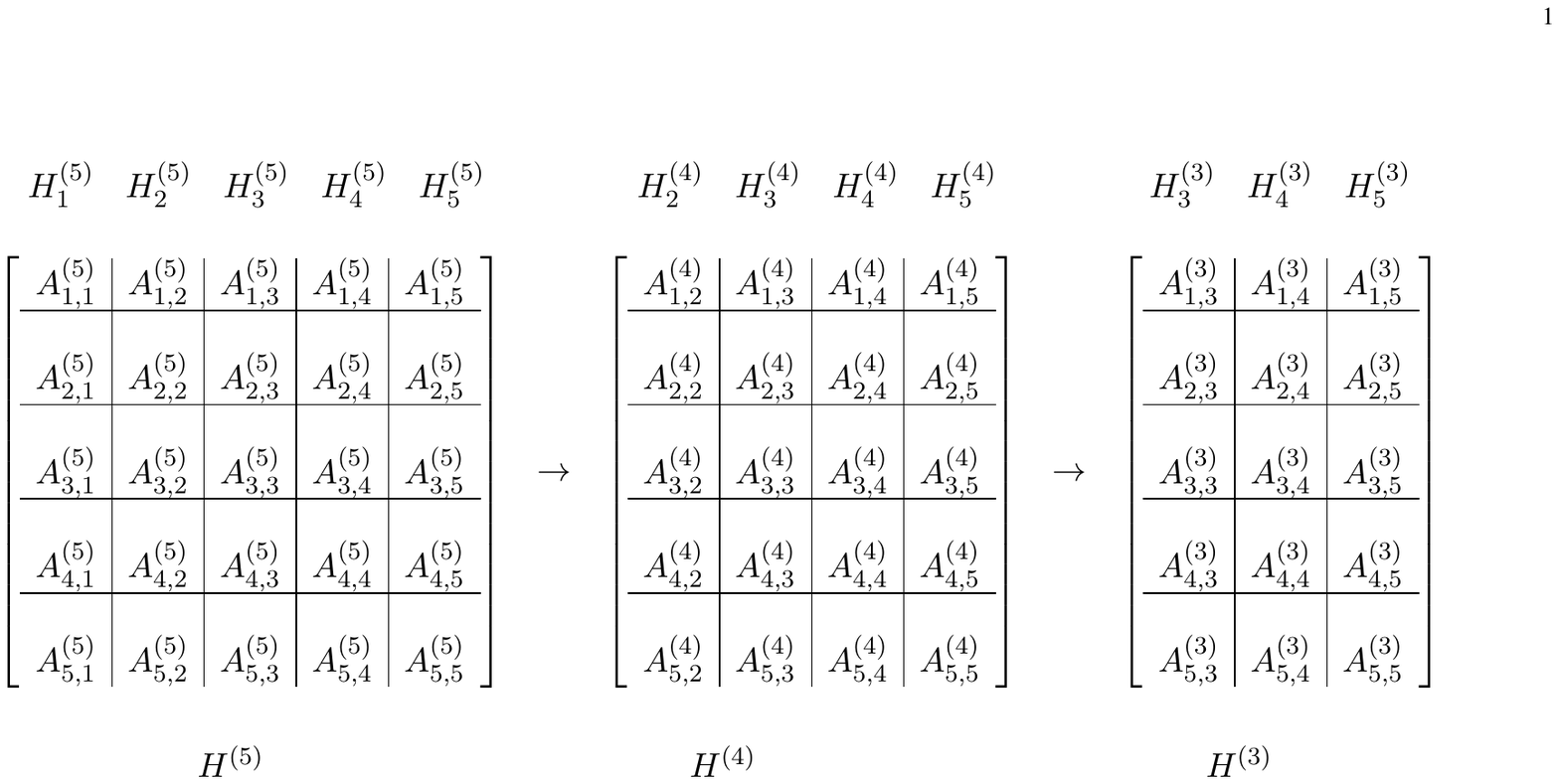}
	\caption{The matrices $H^{(5)}, H^{(4)}$ and $H^{(3)}$, and the associated block submatrix representations for the case  $n=5$. The matrix $H^{(5)} = H_{repair}$, $H^{(4)}$ is defined based on $H^{(5)}$, and $H^{(3)}$ is defined based on $H^{(4)}$.}
	\label{fig:matrices_544}
\end{figure}

\subsection{Proof of \eqref{eq:proof_544_line1}\label{sec:proof_bound1_544}} 

We will be using the rank comparison $\rho(H^{(5)})  \geq  \rho(H^{(4)})$ to prove \eqref{eq:proof_544_line1}. It follows from \eqref{eq:proof_dimakis_1}, \eqref{eq:dimakis_proof_3a} and \eqref{eq:proof_dimakis_rkH_delta_eq} that
\begin{eqnarray}
\rho\left( H^{(5)} \right) & \geq & \rho\left(A^{(5)}_{1,1}\right) \ + \  \sum_{j=2}^{5} \left\{ \left(\rho\left(A_{j,j}^{(5)}\right) - \sum_{\ell=1}^{j-1}\rho\left(A_{j,\ell}^{(5)}\right)\right)^+  \right\} \label{eq:544_H5_rank_ineq}.
\end{eqnarray}
We introduce slack variables $\{\alpha_j, 2 \leq j \leq 5\}$ that take non-negative integer values to convert \eqref{eq:dimakis_proof_3a} into equalities i.e.,

\begin{eqnarray}\label{eq:delta_equality_544}
\delta_j & = & \left(\rho\left(A_{j,j}^{(5)}\right) - \sum_{\ell=1}^{j-1}\rho\left(A_{j,\ell}^{(5)}\right)\right)^+  +   \alpha_j, \ 2 \leq j \leq 5.
\end{eqnarray}
Hence, using \eqref{eq:proof_dimakis_rkH_delta_eq} we have:
\bea
\rho\left( H^{(5)} \right) & = & \rho\left(A^{(5)}_{1,1}\right) \ + \  \sum_{j=2}^{5} \left\{ \left(\rho\left(A_{j,j}^{(5)}\right) - \sum_{\ell=1}^{j-1}\rho\left(A_{j,\ell}^{(5)}\right)\right)^+  +   \alpha_j \right\}. \label{eq:544_H5_rank_eq}
\eea

$\rho\left( H^{(4)} \right)$ can be bounded from below quite similar to \eqref{eq:544_H5_rank_ineq} (see Remark \ref{rem:full_column_rank_diag_sub_matrix} also) to obtain
\begin{eqnarray}
\rho\left( H^{(4)} \right)  & \geq & \rho\left(A^{(4)}_{2,2}\right)  +   \sum_{j=3}^{5} \left\{ \left(\rho\left(A_{j,j}^{(4)}\right) - \sum_{\ell=2}^{j-1}\rho\left(A_{j,\ell}^{(4)}\right)\right)^+  \right\} \label{eq:boundH4_544}.
\end{eqnarray}
Our aim at first is to find a lower bound for $\sum_{j=2}^{5}\alpha_j$. The analysis in Sec.~\ref{sec:dimakis_via_dual} in fact works with the trivial lower bound $\sum_{j=2}^{5}\alpha_j\geq 0$. But here, we substitute \eqref{eq:544_H5_rank_eq} and \eqref{eq:boundH4_544} in 
\bean
\rho(H^{(5)}) & \geq & \rho(H^{(4)})
\eean
to obtain a much tighter lower bound for $\sum_{j=2}^{5}\alpha_j$. Using this tighter bound in \eqref{eq:544_H5_rank_eq}, we will obtain a lower bound for $\rho\left( H^{(5)} \right)$ in terms of $\{ \rho\left(A^{(5)}_{i,j}\right), \rho\left(A^{(4)}_{i,j}\right)\}$. We know that the terms $\{ \rho\left(A^{(5)}_{i,j} \right)\}$ can be expressed in terms of $\alpha$ and $\beta$. In the following Lem.~\ref{lem:544_intersections}, we show how  $\left\{\rho\left(A^{(4)}_{i,j} \right)\right\}$ can be expressed in terms of $\left\{\rho\left(A^{(5)}_{i,j} \right)\right\}$. Finally, all the terms involve $\{ \rho\left(A^{(5)}_{i,j} \right)\}$, and this will lead to the proof of \eqref{eq:proof_544_line1}. 

\begin{lem} \label{lem:544_intersections} The following statements hold:
\begin{enumerate}[a)] \item \begin{eqnarray}
		\rho\left(A^{(4)}_{j,j} \right) & = & \rho\left(A^{(5)}_{j,j} \right) \ - \ \left\{ \left(\rho\left(A_{j,j}^{(5)}\right) - \sum_{\ell=1}^{j-1}\rho\left(A_{j,\ell}^{(5)}\right)\right)^+  +   \alpha_j\right\}, \ 2 \leq j \leq 5.\label{eq:544_key_lemma1}
		\end{eqnarray}
\item \begin{eqnarray}
		\sum_{\ell=2}^{j-1} \rho\left(A^{(4)}_{j,\ell} \right) & \leq & \sum_{\ell=1}^{j-1}\rho\left(A^{(5)}_{j,\ell} \right) \ - \ \rho\left(A^{(4)}_{j,j} \right), \ 3 \leq j \leq 5.\label{eq:544_key_lemma2}
\end{eqnarray}
\end{enumerate}
\end{lem}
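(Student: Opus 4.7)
My plan is to exploit the two simultaneous representations of each thick column $H^{(4)}_\ell$ that follow from its defining property: since the columns of $H^{(4)}_\ell$ form a basis for $\mathcal{S}(H^{(5)}_\ell) \cap \mathcal{S}(H^{(5)}|_{[\ell-1]})$, we can write both $H^{(4)}_\ell = H^{(5)}_\ell\, C_\ell$ and $H^{(4)}_\ell = H^{(5)}|_{[\ell-1]}\, D_\ell$, where $D_\ell$ is block-partitioned conformally with the thick-row structure of $H^{(5)}$ as $D_\ell^{(1)}, \ldots, D_\ell^{(\ell-1)}$. Restricting to the $j$-th thick row yields the twin identities $A^{(4)}_{j,\ell} = A^{(5)}_{j,\ell}\, C_\ell$ and $A^{(4)}_{j,\ell} = \sum_{m=1}^{\ell-1} A^{(5)}_{j,m}\, D_\ell^{(m)}$, which will drive both parts of the lemma.

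For part (a), I would apply the first identity at $\ell = j$: since $A^{(5)}_{j,j} = I_\alpha$ by construction of $H_{\text{repair}}$, it follows that $A^{(4)}_{j,j} = C_j$, and because $H^{(5)}_j$ has full column rank we have $\rho(C_j) = \rho(H^{(4)}_j)$. By definition $\rho(H^{(4)}_j)$ equals $\dim(\mathcal{S}(H^{(5)}_j) \cap \mathcal{S}(H^{(5)}|_{[j-1]}))$, which by the subspace-intersection formula simplifies to $\rho(A^{(5)}_{j,j}) + \rho(H^{(5)}|_{[j-1]}) - \rho(H^{(5)}|_{[j]}) = \rho(A^{(5)}_{j,j}) - \delta_j$. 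Substituting the value of $\delta_j$ given in \eqref{eq:delta_equality_544} yields \eqref{eq:544_key_lemma1}.

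For part (b), I would set $V_m := \mathcal{S}(A^{(5)}_{j,m})$ and $U_m := V_1 + \cdots + V_m$. For each $\ell \in \{2, \ldots, j-1\}$, the two identities above together force $\mathcal{S}(A^{(4)}_{j,\ell}) \subseteq V_\ell \cap U_{\ell-1}$; for $\ell = j$ the first identity is uninformative (since $A^{(5)}_{j,j} = I_\alpha$ returns the entire ambient space $\mathbb{F}^\alpha$), while the second still gives $\mathcal{S}(A^{(4)}_{j,j}) \subseteq U_{j-1}$. Applying the modular identity $\dim(V_\ell \cap U_{\ell-1}) = \dim V_\ell + \dim U_{\ell-1} - \dim U_\ell$ and telescoping over $\ell = 2, \ldots, j-1$ collapses the resulting sum to $\sum_{\ell=1}^{j-1} \dim V_\ell - \dim U_{j-1}$. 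Adding the separate bound $\rho(A^{(4)}_{j,j}) \leq \dim U_{j-1}$ cancels the last term, and rearranging produces \eqref{eq:544_key_lemma2}.

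The main conceptual obstacle is recognizing that using either representation of $H^{(4)}_\ell$ in isolation gives a bound that is too loose by a significant amount; the two representations must be intersected in $\mathbb{F}^\alpha$ for the bookkeeping to close. A sanity check at $j = 3$ makes the deficit of the one-sided bound visible, and once the intersection is in place the telescoping identity takes care of the rest.
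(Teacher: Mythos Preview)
Your proposal is correct and follows essentially the same route as the paper: part (a) uses the subspace dimension formula together with \eqref{eq:delta_equality_544} exactly as the paper does, and part (b) rests on the same key inclusion $\mathcal{S}(A^{(4)}_{j,\ell}) \subseteq \mathcal{S}(A^{(5)}_{j,\ell}) \cap \sum_{m<\ell}\mathcal{S}(A^{(5)}_{j,m})$ followed by the modular identity and a telescoping sum. The only cosmetic difference is that the paper organizes the telescoping as a recursive upper bound on $\dim U_{j-1}$ while you sum the per-$\ell$ bounds directly, but the underlying argument is identical.
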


\bpf The proof is relegated to Appendix~\ref{app:pf_lem_544}.
\epf

By making use of Lem.~\ref{lem:544_intersections}, we first obtain a lower bound on $\sum_{j=2}^{5}\alpha_j$, and subsequently a lower bound on $\rho\left( H^{(5)} \right)$ all in terms of  $\{ \rho\left(A^{(5)}_{i,j} \right)\}$:
\begin{eqnarray}
\sum_{j=2}^{5}\alpha_j & \geq& \frac{1}{3}\left\{ -\rho\left(A^{(5)}_{1,1}\right) + \rho\left(A^{(5)}_{2,2}\right)  +
2\sum_{j=3}^{5}\rho\left(A^{(5)}_{j,j}\right) -\right. \nonumber\\
&&\left. \left[2\left( \rho\left(A^{(5)}_{2,2}\right) - \rho\left(A^{(5)}_{2,1}\right)\right)^+ + 3\sum_{j=3}^{5}\left( \rho\left(A^{(5)}_{j,j}\right) - \sum_{\ell=1}^{j-1}\rho\left(A^{(5)}_{j,\ell}\right)\right)^+ + \sum_{j=3}^{5} \sum_{\ell=1}^{j-1}\rho\left(A^{(5)}_{j,\ell}\right)\right]
\right\}\label{eq:544_alpha_bound1} \\
\label{eq:544_bound1} \rho\left( H^{(5)} \right) & \geq  & \frac{1}{3}\left\{ 2\sum_{j=1}^{5}\rho\left(A^{(5)}_{j,j}\right) - \sum_{j=2}^{5} \sum_{\ell=1}^{j-1}\rho\left(A^{(5)}_{j,\ell}\right)\right\}.
\end{eqnarray}
Finally, we apply $\rho\left(A^{(5)}_{j,j}\right) = \alpha, 1 \leq j \leq 5$ and $\rho\left(A^{(5)}_{i,j}\right) \leq \beta, 1 \leq i, j \leq 5, \ i \neq j$ to complete the proof of \eqref{eq:proof_544_line1}.

\subsection{Proof of \eqref{eq:proof_544_line2}\label{sec:proof_bound2_544}}

While proving \eqref{eq:proof_544_line1}, we leveraged upon the inequality
\bean
\rho(H^{(5)}) & \geq & \rho(H^{(4)}) .
\eean
Here, we will make use of the chain 
\bean
\rho(H^{(5)}) & \geq & \rho(H^{(4)})\  \geq \ \rho(H^{(3)}),
\eean
to prove \eqref{eq:proof_544_line2}. First, we consider $\rho(H^{(4)}) \geq \rho(H^{(3)})$ and obtain a lower bound on $\rho(H^{(4)})$. This is carried out precisely the same way as how we obtained the lower bound \eqref{eq:544_bound1} on $\rho(H^{(5)})$. The only change required will be to adapt Lem.~\ref{lem:544_intersections} to express $\{A_{i, j}^{(4)}\}$ in terms of $\{A_{i, j}^{(3)}\}$. Thus we obtain that 
\bea \label{eq:544_bound2a}
\rho\left( H^{(4)} \right) & \geq & \frac{1}{3}\left\{ 2\sum_{j=2}^{5}\rho\left(A^{(4)}_{j,j}\right) - \sum_{j=3}^{5} \sum_{\ell=2}^{j-1}\rho\left(A^{(4)}_{j,\ell}\right)\right\}.
\eea
Observe that \eqref{eq:544_bound2a} is same as \eqref{eq:544_bound1} except for that $\{A_{i, j}^{(5)}\}$ are replaced with $\{A_{i, j}^{(4)}\}$. The limits of the summation are also modified accordingly. 

We next consider the inequality $\rho(H^{(5)}) \geq \rho(H^{(4)})$ where $\rho(H^{(4)})$ is lower bounded as in \eqref{eq:544_bound2a} and  $\rho(H^{(5)})$ is equated using \eqref{eq:544_H5_rank_eq}. It follows that 
\begin{eqnarray}
 \rho\left(A^{(5)}_{1,1}\right)  +  \sum_{j=2}^{5} \left\{ \left(\rho\left(A_{j,j}^{(5)}\right) - 
\sum_{\ell=1}^{j-1}\rho\left(A_{j,\ell}^{(5)}\right)\right)^+ + \alpha_j \right\}  \geq  \frac{1}{3}\left\{ 2\sum_{j=2}^{5}\rho\left(A^{(4)}_{j,j}\right) - \sum_{j=3}^{5} \sum_{\ell=2}^{j-1}\rho\left(A^{(4)}_{j,\ell}\right)\right\}. \label{eq:long_544_2}
\end{eqnarray}
After invoking Lem.~\ref{lem:544_intersections}, we obtain the lower bound: 
\begin{eqnarray}
\sum_{j=2}^{5}\alpha_j & \geq & \frac{1}{6}\left\{ -3\rho\left(A^{(5)}_{1,1}\right) + 3\sum_{j=2}^{5}\rho\left(A^{(5)}_{j,j}\right) - \right.\nonumber\\
&&\left. \left[6\sum_{j=2}^{5}\left( \rho\left(A^{(5)}_{j,j}\right) - \sum_{\ell=1}^{j-1}\rho\left(A^{(5)}_{j,\ell}\right)\right)^+ + \sum_{j=2}^{5} \sum_{\ell=1}^{j-1}\rho\left(A^{(5)}_{j,\ell}\right)\right]
\right\}\label{eq:544_alpha_bound2} .
\end{eqnarray}
Substituting \eqref{eq:544_alpha_bound2} back in \eqref{eq:544_H5_rank_eq}, we obtain the following lower bound on $\rho(H^{(5)})$: 
\bea \label{eq:544_bound2}
\rho\left( H^{(5)} \right) & \geq & \frac{1}{6}\left\{ 3\sum_{j=1}^{5}\rho\left(A^{(5)}_{j,j}\right) - \sum_{j=2}^{5} \sum_{\ell=1}^{j-1}\rho\left(A^{(5)}_{j,\ell}\right)\right\}.
\eea
Finally, we apply $\rho\left(A^{(5)}_{j,j}\right) = \alpha, 1 \leq j \leq 5$ and $\rho\left(A^{(5)}_{i,j}\right) \leq \beta, 1 \leq i, j \leq 5, \ i \neq j$ on \eqref{eq:544_bound2} to complete the proof of \eqref{eq:proof_544_line2}.

\section{An Upper Bound On The File Size Of Linear ER Codes for General $(n,k=n-1,d=n-1)$ \label{sec:main_proof}}

In this section, we generalize the result proved for $(n=5,k=4,d=4)$ in  Sec.~\ref{sec:544} to $(n,k=n-1,d=n-1)$. We will only provide a sketch of the proofs, as the techniques remain the same as those presented in Sec.~\ref{sec:544} (see \cite{PraKri_arxiv} for details). Again, the upper bound on the file size is a direct corollary of a lower bound on rank($H$) and the bound is achievable using layered codes (see Sec.~\ref{sec:ach_lin}). In the following theorem, a lower bound on rank($H$) is established.

\begin{thm} \label{thm:rankH_new_bound_k_eq_d} Consider an ER linear regenerating code $\mathcal{C}_{\text{lin}}$ with full-parameter set $\{(n, k=n-1, d=n-1), (\alpha, \beta)\}$ with $n \geq 4$. Let $H$ denote a parity-check matrix of $\mathcal{C}_{\text{lin}}$. Then
\begin{eqnarray} \label{eq:bound_rank_H_gen}
\text{rank}(H) & \geq & \left \{ \begin{array}{cl} \left \lceil \frac{2rn\alpha - n(n-1)\beta}{r^2+r}\right \rceil, & \frac{d\beta}{r} \leq \alpha \leq \frac{d\beta}{r-1}, \ \ 2 \leq r \leq n - 2 \\ 
2\alpha - \beta, & \frac{d\beta}{n-1} \leq \alpha \leq \frac{d\beta}{n-2} \end{array} \right. .
\end{eqnarray}
\end{thm}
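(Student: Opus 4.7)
My plan is to generalize the dual-code argument of Section~\ref{sec:544} by constructing, for each target chain length $r$ with $2 \le r \le n-1$, a descending family $H^{(n)}, H^{(n-1)}, \ldots, H^{(n-r+1)}$ where $H^{(n)} := H_{\text{repair}}$ and, for each $\ell < n$, the thick columns of $H^{(\ell)}$ are indexed by $\{n-\ell+1, \ldots, n\}$, with the columns of $H^{(\ell)}_j$ chosen as a basis for $\mathcal{S}(H^{(\ell+1)}_j) \cap \mathcal{S}(H^{(\ell+1)}|_{[j-1]})$. As in Remark~\ref{rem:full_column_rank_diag_sub_matrix}, the diagonal blocks $A^{(\ell)}_{j,j}$ retain full column rank equal to $\rho(H^{(\ell)}_j)$. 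Since each column of $H^{(\ell)}$ already lies in $\mathcal{S}(H^{(\ell+1)})$, one obtains the nested inclusions $\mathcal{S}(H^{(n-r+1)}) \subseteq \cdots \subseteq \mathcal{S}(H^{(n)})$ and hence the fundamental chain $\rho(H^{(n)}) \ge \rho(H^{(n-1)}) \ge \cdots \ge \rho(H^{(n-r+1)})$, which I will exploit to propagate tighter rank lower bounds up the chain.

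The technical core is a straightforward generalization of Lemma~\ref{lem:544_intersections}. For every $\ell \in \{3, \ldots, n\}$ and $j \in \{n-\ell+2, \ldots, n\}$, I claim
\begin{equation*}
\rho\!\left(A^{(\ell-1)}_{j,j}\right) \;=\; \rho\!\left(A^{(\ell)}_{j,j}\right) \;-\; \left\{\left(\rho(A^{(\ell)}_{j,j}) - \sum_{k<j}\rho(A^{(\ell)}_{j,k})\right)^{\!+} + \alpha_j^{(\ell)}\right\},
\end{equation*}
where $\alpha_j^{(\ell)} \ge 0$ are slack variables that convert the basic incremental-rank bounds into equalities, and
\begin{equation*}
\sum_{n-\ell+2\, \le \,k \,<\, j}\rho\!\left(A^{(\ell-1)}_{j,k}\right) \;\le\; \sum_{k<j}\rho\!\left(A^{(\ell)}_{j,k}\right) \;-\; \rho\!\left(A^{(\ell-1)}_{j,j}\right).
\end{equation*}
These follow by the same dimension-counting arguments as in Appendix~\ref{app:pf_lem_544}, with only the index ranges shifted. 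Armed with them, I would establish by induction on $r$ the uniform estimate
\begin{equation*}
\rho(H^{(n)}) \;\ge\; \frac{2}{r(r+1)}\left[\,r\sum_{j=1}^n \rho(A^{(n)}_{j,j}) \;-\; \sum_{1 \,\le\, k \,<\, j \,\le\, n}\rho(A^{(n)}_{j,k})\,\right].
\end{equation*}
The base case $r=2$ is the $n$-variable analogue of \eqref{eq:544_bound1}. For the step, assume the estimate holds at level $\ell-1$ with chain length $r-1$; use the generalized Lemma to re-express every level-$(\ell-1)$ quantity in terms of level-$\ell$ ones together with slacks $\alpha_j^{(\ell)}$, and then apply $\rho(H^{(\ell)}) \ge \rho(H^{(\ell-1)})$ to extract a strengthened lower bound on $\sum_j \alpha_j^{(\ell)}$. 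Substitution back into the exact identity $\rho(H^{(\ell)}) = \rho(A^{(\ell)}_{n-\ell+1,n-\ell+1}) + \sum_{j>n-\ell+1}\delta_j^{(\ell)}$ and algebraic simplification converts the coefficient $\frac{2}{(r-1)r}$ into $\frac{2}{r(r+1)}$.

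Substituting $\rho(A^{(n)}_{j,j}) = \alpha$ and $\rho(A^{(n)}_{j,k}) \le \beta$ (from Lemmas~\ref{lem:H_datacollection} and~\ref{lem:H_repair}) collapses the uniform estimate to $\rho(H) \ge (2rn\alpha - n(n-1)\beta)/(r^2+r)$; the ceiling is appended because $\rho(H)$ is an integer. The range $\frac{d\beta}{r} \le \alpha \le \frac{d\beta}{r-1}$ is precisely where this $r$-th bound dominates the others in the family, and where the positive-part operators in the incremental-rank formulas remain non-degenerate. The MBR regime $\frac{d\beta}{n-1} \le \alpha \le \frac{d\beta}{n-2}$ may be handled either by pushing the same chain argument to $r=n-1$ (the formula collapses to exactly $2\alpha - \beta$) or directly from the FR-style bound \eqref{eq:dimakis_proof_4} of Section~\ref{sec:dimakis_via_dual}. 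The main obstacle is the combinatorial bookkeeping in the inductive step: one must carefully track how the slack variables $\{\alpha_j^{(\ell)}\}$ at successive levels combine, and verify that the positive-part operators are indeed active throughout the $\alpha$-regime in question, so that the derived equalities (rather than strict inequalities) propagate faithfully through the chain.
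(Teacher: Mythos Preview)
Your proposal is correct and follows essentially the same route as the paper: construct the descending family $H^{(n)},\ldots,H^{(3)}$, generalize Lemma~\ref{lem:544_intersections} to arbitrary levels (the paper's Lemma~\ref{lem:nkk_intersections}), and prove by induction the uniform estimate $\rho(H^{(t)}) \ge \frac{2}{(s+1)(s+2)}\bigl[(s+1)\sum_j\rho(A^{(t)}_{j,j}) - \sum_{j,\ell}\rho(A^{(t)}_{j,\ell})\bigr]$ (the paper's Theorem~\ref{thm:nkk_rankviainduction}), which at $t=n$, $s=r-1$ and with the substitutions from Lemmas~\ref{lem:H_datacollection} and~\ref{lem:H_repair} yields \eqref{eq:general_bound_to_prove}. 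Two small remarks: first, your induction hypothesis must be stated for \emph{all} levels $t$ (not just $t=n$), since the step from $r-1$ to $r$ uses the $(r-1)$-bound applied to $H^{(n-1)}$ rather than to $H^{(n)}$; second, your worry about positive-part operators being ``active'' is unnecessary---each bound in \eqref{eq:general_bound_to_prove} holds for every $\alpha$, and the stated $\alpha$-ranges merely record where that particular $r$ dominates the family.
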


The corresponding theorem Thm.\ref{thm:544} for $(n=5,k=4,d=4)$ established that $\textsl{rank}(H)$ is lower bounded by a piecewise linear curve determined by $3$ inequalities. Here, we show such a behavior exists in general i.e., rank($H$) can be lower bounded by a piecewise linear curve determined by $(n-2)$ inequalities. The last inequality
\bea
\text{rank}(H) & \geq & 2\alpha - \beta \label{eq:proof_k_eq_d_eq_nminus1_eq1},
\eea
is already established by \eqref{eq:dimakis_proof_4}, since $\alpha\ge \beta$ and $\left(\alpha - (j-1)\beta \right)^+\ge 0$ for $3\leq j\leq n$. Therefore to complete the proof, it remains to prove the following $(n-3)$ bounds on $\textsl{rank}(H)$, ignoring the range of $\alpha$:
\begin{equation} 
\text{rank}(H)   \geq   \frac{2rn\alpha - n(n-1)\beta}{r^2+r}, \label{eq:general_bound_to_prove}
\end{equation}
parameterized by $2 \leq r \leq n-2$. We will set up some notations, and introduce a key lemma that are essential in describing a sketch of the proof.

\subsection{Notations and a Key Lemma \label{sec:proof_not}}

\subsubsection{The Matrices $\{H^{(t)}, \ 3 \leq t \leq n\}$} For any matrix $M$ over $\mathbb{F}$, we carry over the notation $\rho(M)$, $\mathcal{S}(M)$ from Sec.~\ref{sec:proof_thm_433}. Quite similar to the definition of $H^{(5)}$ in Sec.~\ref{sec:proof_thm_433}, we define $H^{(n)} = H_{repair}$, where $H_{repair}$ is as defined by Lem.~\ref{lem:H_repair}. We denote by $H^{(n)}_j$ the $j^{\text{th}}$ thick column of $H^{(n)}, 1 \leq j \leq n$, i.e.,
\bean
H^{(n)} = [H^{(n)}_1 \ H^{(n)}_2  \ \ldots \  H^{(n)}_n].
\eean
Next, we define the matrices $H^{(t)}, 3 \leq t \leq n-1$ in an iterative manner as follows: 

\vspace{0.1in}
\begin{enumerate}[Step 1.]
\item Let $ t=n-1$.
\item  Define the matrices $H^{(t)}_j,  n-t+1 \leq j \leq n$, such that the columns of $H^{(t)}_j$ form a basis for the vector space $\mathcal{S}\left(H^{(t+1)}_j\right) \cap \mathcal{S}\left(H^{(t+1)}|_{\{n-t, n-t+1, \ldots, j-1\}}\right)$.
\item  Define the matrix $H^{(t)}$ as
\begin{eqnarray} \label{eq:proofgen_Htdef}
H^{(t)} = [H^{(t)}_{n-t+1}  \ H^{(t)}_{n-t+2} \ \ldots \ H^{(t)}_n].
\end{eqnarray}
\item  If $t \geq 4$, decrement $t$ by $1$ and go back to Step $2$. 
\end{enumerate}
\vspace{0.1in}
Clearly, the ranks of the matrices $H^{(t)}, 3 \leq t \leq n$ are ordered as
\begin{eqnarray} \label{eq:proofgen_rankorderHs}
\rho(H^{(t)})  & \geq & \rho(H^{(t-1)}), \ 4 \leq t \leq n.
\end{eqnarray}
We use the notation $H_j^{(t)}, n-t+1 \leq j \leq n$ to refer to the $j^{\text{th}}$ thick column of the matrix $H^{(t)}$. While every thick column of $H^{(n)}$ has exactly $\alpha$ thin columns, thick columns of $H^{(t)}, 3 \leq t \leq n-1$ need not have the same number of thin columns. We point out for clarity that the thick columns of the matrix $H^{(t)}$ are indexed using $\{n-t+1, \ldots, n\}$. We have avoided $\{1, \ldots, t\}$ for the convenience of notation.

\subsubsection{Block Matrix Representation of the Matrix $H^{(t)}$ \label{sec:bsm}}

Since $H^{(n)} = H_{repair}$, it has a block matrix representation as given in \eqref{eq:Hrepair}. We write in short-hand
\begin{eqnarray}
H^{(n)} & = & \left( A^{(n)}_{i,j} , 1 \leq i, j \leq n \ \right ),
\end{eqnarray}
where $A^{(n)}_{i,i} = I_{\alpha}, 1 \leq i \leq n$. We introduce block matrix representations for $H^{(t)}, 3 \leq t \leq n-1$ as
\begin{eqnarray}
H^{(t)} & = & \left( A^{(t)}_{i,j} , 1 \leq i \leq n, n-t+1 \leq j \leq n \ \right ),
\end{eqnarray}
where  $A^{(t)}_{i,j}$ is an $\alpha \times \rho(H^{(t)}_j)$ matrix over $\mathbb{F}$ such that 
\begin{eqnarray}
\mathcal{S}\left(A^{(t)}_{i,j} \right) & \subseteq & \mathcal{S}\left(A^{(t+1)}_{i,j} \right) \ \bigcap \ \sum_{\ell=n-t}^{j-1}\mathcal{S}\left(A^{(t+1)}_{i,\ell}\right).  \label{eq:proofgen_1}
\end{eqnarray}
Note that \eqref{eq:proofgen_1} is a direct consequence of our definition of the matrix $H^{(t)}$. Having set up the notation, we introduce the key lemma that establishes the relations among the ranks of the sub-matrices of $\{H^{(t)}\}$. The lemma is similar in spirit to Lem.~\ref{lem:544_intersections}, and its proof is omitted here.

\begin{lem} \label{lem:nkk_intersections}
\begin{enumerate}[a)]
\item For any $t, j$ such that $3 \leq t \leq n$ and $n-t+1 \leq j \leq n$, we have
\begin{eqnarray} 
\rho\left(H^{(t)}_j\right)  & =  & \rho\left( A^{(t)}_{j,j}\right). \label{eq:proofgen_2}
\end{eqnarray}
\item For any $t, j$ such that $3 \leq t \leq n-1$ and $n-t+1 \leq j \leq n$, we have
\begin{eqnarray}
\rho\left(A^{(t)}_{j,j} \right) & = & \rho\left(A^{(t+1)}_{j,j} \right) \ - \ \left\{ \rho\left(H^{(t+1)}|_{\{n-t, \ldots, j\}}\right) - \rho\left(H^{(t+1)}|_{\{n-t, \ldots, j-1\}}\right) \right\}. \label{eq:proofgen_3}
\end{eqnarray}
\item For any $t, j$ such that $3 \leq t \leq n-1$ and $n-t+2 \leq j \leq n$, we have 
\begin{equation}
\rho\left(A^{(t)}_{j,j} \right) + \sum_{\ell=n-t+1}^{j-1} \rho\left(A^{(t)}_{j,\ell} \right)  \leq  \sum_{\ell=n-t}^{j-1}\rho\left(A^{(t+1)}_{j,\ell} \right) . \label{eq:proofgen_4}
\end{equation}
\end{enumerate}
\end{lem}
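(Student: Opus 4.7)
The plan is to prove the three parts in order, since parts (b) and (c) both rely on the full-column-rank property established in part (a).

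For part (a), I will perform downward induction on $t$ starting from $t=n$. The base case $t=n$ is immediate: $A^{(n)}_{j,j}=I_\alpha$, and $H^{(n)}_j$ contains this identity block among its rows, forcing $\rho(H^{(n)}_j)=\alpha=\rho(A^{(n)}_{j,j})$. For the inductive step, fix $t<n$. By construction, every column of $H^{(t)}_j$ lies in $\mathcal{S}(H^{(t+1)}_j)$, so I write $H^{(t)}_j=H^{(t+1)}_jC$ for a coefficient matrix $C$. Since $H^{(t+1)}_j$ has full column rank by the inductive hypothesis, $C$ is uniquely determined and its columns inherit the linear independence of the columns of $H^{(t)}_j$ (which are independent as a basis of an intersection). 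Restricting to the $j$-th thick row gives $A^{(t)}_{j,j}=A^{(t+1)}_{j,j}C$, and because $A^{(t+1)}_{j,j}$ has full column rank by induction, this product has rank equal to the number of columns of $C$, establishing $\rho(A^{(t)}_{j,j})=\rho(H^{(t)}_j)$.

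For part (b), I will apply the Grassmann identity $\dim(U\cap V)=\dim U+\dim V-\dim(U+V)$ with $U:=\mathcal{S}(H^{(t+1)}_j)$ and $V:=\mathcal{S}(H^{(t+1)}|_{\{n-t,\ldots,j-1\}})$. By the defining property of $H^{(t)}_j$, $\dim(U\cap V)=\rho(H^{(t)}_j)$, and part (a) replaces the column-space ranks on both sides by the ranks of the corresponding diagonal blocks. Rearranging yields \eqref{eq:proofgen_3}.

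Part (c) is the main obstacle, but it succumbs to a telescoping argument. Introduce the cumulative subspace dimensions
\begin{equation*}
D_s \; := \; \dim\!\Big(\textstyle\sum\nolimits_{\ell'=n-t}^{s}\mathcal{S}(A^{(t+1)}_{j,\ell'})\Big), \qquad n-t \leq s \leq j-1.
\end{equation*}
By \eqref{eq:proofgen_1} applied with $i=j$, for each $\ell\in\{n-t+1,\ldots,j-1\}$ I have the containment $\mathcal{S}(A^{(t)}_{j,\ell})\subseteq\mathcal{S}(A^{(t+1)}_{j,\ell})\cap\sum_{\ell'=n-t}^{\ell-1}\mathcal{S}(A^{(t+1)}_{j,\ell'})$, and the Grassmann identity on this intersection yields $\rho(A^{(t)}_{j,\ell})\leq\rho(A^{(t+1)}_{j,\ell})+D_{\ell-1}-D_\ell$. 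For the diagonal term, \eqref{eq:proofgen_1} gives $\mathcal{S}(A^{(t)}_{j,j})\subseteq\sum_{\ell'=n-t}^{j-1}\mathcal{S}(A^{(t+1)}_{j,\ell'})$, hence $\rho(A^{(t)}_{j,j})\leq D_{j-1}$. Summing these bounds, the telescoping identity $\sum_{\ell=n-t+1}^{j-1}(D_{\ell-1}-D_\ell)=D_{n-t}-D_{j-1}$ cancels the $D_{j-1}$ contributed by the diagonal bound, leaving $D_{n-t}+\sum_{\ell=n-t+1}^{j-1}\rho(A^{(t+1)}_{j,\ell})$. Since $D_{n-t}=\rho(A^{(t+1)}_{j,n-t})$, this matches the right-hand side of \eqref{eq:proofgen_4} exactly. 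The only subtlety is recognizing that the Grassmann identity produces precisely the differences $D_{\ell-1}-D_\ell$ that make the telescoping collapse; once these quantities are in hand, the argument is essentially a single line of accounting.
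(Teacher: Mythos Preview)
Your proof is correct and follows essentially the same route as the paper's (omitted) argument, which is the natural generalization of its proof of Lemma~\ref{lem:544_intersections}: part (a) is the general form of Remark~\ref{rem:full_column_rank_diag_sub_matrix}, part (b) is the Grassmann computation in \eqref{eq:544_proof_1emma_1_1}--\eqref{eq:544_proof_1emma_1_4}, and your telescoping via the cumulative dimensions $D_s$ in part (c) is a clean repackaging of the paper's iterative peeling in \eqref{eq:544_proof_1emma_2_2}--\eqref{eq:544_proof_1emma_2_5}. One small wording fix in part (a): the full column rank of $H^{(t+1)}_j$ comes directly from its columns being chosen as a basis (or from the $I_\alpha$ block when $t+1=n$), not from the inductive hypothesis; the inductive hypothesis is what then transfers full column rank to $A^{(t+1)}_{j,j}$.
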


\subsection{On The Proof of \eqref{eq:general_bound_to_prove} \label{sec:eq_general_bound_proof}}

The bounds in \eqref{eq:general_bound_to_prove} is obtained as a necessary condition for satisfying the chain of inequalities given by 
\begin{eqnarray} \label{eq:chain_r}
\rho(H^{(n)})  & \geq & \rho(H^{(n-1)}) \ \geq \ \cdots \ \geq \ \rho(H^{(n-r+1)}).
\end{eqnarray}
In the analysis of \eqref{eq:chain_r}, we consider in the first step, the inequality $\rho(H^{(n-r+2)}) \geq \rho(H^{(n-r+1)})$ and obtain a lower bound on $\rho(H^{(n-r+2)})$. In the second step, we move on to the inequality $\rho(H^{(n-r+3)}) \geq \rho(H^{(n-r+2)})$ and obtain a lower bound on $\rho(H^{(n-r+3)})$. In the second step, we would make use of a lower bound on $\rho(H^{(n-r+2)})$ that was derived in the first step. This procedure is continued iteratively until we arrive at lower bound for $\rho(H^{(n)})$. The following theorem is a key intermediate step in this process.

\begin{thm} \label{thm:nkk_rankviainduction}
For any $s$ such that $1 \leq s \leq n-3$, and any $t$ such that $3+s \leq t \leq n$, the rank of the matrix $H^{(t)}$ is lower bounded by
\begin{eqnarray}
\rho\left( H^{(t)} \right) & \geq & \frac{2}{(s+1)(s+2)}\left\{ (s+1)\sum_{j=n-t+1}^{n}\rho\left(A^{(t)}_{j,j}\right) -  \sum_{j=n-t+2}^{n} \sum_{\ell=n-t+1}^{j-1}\rho\left(A^{(t)}_{j,\ell}\right)\right\}. \label{eq:boundgen_rankH}
\end{eqnarray}
\end{thm}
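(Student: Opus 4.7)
The plan is to prove Theorem \ref{thm:nkk_rankviainduction} by induction on $s$, generalizing directly the argument by which \eqref{eq:544_bound1} and \eqref{eq:544_bound2} were derived in Section \ref{sec:proof_thm_433}. At each stage the ``one rung up the chain'' move in \eqref{eq:chain_r} is used: starting from a lower bound on $\rho(H^{(t-1)})$ produced at the previous stage, I would squeeze a matching lower bound on $\rho(H^{(t)})$ by combining \eqref{eq:proofgen_rankorderHs} with Lemma \ref{lem:nkk_intersections} to translate everything back to the block-ranks of $H^{(t)}$ itself.

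For the base case $s=1$ and any $t \in \{4,\ldots,n\}$, I would introduce non-negative slack integers $\alpha_j$, $n-t+2 \leq j \leq n$, so that the incremental-rank relation for $\rho(H^{(t)})$ becomes the equality
\begin{equation*}
\rho(H^{(t)}) \;=\; \rho(A^{(t)}_{n-t+1,n-t+1}) + \sum_{j=n-t+2}^{n}\left\{\left(\rho(A^{(t)}_{j,j}) - \sum_{\ell=n-t+1}^{j-1}\rho(A^{(t)}_{j,\ell})\right)^+ + \alpha_j\right\},
\end{equation*}
and, independently, write the trivial incremental-rank lower bound for $\rho(H^{(t-1)})$ in its own block structure. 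Applying \eqref{eq:proofgen_3} and \eqref{eq:proofgen_4} from Lemma \ref{lem:nkk_intersections} rewrites every block-rank of $H^{(t-1)}$ in terms of block-ranks of $H^{(t)}$ together with the slacks $\alpha_j$. Feeding these substitutions into $\rho(H^{(t)}) \geq \rho(H^{(t-1)})$ yields a linear inequality in $\sum_j \alpha_j$ and the $\rho(A^{(t)}_{\cdot,\cdot})$; solving it for $\sum_j \alpha_j$ and substituting back into the slack equality for $\rho(H^{(t)})$ produces precisely the $s=1$ instance of \eqref{eq:boundgen_rankH}, in literal imitation of the passage from \eqref{eq:544_H5_rank_eq} and \eqref{eq:boundH4_544} to \eqref{eq:544_bound1}.

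For the inductive step, assume the theorem for $s-1$ with $s \geq 2$ and fix $t$ with $3+s \leq t \leq n$. Since $t-1 \geq 3+(s-1)$, the hypothesis applies to $\rho(H^{(t-1)})$, producing a lower bound with prefactor $\frac{2}{s(s+1)}$ and diagonal weight $s$. Repeating the above manipulation verbatim, but with the trivial incremental-rank bound on $\rho(H^{(t-1)})$ replaced by this sharper hypothesis, again reduces $\rho(H^{(t)}) \geq \rho(H^{(t-1)})$ to a linear inequality in $\sum_j \alpha_j$ and the block-ranks of $H^{(t)}$. Solving for $\sum_j \alpha_j$ and substituting into the slack equality for $\rho(H^{(t)})$ is the direct analogue of the passage from \eqref{eq:544_H5_rank_eq} together with \eqref{eq:544_bound2a} to \eqref{eq:544_bound2}, and should deliver the target bound with prefactor $\frac{2}{(s+1)(s+2)}$ and diagonal weight $(s+1)$.

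The main obstacle, as already foreshadowed by the jump from the $\frac{1}{3}$ prefactor at $s=1$ to the $\frac{1}{6}$ prefactor at $s=2$ in the $(5,4,4)$ case, is the coefficient bookkeeping in the inductive step. After Lemma \ref{lem:nkk_intersections} is invoked, each $\rho(A^{(t)}_{j,j})$ and each $\rho(A^{(t)}_{j,\ell})$ picks up several contributions with coefficients mixing the $\frac{2}{s(s+1)}$ coming from the hypothesis, the $+1$ coming from the slack equality, and the $s$ coming from the diagonal weight inside the hypothesis. I will need to verify that when these are collected and $\sum_j \alpha_j$ is solved for, the coefficients telescope exactly into the clean factor $\frac{2}{(s+1)(s+2)}$ with diagonal weight $(s+1)$ prescribed by \eqref{eq:boundgen_rankH}, and that the $(\cdot)^+$ terms in the slack equality can be absorbed harmlessly (as was done by dropping them against a non-negative sum in the $(5,4,4)$ calculation). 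Once this single arithmetic identity is verified, the induction closes uniformly in $t$, and Theorem \ref{thm:rankH_new_bound_k_eq_d} is then obtained by specializing \eqref{eq:boundgen_rankH} at $t=n$ with $s=r-1$ and using $\rho(A^{(n)}_{j,j})=\alpha$, $\rho(A^{(n)}_{j,\ell}) \leq \beta$.
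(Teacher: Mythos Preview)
Your proposal is correct and follows essentially the same route as the paper: the paper's own proof is simply ``by induction on $s$'' with details deferred to \cite{PraKri_arxiv}, and the surrounding text in Section~\ref{sec:eq_general_bound_proof} describes exactly the strategy you outline---climb the chain \eqref{eq:chain_r} one rung at a time, at each step feeding the previously obtained bound on $\rho(H^{(t-1)})$ through Lemma~\ref{lem:nkk_intersections} and the slack-variable identity for $\rho(H^{(t)})$, in direct generalization of the $(5,4,4)$ computations \eqref{eq:544_bound1}, \eqref{eq:544_bound2a}, \eqref{eq:544_bound2}. Your identification of the coefficient bookkeeping (the collapse to the prefactor $\tfrac{2}{(s+1)(s+2)}$ with diagonal weight $s+1$) as the only remaining verification is accurate, and once that arithmetic is carried out the argument is complete.
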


\begin{proof} The proof is by induction on $s$, and see \cite{PraKri_arxiv} for details.
\end{proof}	

One can identify Thm.~\ref{thm:nkk_rankviainduction} in the context of $(n=5,k=4,d=4)$. The bounds would then be associated with $(s=1, t=5)$, $(s=1, t = 4)$ and $(s = 2, t = 5)$, and are precisely those given in \eqref{eq:544_bound1}, \eqref{eq:544_bound2a} and \eqref{eq:544_bound2} respectively. To complete the proof of  \eqref{eq:general_bound_to_prove}, we evaluate the bound in \eqref{eq:boundgen_rankH} for the $(n-3)$ pairs given by $(s, t=n), 1 \leq s \leq n-3$. By substituting the constraints $\rho\left(A^{(n)}_{j,j}\right) = \alpha, 1 \leq j \leq n$ and $\rho\left(A^{(n)}_{i,j}\right) \leq \beta, 1 \leq i, j \leq n, i \neq j$, we finally obtain that
\begin{eqnarray}
\rho\left( H^{(n)} \right)  \geq \frac{2}{(s+1)(s+2)}\left\{ (s+1)\sum_{j=1}^{n}\alpha - \sum_{j=2}^{n} (j-1)\beta\right\}  =  \frac{2(s+1)n\alpha - n(n-1)\beta}{(s+1)(s+2)}, \ 1 \leq s \leq n-3. \label{eq:proofgenz}
\end{eqnarray}
By choosing $r=s+1$, \eqref{eq:general_bound_to_prove} follows from \eqref{eq:proofgenz}. This completes the proof of \eqref{eq:general_bound_to_prove}, and consequently that of the Thm.~\ref{thm:rankH_new_bound_k_eq_d}.

\section{On The Achievability Of The Outer Bounds On Normalized ER Tradeoff \label{sec:achievability}}

The outer bounds presented in the present paper matches with the performance of existing code constructions in certain cases, and we present two such results here. 

\subsection{Characterization of Normalized ER tradeoff for the Case $k=3,d=n-1$\label{sec:ach_nonlin}}

In the case of $k=3$ and $d=n-1$, the repair-matrix bound is achieved by a construction that appeared in \cite{SenSasKum_itw}. We will given an example of  the repair-matrix bound below:

{\em Example:} $(n=6,k=3,d=5):$ Using \eqref{eq:eps}, the bound on the ER file size $B$ can computed as  
\bea
\label{eq:635} B & \leq & \frac{10\alpha}{7} + \frac{34\beta}{7}, \  \ 5\beta \geq \alpha > \frac{13\beta}{4}.
\eea
Based on the bound in \eqref{eq:635}, an outer bound on the normalized ER tradeoff is drawn in Fig.~\ref{fig:plot1a}. It is required to have a single code construction ${\cal C}_{\text{int}}$ for the normalized operating point $(\bar{\alpha}_0,\bar{\beta}_0)= (\frac{13}{38},\frac{2}{19})$ to achieve the entire normalized ER tradeoff, as the remaining points can be achieved by space-sharing of the MSR code ${\cal C}_{\text{MSR}}$, the MBR code ${\cal C}_{\text{MBR}}$ and the code ${\cal C}_{\text{int}}$. The construction of ${\cal C}_{\text{int}}$ was provided in \cite{SenSasKum_itw}, and thus establishing that the repair-matrix bound is tight in this case.

\subsection{Characterization of Normalized ER Tradeoff for the Case $(n,k=n-1,d=n-1)$ under the Linear Setting\label{sec:ach_lin}}

In the case of linear codes, the bound presented in Theorem \ref{thm:new_bound_k_eq_d} is achieved by canonical layered codes that was introduced in \cite{TiaSasAggVaiKum}. When specialized to the case of $d = n-1$, the layered codes achieve points described by
\begin{equation} \label{eq:ach_Biren}
\left(\bar{\alpha} , \bar{\beta} \right)  =  \left(\frac{r}{n(r-1)}, \frac{r}{n(n-1)}\right), \ \ 2 \leq r \leq n-1
\end{equation}
on the $(\bar{\alpha} , \bar{\beta} )$-plane. If one substitutes $r = 2$ in \eqref{eq:ach_Biren}, it corresponds to the MBR point, and the achievable points move closer to the MSR point as $r$ increases. It is also proved that the point corresponding to $r = n-1$ lies on the FR tradeoff in the near-MSR region. An achievable region on the $(\bar{\alpha} , \bar{\beta} )$-plane is obtained by space-sharing codes for values of $r$, $2 \leq r \leq n-1$ along with an MSR-code. We can write the equation of the line segment obtained by connecting two points $\left(\frac{r}{n(r-1)}, \frac{r}{n(n-1)}\right)$ and  $\left(\frac{(r+1)}{n((r+1)-1)}, \frac{(r+1)}{n(n-1)}\right)$, $ 2 \leq r \leq n-2$ as
\begin{eqnarray}
r(r-1)n\bar{\alpha} + n(n-1)\bar{\beta} & = & r^2+r,
\end{eqnarray}
and that of the line segment connecting the MSR point and the point corresponding to $r=n-1$ as
\bean
(n-2)\bar{\alpha} + \bar{\beta} & = &  1.
\eean 
This matches with the equations of line segments as given in Theorem \ref{thm:new_bound_k_eq_d}. The normalized linear tradeoff for $(n=6,k=d=5)$ is given in Fig.~\ref{fig:plot1b}.

\bibliographystyle{IEEEtran}
\bibliography{icot}

\appendices

\section{Proof Of Theorem~\ref{thm:bound1}}

Two different estimates on the joint entropy of certain repair data, expressed as functions of $\{\omega_i\}_{i=0}^{k-1}$, are used to derive a lower bound on $\epsilon = \hat{B} - B$. The repair data considered differ based on the value of $\mu$.

\vspace{0.1in}

{\em Case 1:} $\mu \in \{1,2,\ldots, k-2\}$

We set $r_0=\left\lfloor \frac{k-\mu}{\mu+1} \right\rfloor$. We will have two sub-cases for $r_0 \geq 1$ and $r_0=0$. 

\vspace{0.1in}

{\em Case 1(a):} $r_0 \geq 1$


\begin{figure}[h!]
\begin{center}
  \subfigure[The sub-trapezoid region considered in Case 1(a)]{\label{fig:case1a}\includegraphics[width=2.7in]{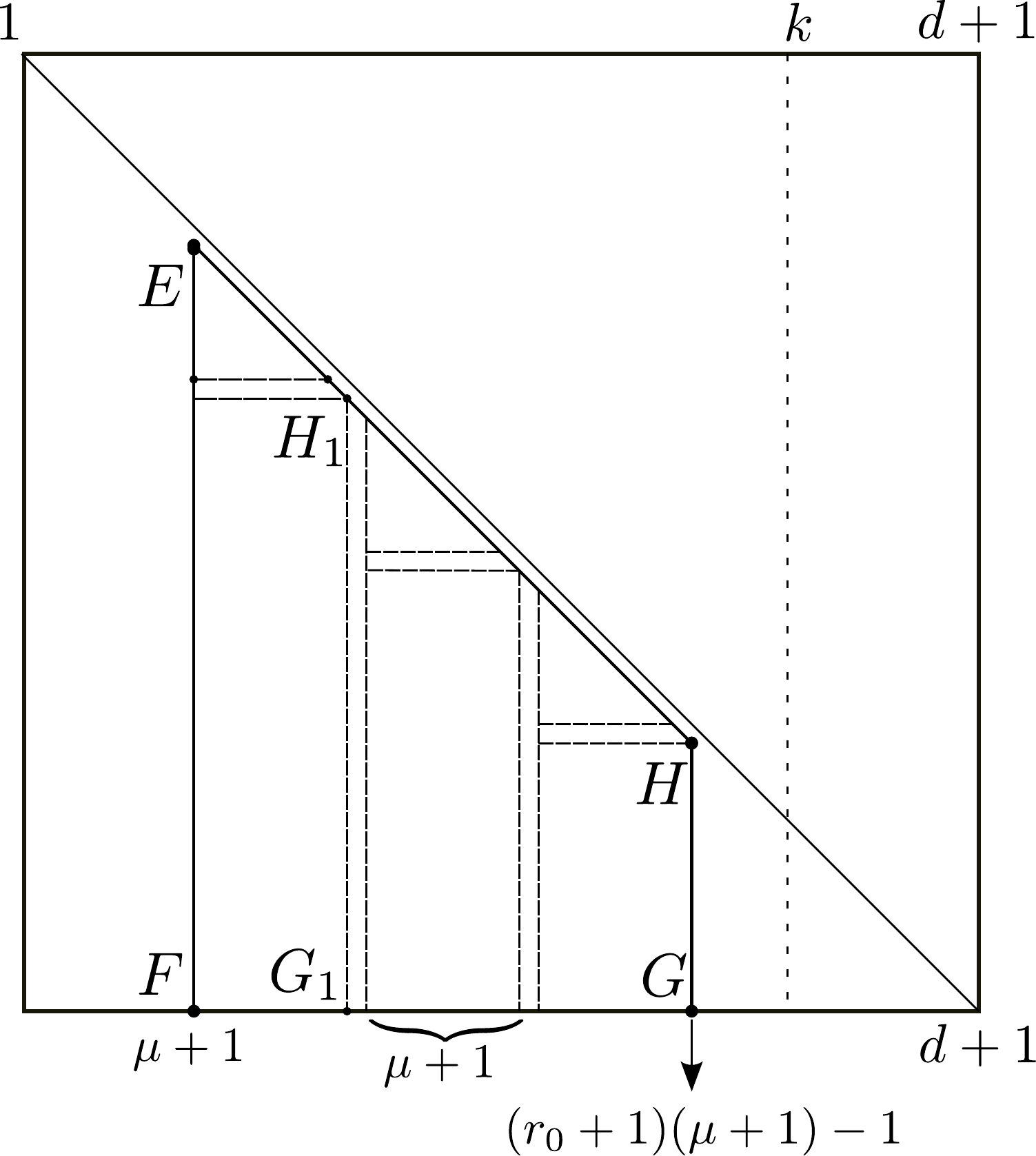}}
  \hspace{0.2in}
  \subfigure[The trapezoid region considered in Case 1(b)]{\raisebox{4mm}{\label{fig:case1b}\includegraphics[width=2.7in]{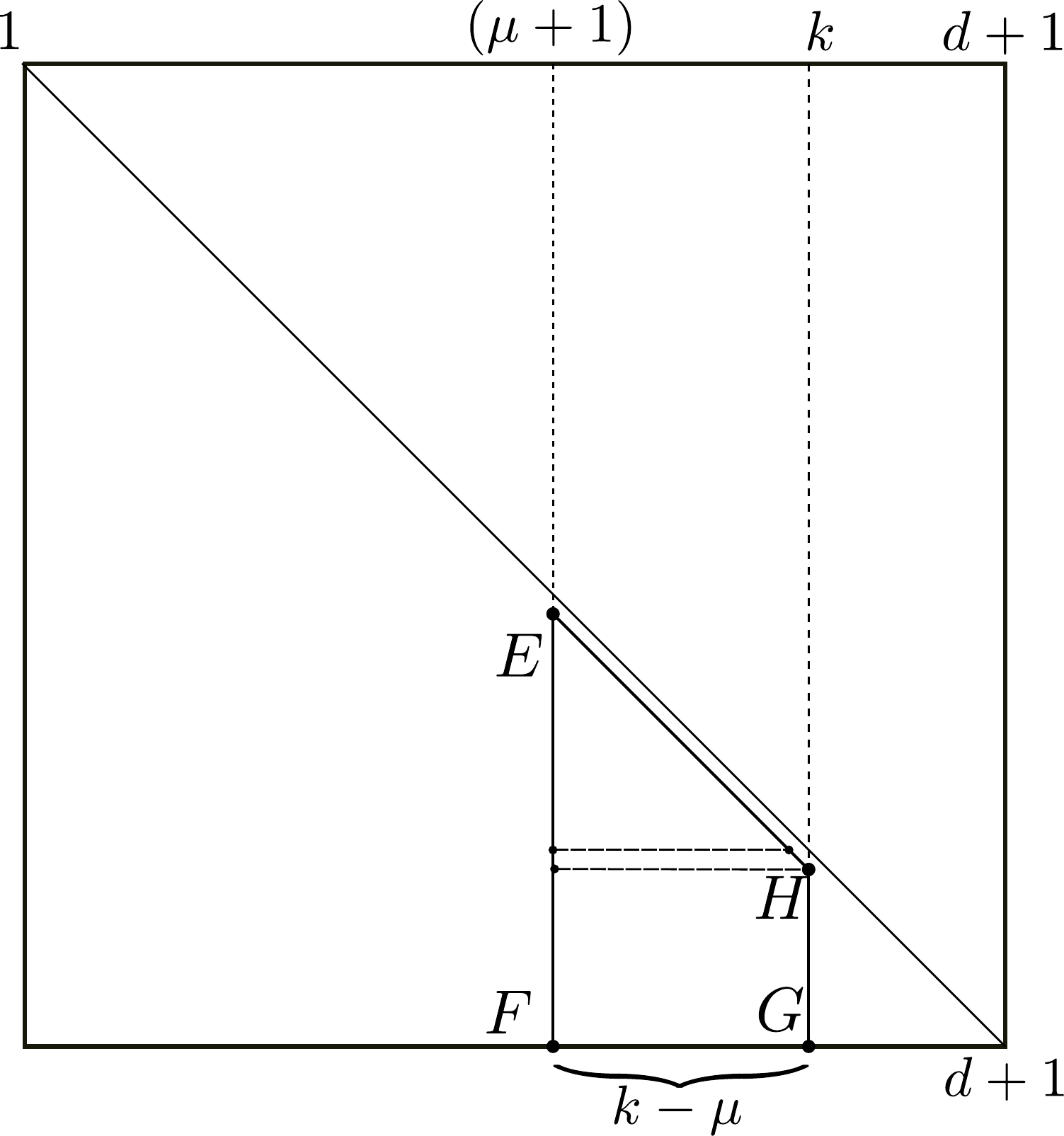}}}
\caption{The illustration of the trapezoid regions considered for Case 1. }
\end{center}
\label{fig:case1}
\end{figure}

We consider the sub-trapezoid $Z_{q,t}$ with parameter $q = \mu$ and $t = r_0(\mu+1)$. Pictorially, it is marked as a trapezium $EFGH$ in the repair matrix shown in Fig.~\ref{fig:case1a}. The set of nodes $T = \{\mu+1, \mu+2, \ldots , (r_0+1)(\mu+1)-1\}$ that are repaired by $Z_{q,t}$ is split into $r_0$ groups of $(\mu+1)$ nodes in order, and the corresponding subsets of $Z_{q,t}$ are denoted by ${\cal E}_i, i = 1, 2, \ldots, r_0$. Pictorially, ${\cal E}_1$ is associated with the trapezium $EFG_1H_1$ in Fig.~\ref{fig:case1a}. Similarly every ${\cal E}_i$ is associated with a smaller trapezium contained within $EFGH$. The set ${\cal E}_i$ can again be viewed as the union of two subsets ${\cal V}_i$ and ${\cal T}_i$, respectively associated with the largest rectangle within the trapezium, and the remaining triangular region. These sets are formally defined as
\bean
{\cal E}_i & = & \{ S_x^y \mid S_x^y \in Z_{q,t}, (\mu+1)i \leq y \leq (\mu+1)(i+1)-1 \}, \ \ i = 1,2,\ldots , r_0 \\
{\cal V}_i & = & \{ S_x^y \mid S_x^y \in {\cal E}_i, (\mu+1)(i+1) \leq x \leq d+1 \}, \ \ i = 1,2,\ldots , r_0 \\
{\cal T}_i & = & \{ S_x^y \mid S_x^y \in {\cal E}_i, (\mu+1)i+1 \leq x \leq (\mu+1)(i+1)-1 \}, \ \ i = 1,2,\ldots , r_0 .
\eean
Note that ${\cal E}_i = {\cal V}_i \cup {\cal T}_i$. Next, we bound the joint entropy $H(Z_{q,t})$ as
\bea
\nonumber H(Z_{q,t}) &  \leq & \sum_{i=1}^{r_0} H({\cal V}_i) + \sum_{i=1}^{r_0} H({\cal T}_i) \\
& \leq & \sum_{i=1}^{r_0} \left(d-(i+1)(\mu+1)+2\right) \cdot  [\beta + \mu\theta + \mu\omega_{\mu} + ((\omega_{\mu}+\omega_{\mu+1}) ] + \sum_{i=1}^{r_0} \frac{(\mu+1)\mu\beta}{2} \label{eq:upper1a}.
\eea
In the second inequality, we use \eqref{eq:rb1} of Cor.~\ref{cor:rowbound} to obtain the upper bound on $H({\cal V}_i)$. On the other hand, using Lem.~\ref{lem:colsum}, we also have,
\bea
\nonumber H(Z_{q,t}) \ \geq \ H(Z_{q,t}\mid W_Q) &  \geq &  \sum_{i=\mu}^{(r_0+1)(\mu+1)-2} \min \{\alpha , (d-i)\beta \} - \sum_{i=\mu}^{(r_0+1)(\mu+1)-2} \omega_i \\
& = & \left[ \sum_{i=\mu}^{(r_0+1)(\mu+1)-2} (d-i)\beta \right] - \theta - \sum_{i=\mu}^{(r_0+1)(\mu+1)-2} \omega_i \label{eq:lower1a}.
\eea
Matching the bounds in \eqref{eq:upper1a} and \eqref{eq:lower1a} and using the identity \eqref{eq:basic1}, we obtain that
\bea
\epsilon & \geq & \frac{\left(d-\frac{(\mu+1)(r_0+3)}{2}+2 \right)r_0\mu(\beta - \theta) \ - \ \theta}{\left(d-\frac{(\mu+1)(r_0+3)}{2}+2\right)r_0(\mu+1) \ + \ 1} \label{eq:eps1a}.
\eea

\vspace{0.1in}

{\em Case 1(b):} $r_0=0$


The collection $Z_{q}$ of repair data considered in this case corresponds to the trapezoid configuration $Z_q$ with $q=\mu$. The set $Z_{q}$ is written as $Z_{q} = {\cal V} \cup {\cal T}$, where
\bean
{\cal V} & = & \{ S_x^y \mid S_x^y \in Z_{q}, k+1 \leq x \leq d+1 \}, \\
{\cal T} & = & \{ S_x^y \mid S_x^y \in Z_{q}, \mu+2 \leq x \leq k \} 
\eean
Pictorially, $Z_{q}$ is represented by the trapezium $EFGH$ in Fig.~\ref{fig:case1b}. Quite similar to the {\em Case 1(a)}, we invoke Cor.~\ref{cor:rowbound} to bound $H(Z_{q})$ as 
\bea
\nonumber H(Z_{q}) \ \geq \ H(Z_{q}\mid W_Q) &  \leq & H({\cal V}) + H({\cal T}) \\
\nonumber & \leq & (d-k+1) \cdot [\beta + (k-\mu-1)\theta + (k-\mu-1)\omega_{\mu} + (\omega_{\mu}+\omega_{\mu +1}) ] + \\
&& \ \ \ \ \ \ \ \ \ \  \ \frac{(k-\mu-1)(k-\mu)\beta}{2} . \label{eq:upper1b}
\eea
On the other hand, using Lem.~\ref{lem:colsum},
\bea
\nonumber H(Z_{q}) &  \geq &  \sum_{i=\mu}^{k-1} \min \{\alpha , (d-i)\beta \} - \sum_{i=\mu}^{k-1} \omega_i \\
& = & \left[ \sum_{i=\mu}^{k-1} (d-i)\beta \right] - \theta - \sum_{i=\mu}^{k-1} \omega_i \label{eq:lower1b} .
\eea
Matching the bounds in \eqref{eq:upper1b} and \eqref{eq:lower1b} and using the identity \eqref{eq:basic1}, we obtain that
\bea
\epsilon & \geq & \frac{(d-k+1)(k-\mu-1)(\beta - \theta) \ - \ \theta}{(d-k+1)(k-\mu) \ + \ 1} \label{eq:eps1b}
\eea

\vspace{0.1in}

{\em Case 2:} $\mu \in \{0,1,\ldots, k-3\}$

We set $r_1=\left\lfloor \frac{k-\mu-1}{\mu+2} \right\rfloor$. We will have two sub-cases for $r_1 \geq 1$ and $r_1=0$. In contrast with {\em Case 1}, we consider a different trapezoid configuration $(Q, Z_q)$ with $q = (\mu+1)$ in {\em Case 2}. It turns out that this change will help in getting a tighter bound in certain regions of $(\mu, \theta)$.

\vspace{0.1in}

{\em Case 2(a):} $r_1 \geq 1$

\begin{figure}[h!]
\begin{center}
  \subfigure[The sub-trapezoid region considered in Case 2(a)]{\label{fig:case2a}\includegraphics[width=2.7in]{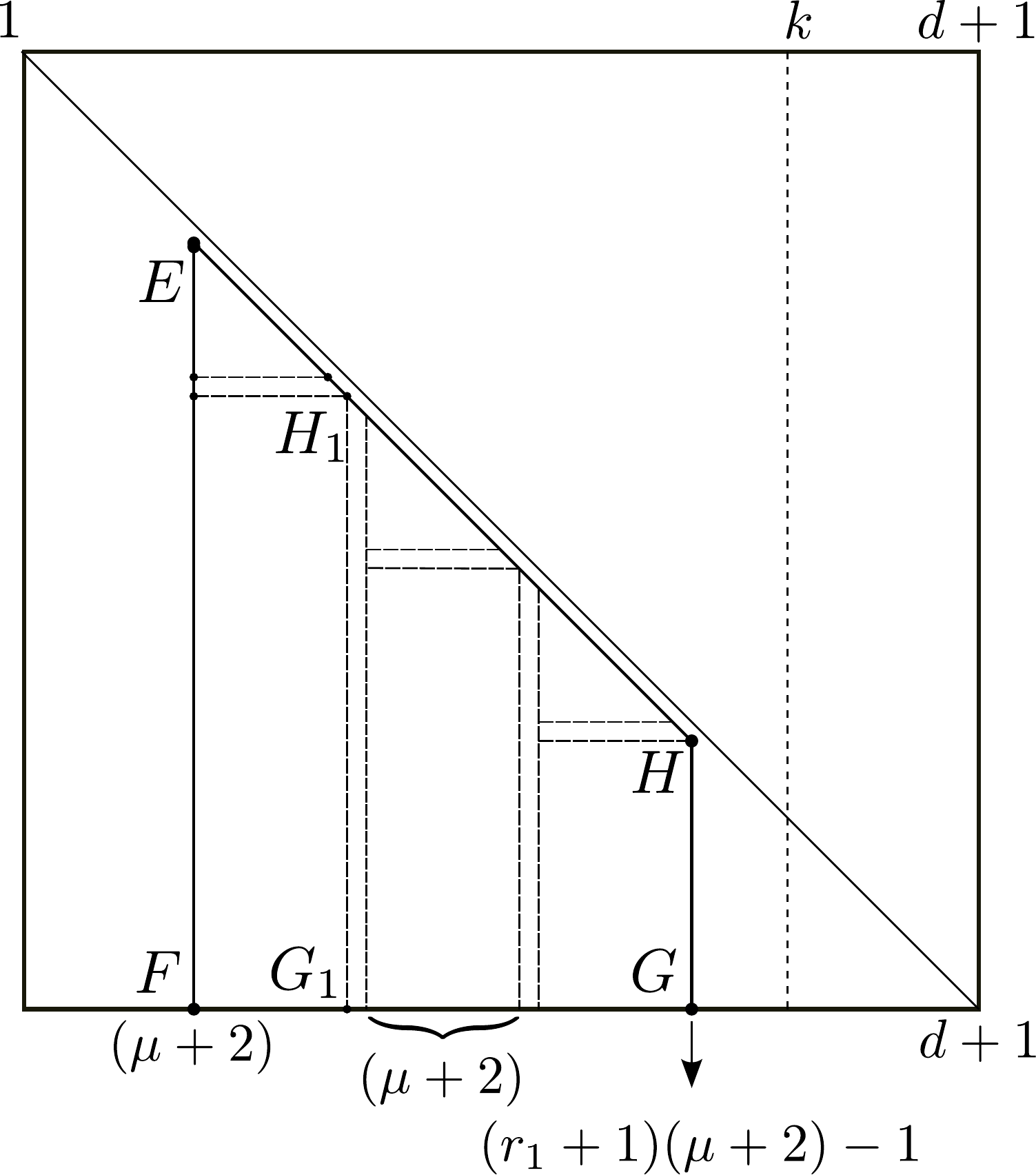}}
  \hspace{0.2in}
  \subfigure[The trapezoid region considered in Case 2(b)]{\raisebox{5mm}{\label{fig:case2b}\includegraphics[width=2.7in]{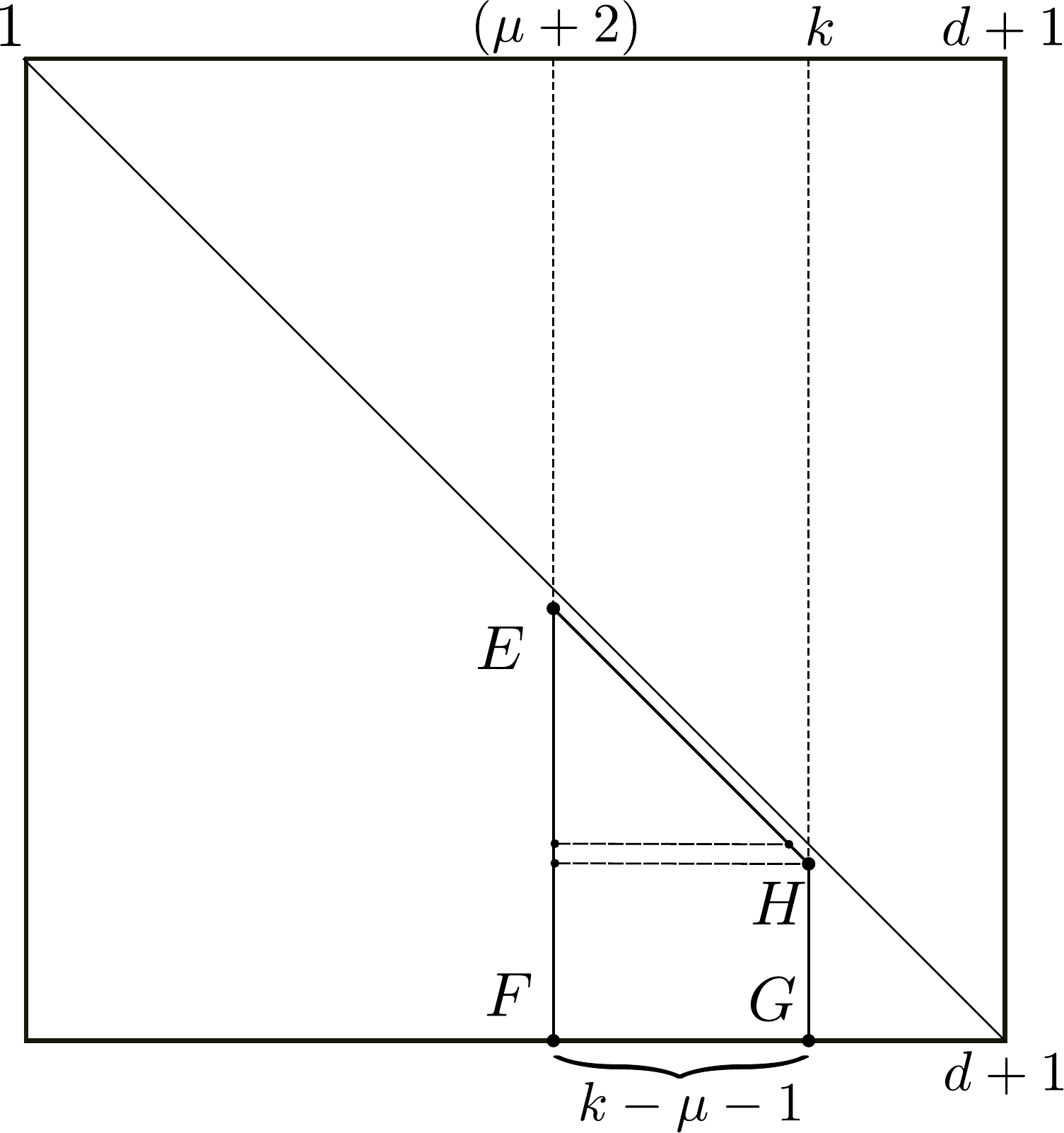}}}
\caption{The illustration of the trapezoid regions considered for Case 2. }
\end{center}
\label{fig:case2}
\end{figure}

In this case, we consider the set $Z_{q,t}$ with parameter $q = \mu+1$, $t = r_1(\mu+2)$. The set of nodes $T = \{\mu+2, \mu+2, \ldots , (r_1+1)(\mu+2)-1\}$ that are repaired by $Z_{q,t}$ is split into $r_1$ groups of $(\mu+2)$ nodes in order, and the corresponding subsets of $Z_{q,t}$ are denoted by ${\cal E}_i, i = 1, 2, \ldots, r_1$. A pictorial illustration is given in Fig.~\ref{fig:case2a}. Every ${\cal E}_i$ is further viewed as the union of two subsets ${\cal V}_i$ and ${\cal T}_i$, respectively associated with the largest rectangle within the trapezium, and the remaining triangular region. The sets of interest are formally defined as
\bean
{\cal E}_i & = & \{ S_x^y \mid S_x^y \in Z_{q,t}, (\mu+2)i \leq y \leq (\mu+2)(i+1)-1 \}, \ \ i = 1,2,\ldots , r_1 \\
{\cal V}_i & = & \{ S_x^y \mid S_x^y \in {\cal E}_i, (\mu+2)(i+1) \leq x \leq d+1 \}, \ \ i = 1,2,\ldots , r_1 \\
{\cal T}_i & = & \{ S_x^y \mid S_x^y \in {\cal E}_i, (\mu+2)i+1 \leq x \leq (\mu+2)(i+1)-1 \}, \ \ i = 1,2,\ldots , r_1 ,
\eean
where ${\cal E}_i = {\cal V}_i \cup {\cal T}_i$. Similar to {\em Case 1(a)}, we bound the joint entropy $H(Z_{q,t})$ as
\bea
H(Z_{q,t}) &  \leq & \sum_{i=1}^{r_1} H({\cal V}_i) + \sum_{i=1}^{r_1} H({\cal T}_i) \\
\nonumber & \leq & \sum_{i=1}^{r_1} \left(d-(i+1)(\mu+2)+2\right) \cdot  [2\beta - \theta + (\mu+1)\omega_{\mu+1} + (\omega_{\mu+1}+\omega_{\mu+2}) ] + \\
 & & \ \ \ \ \ \ \ \ \ \  \ \sum_{i=1}^{r_1} \frac{(\mu+2)(\mu+1)\beta}{2} \label{eq:upper2a} .
\eea
In the last inequality, we have used \eqref{eq:rb2} of Cor.~\ref{cor:rowbound}. On the other hand, using Lem.~\ref{lem:colsum}, we also have,
\bea
H(Z_{q,t}) \ \geq \ H(Z_{q,t}\mid W_Q) &  \geq &  \sum_{i=\mu+1}^{(r_1+1)(\mu+2)-2} \min \{\alpha , (d-i)\beta \} - \sum_{i=\mu+1}^{(r_1+1)(\mu+2)-2} \omega_i \\
& = & \left[ \sum_{i=\mu+1}^{(r_1+1)(\mu+2)-2} (d-i)\beta \right] - \sum_{i=\mu+1}^{(r_1+1)(\mu+2)-2} \omega_i \label{eq:lower2a}
\eea
Matching the bounds in \eqref{eq:upper2a} and \eqref{eq:lower2a} and using the identity \eqref{eq:basic1}, we obtain that
\bea
\epsilon & \geq & \frac{\left(d-\frac{(\mu+2)(r_1+3)}{2}+2 \right)r_1 \left[\mu\beta \ + \ \theta\right] }{\left(d-\frac{(\mu+2)(r_1+3)}{2}+2\right)r_1(\mu+2) \ + \ 1} \label{eq:eps2a}
\eea

{\em Case 2(b):} $r_1 =0$

The set $Z_{q}$ with $q=\mu+1$ is considered in this case. We can write $Z_{q} = {\cal V} \cup {\cal T}$, where
\bean
{\cal V} & = & \{ S_x^y \mid S_x^y \in Z_{q}, k+1 \leq x \leq d+1 \}, \\
{\cal T} & = & \{ S_x^y \mid S_x^y \in Z_{q}, \mu+3 \leq x \leq k \} .
\eean
A pictorial illustration is given in Fig.~\ref{fig:case2b}. Following the same line of arguments as in {\em Case 1(a)}, we obtain that 
\bea
H(Z_{q}) &  \leq & (d-k+1) \cdot [2\beta - \theta + (k-\mu-1)\epsilon ] + \frac{(k-\mu-2)(k-\mu-1)\beta}{2}  \label{eq:upper2b},\\
H(Z_{q}) &  \geq &  \left[ \sum_{i=\mu+1}^{k-1} (d-i)\beta \right] - \sum_{i=\mu+1}^{k-1} \omega_i \label{eq:lower2b}.
\eea
Matching the above two bounds and using the identity \eqref{eq:basic1}, we obtain the lower bound for $\epsilon$:
\bea
\epsilon & \geq & \frac{(d-k+1) \left[ (k-\mu-3)\beta + \theta \right]}{(d-k+1)(k-\mu-1) \ + \ 1} \label{eq:eps2b}
\eea

\section{Proof of Lem.~\ref{lem:544_intersections}\label{app:pf_lem_544}}

By definition of $\delta_j$ in \eqref{eq:proof_diamkis_0}, we have that 
\begin{eqnarray}
	\delta_j & = & \rho\left(H^{(5)}|_{[j]}\right) - \rho\left(H^{(5)}|_{[j-1]}\right) \\
	& = & \text{dim}\left(\mathcal{S}\left(H^{(5)}|_{[j-1]} \right) + \mathcal{S}\left(H^{(5)}_j \right)\right) -  \text{dim}\left(\mathcal{S}\left(H^{(5)}|_{[j-1]}\right)\right) \label{eq:544_proof_1emma_1_1} \\ 
	& = & \text{dim}\left(\mathcal{S}\left(H^{(5)}_j \right)\right) - \text{dim}\left(\mathcal{S}\left(H^{(5)}|_{[j-1]} \right) \cap \mathcal{S}\left(H^{(5)}_j \right)\right) \label{eq:544_proof_1emma_1_2} \\
	& = & \rho\left(H^{(5)}_j \right) - \rho\left(H^{(4)}_j \right) \label{eq:544_proof_1emma_1_3} \\
	& = & \rho\left(A^{(5)}_{j,j} \right) - \rho\left(A^{(4)}_{j,j} \right), \label{eq:544_proof_1emma_1_4}
	\end{eqnarray} 
where in \eqref{eq:544_proof_1emma_1_2} we used the identity $\text{dim}(W_1 + W_2)  = \text{dim}(W_1) + \text{dim}(W_2) - \text{dim}(W_1 \cap W_2)$ for any two subspaces $W_1, W_2$. While \eqref{eq:544_proof_1emma_1_3} follows from the definition of $H^{(4)}_j$, \eqref{eq:544_proof_1emma_1_4} from Remark~\ref{rem:full_column_rank_diag_sub_matrix}. The first assertion \eqref{eq:544_key_lemma1} of the lemma now follows from \eqref{eq:544_proof_1emma_1_4} and \eqref{eq:delta_equality_544}.
	
By definition of $H^{(4)}_j$, we have that $\mathcal{S}\left(A^{(4)}_{j,j}\right) \subseteq  \sum_{\ell = 1}^{j-1}\mathcal{S}\left(A^{(5)}_{j,\ell}\right)$, and it follows that 
\begin{eqnarray}
\rho \left(A^{(4)}_{j,j}\right) & \leq &  \text{dim} \left( \sum_{\ell = 1}^{j-1}\mathcal{S}\left(A^{(5)}_{j,\ell}\right) \right). \label{eq:544_proof_1emma_2_1}
\end{eqnarray}
The RHS of \eqref{eq:544_proof_1emma_2_1} is further upper bounded as follows:
\begin{eqnarray}
\text{dim} \left( \sum_{\ell = 1}^{j-1}\mathcal{S}\left(A^{(5)}_{j,\ell}\right) \right) & = & \text{dim} \left( \sum_{\ell = 1}^{j-2}\mathcal{S}\left(A^{(5)}_{j,\ell}\right)  + \mathcal{S}\left(A^{(5)}_{j,j-1}\right)\right) \label{eq:544_proof_1emma_2_2} \\
	\nonumber & = & \text{dim} \left( \sum_{\ell = 1}^{j-2}\mathcal{S}\left(A^{(5)}_{j,\ell}\right) \right) + \text{dim} \left(  \mathcal{S}\left(A^{(5)}_{j,j-1}\right)\right) - \\
		& & \ \ \ \ \ \ \ \ \   \text{dim} \left( \sum_{\ell = 1}^{j-2}\mathcal{S}\left(A^{(5)}_{j,\ell}\right)  \cap \mathcal{S}\left(A^{(5)}_{j,j-1}\right)\right) \label{eq:544_proof_1emma_2_3} \\
	& \leq & \text{dim} \left( \sum_{\ell = 1}^{j-2}\mathcal{S}\left(A^{(5)}_{j,\ell}\right) \right) + \text{dim} \left(  \mathcal{S}\left(A^{(5)}_{j,j-1}\right)\right) - \text{dim} \left( \mathcal{S}\left(A^{(4)}_{j,j-1}\right)\right) \label{eq:544_proof_1emma_2_4} \\
	& = & \text{dim} \left( \sum_{\ell = 1}^{j-2}\mathcal{S}\left(A^{(5)}_{j,\ell}\right) \right) + \rho\left(A^{(5)}_{j,j-1}\right) - \rho\left(A^{(4)}_{j,j-1}\right), \label{eq:544_proof_1emma_2_5}
\end{eqnarray}
where \eqref{eq:544_proof_1emma_2_4} follows from the definition of $H^{(4)}_j$ and $A^{(4)}_{j,j-1}$. If $j=3$, \eqref{eq:544_proof_1emma_2_5} completes the proof of the second assertion. Else, for the case $j \geq 4$, the term $\text{dim} \left( \sum_{\ell = 1}^{j-2}\mathcal{S}\left(A^{(4)}_{j,\ell}\right) \right)$ can further be upper bounded by following a similar sequence of steps as in \eqref{eq:544_proof_1emma_2_2} -	\eqref{eq:544_proof_1emma_2_5}. This completes the proof.

\end{document}